\newtheorem{theorem}{Theorem}
\newtheorem{lemma}[theorem]{Lemma}
\newtheorem{proposition}[theorem]{Proposition}
\theoremstyle{definition}
\newtheorem*{example*gplm}{Example 1}
\newtheorem*{example*gpliv}{Example 2}
\newtheorem*{example*huber}{Example 3}
\newtheorem*{example*survival}{Example 4}
\newtheorem{assumption}{Assumption}
\newtheorem{remark}[theorem]{Remark}
\newcommand{\R}{\mathbb{R}}
\newcommand{\PP}{\mathbb{P}}
\newcommand{\E}{\mathbb{E}}
\newcommand{\sgn}{\mathrm{sgn}}
\newcommand{\iid}{\stackrel{\mathrm{i.i.d.}}{\sim}}
\newcommand{\supp}{\textrm{supp}\,}
\newcommand{\ind}{\mathbbm{1}}
\newcommand{\pr}{\mathbb{P}}
\newcommand{\floor}[1]{\left\lfloor #1 \right\rfloor}
\newcommand{\Cov}{\mathrm{Cov}}
\newcommand{\Corr}{\mathrm{Corr}}
\newcommand{\Var}{\mathrm{Var}}
\newcommand{\tr}{\mathrm{tr}}
\newcommand{\argmin}{\mathrm{argmin}}
\newcommand{\diag}{\textrm{diag}}
\newcommand{\cX}{\mathcal{X}}
\newcommand{\cY}{\mathcal{Y}}
\newcommand{\eval}{\text{eval}}
\newcommand{\Isplit}{\mathcal{I}_{\text{split}}}
\newcommand{\Ieval}{\mathcal{I}_{\text{eval}}}
\newcommand{\cP}{\mathcal{P}}
\newcommand{\cI}{\mathcal{I}}
\newcommand{\supP}{\sup_{P\in\cP}}
\newcommand{\given}{\,|\,}
\newcommand{\biggiven}{\,\big|\,}
\newcommand{\Biggiven}{\,\Big|\,}
\newcommand{\bigggiven}{\,\bigg|\,}
\newcommand{\diam}{\text{diam}\,}
\theoremstyle{plain}
\newcommand{\vertiii}[1]{{\left\vert\kern-0.25ex\left\vert\kern-0.25ex\left\vert #1 
    \right\vert\kern-0.25ex\right\vert\kern-0.25ex\right\vert}}
\newcommand\independent{\protect\mathpalette{\protect\independenT}{\perp}}
    \def\independenT#1#2{\mathrel{\rlap{$#1#2$}\mkern2mu{#1#2}}}
\newcommand{\RN}[1]{
  \textup{\uppercase\expandafter{\romannumeral#1}}
}
\newcommand{\mylabel}[2]{#2\def\@currentlabel{#2}\label{#1}}
\title{Clustered random forests with correlated data for optimal estimation and inference under potential covariate shift}
\date{\today}
\author[1]{Elliot H.\ Young}
\affil{University of Cambridge}
\author[2]{Peter B\"{u}hlmann}
\affil{ETH Z\"{u}rich}
\begin{document}

\maketitle

\begin{abstract}
We develop \emph{Clustered Random Forests}, a random forests algorithm for clustered data, arising from independent groups that exhibit within-cluster dependence. The leaf-wise predictions for each decision tree making up clustered random forests takes the form of a weighted least squares estimator, which leverage correlations between observations for improved prediction accuracy and tighter confidence intervals when performing inference. 
We show that approximately linear time algorithms exist for fitting classes of clustered random forests, matching the computational complexity of standard random forests. 
Further, we observe that the optimality of a clustered random forest, with regards to how optimal weights are chosen within this framework i.e.~those that minimise mean squared prediction error, vary under covariate distribution shift. In light of this, we advocate weight estimation to be determined by a user-chosen covariate distribution, or test dataset of covariates, with respect to which optimal prediction or inference is desired. 
This highlights a key distinction between correlated and independent data with regards to optimality of nonparametric conditional mean estimation under covariate shift. 
We demonstrate our theoretical findings numerically in a number of simulated and real-world settings.
\end{abstract}

\section{Introduction}\label{sec:intro}
Clustered structures are omnipresent in real-world datasets, arising from repeated measurements, longitudinal and geographical studies, or through data augmentation. 
A key challenge in such settings is accounting for potential correlations between intra-cluster components of the response. In (partially) parametric models this is often achieved through an (iteratively re-) weighted least squares estimator for the parameter of interest. Although such approaches perform well when the assumed parametric mean model is correctly specified, model misspecification can introduce substantial bias in the parametric estimates, leading to inconsistent estimation. 

The random forest~\citep{randomforest} is one example of a nonparametric regression method that circumvents the need to impose such a (partially) parametric structure on the data, instead assuming smoothness assumptions on the conditional mean. 
Random forests are constructed by bagging decision trees, each of which can be cast as a two-step procedure: first obtaining a data-driven partition of the covariate support, onto which predictions take the form of an ordinary least squares estimator corresponding to the model of piecewise constant mean with respect to the (high-dimensional) data-driven partition. \citet{meinshausen, biau, wager} among others show that pointwise consistency and asymptotic normality of this estimator follows under Lipschitz smoothness of the mean function, and when forests are constructed using independent, identically distributed data. 
Extending the random forest methodology and theory to optimally account for correlations between observations is however non-trivial. Methodologically one natural  approach, that we adopt, adapts the decision tree algorithm described above so that instead of performing an ordinary least squares regression with respect to a piecewise constant mean `working model' given the data-driven leaves constructed, a weighted least squares regression is performed, with weights that account for correlations between observations. 
A number of existing methods have been introduced that aim to model correlations between data in decision trees and random forests~\citep{REEMTrees, MERF, clustered-rf-review}. 
Such strategies posit a specific, rigid parametric structure on the conditional covariance, such as those implied by a random effects model, and use a likelihood-based loss function to estimate the covariance model. 
As we will see, these approaches can be computationally intractable, and need not even guarantee variance reduction over standard random forests.

We propose a different approach, still estimating predictions via a weighted least squares based decision tree mechanism, but estimating the optimal parameters for estimation in a covariance model free fashion. 
For such a procedure to be computationally competitive with random forests it may be desirable to restrict the class of weights to a restricted, potentially parametric class; we will see that we can achieve approximately linear time clustered random forest fitting for some of the parametric classes we study. The natural question that arises is therefore how to select data-driven optimal weights for such a clustered random forest. 
Here we refer to optimality with respect to minimal mean squared prediction error under a covariate shifted distribution. 
Such covariate distribution shift problems apply naturally to optimal pointwise prediction and inference, in the settings of for example personalized decision making and precision medicine, as well as in supervised learning problems, for example in optimal treatment strategies between geographical areas with different population characteristics, or in natural language processing problems where models may be deployed in settings different to that of the training dataset. 
We observe that with correlated data the optimal clustered random forest weights are a function of the covariate shifted distribution with respect to which the performance benchmark is defined, and that not accounting for these weights in a covariate shift adaptive fashion, as would be the case for e.g.~likelihood based or cross-validation based methods, can cause arbitrarily poor performance under covariate shifted testing. 

\begin{figure}[ht] 
    \centering
    \includegraphics[width=0.85\linewidth]{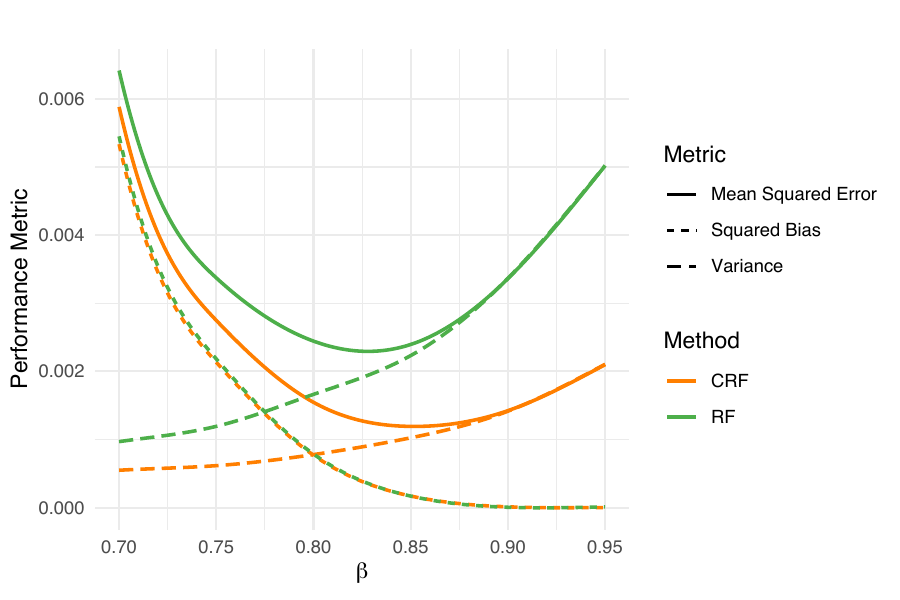}
    \caption{
    Bias--variance tradeoff for random forests, subsampling $s_I=I^\beta$ clusters per tree. 
    Clustered random forests provide a reduction in variance, and by consequence reduction in mean squared error. See Appendix~\ref{appsec:introsim} for further details.}
    \label{fig:sim1}
\end{figure}

\subsection{Our contributions}

The main contributions of our clustered random forest proposal are as follows:

\smallskip
\noindent{\bf Clustered random forest methodology and theory:} We introduce a practical extension of the popular random forest algorithm to clustered data, that in particular provides more accurate predictions by accounting for correlations between observations. We develop asymptotic theory for a flexible class of these clustered random forests. 

\smallskip
\noindent{\bf Minimax optimal rates: }We introduce a class of decision tree splitting rules within which there exists a splitting rule that is minimax rate optimal for pointwise estimation of Lipschitz mean functions. Such results also apply to standard, unclustered honest random forests.

\smallskip
\noindent{\bf Inference with clustered random forests: } A class of these clustered random forests can be shown to be variance-dominating and asymptotically Gaussian, allowing for inference on a broad class of functionals of the conditional mean. Clustered random forests necessarily enjoy (potentially arbitrarily) tighter confidence intervals compared to both standard random forests and pre-existing mixed effects models based random forests.

\smallskip
\noindent{\bf Computational feasibility: } Despite gains that hierarchical or longitudinal random forests may exhibit over standard, unclustered random forests, existing methods have cubic-order computational complexity in the data used to build a decision tree. 
Clustered random forests on the other and are computable in approximately linear time, matching the favorable computational complexity, scalability and parallelizability of standard random forests.

\smallskip
\noindent{\bf Covariate distribution shift with correlated data:} When modeling correlations for improved accuracy in clustered random forests, we see that optimal accuracy in terms of mean squared prediction error (MSPE) can intricately depend on the covariate distribution to which the MSPE is defined.~Consequently we see that likelihood-based and cross-validation based methods can be arbitrarily poor under covariate shift. 
We introduce an important connection between covariate distribution shift problems and correlated data settings. Our results accommodate the covariate shift to be a point mass, thus encompassing optimal pointwise MSE results and tight confidence interval constructions.

\medskip
The remainder of the paper is organised as follows. We begin by briefly reviewing some related literature in Section~\ref{sec:lit-review}, and introduce relevant notation in Section~\ref{sec:notation}. In Section~\ref{sec:methodology} we introduce a general scheme for clustered random forest construction for data driven weights modeling intra-cluster correlations. 
Section~\ref{sec:theory} presents our theoretical results: we derive asymptotic rates for classes of clustered random forests; demonstrate the suboptimalities that likelihood-based and cross-validation weight estimation approaches can succumb to under covariate shifts; and then show asymptotic normality for classes of variance-dominating random forests, from which confidence intervals can be constructed for the pointwise conditional mean. 
Section~\ref{sec:linear-time} concludes by demonstrating the existence of approximately linear time algorithms for constructing these clustered random forests. 
A variety of numerical experiments on simulated and real-world data are explored in Section~\ref{sec:numericals}, demonstrating the effectiveness of our clustered random forest approach. 
The supplementary material contains the proofs of all results presented in the main text, and further details of the numerical experiments. Clustered random forests are implemented in the \texttt{R} package~\texttt{corrRF}\footnote{\url{https://cran.r-project.org/web/packages/corrRF/index.html}}.
Below we review some of the related literature not necessarily covered elsewhere, and collect together some notation used throughout the paper.

\subsection{Some related literature}\label{sec:lit-review}

There is a rich history in acknowledging and modeling correlations between observations for more accurate estimation of parameters of interest, particularly in the setting of parametric multilevel models. Mixed effects models~\citep{membook} have proved popular for hierarchical clustered data in addition to ARMA models for longitudinal data~\citep{box, Brockwell}. In each case a specific parametric structure is imposed on the conditional covariance of the response. Historically these methods were introduced in fully parametric models, where the conditional expectation of the response is also assumed to follow a specific parametric form, although they have also been incorporated into semiparametric models that ask the conditional mean to take a semi or non parametric form. \citet{emmenegger} for example considers a partially linear mixed effects model. Within the context of decision trees (and by extension random forests) random effects expectation maximisation (REEM) trees~\citep{REEMTrees} and mixed effects random forests (MERF)~\citep{MERF} have each been introduced, both of which assume that each data cluster $(Y_i,X_i)\in\R^{n_i}\times\R^{n_i\times d}$, for some $n_i,d\in\mathbb{N}$, follows the semiparametric model
\begin{equation}\label{eq:mixed-effects}
    Y_i\given X_i \sim N_{n_i} \big(
        \mu(X_i)
        \,,\;
        X_i^{\text{(sub)}}\mathcal{V}X_i^{\text{(sub)}\top} + \sigma^2\text{I}_{n_i}
    \big),
\end{equation}
where $\mu$ is a smooth mean function that acts row-wise on the clustered data $X_i$, that satisfies with a slight abuse of notation $(\mu(X_i))_j=\mu(X_{ij})$, and $X_i^{\text{(sub)}}$ is a set of random effects that takes the form of a subset of the covariates that make up $X_i$, and $(\mathcal{V},\sigma)$ is a Euclidean parameter to be estimated. Loosely speaking, both REEM and MERF strategies grow trees by partitioning the covariate support and then fitting a parametric mixed effects model with fixed effects given by a piecewise constant mean working model on the partition, in addition to the random effects covariance structure as in~\eqref{eq:mixed-effects}. Such a model is then high-dimensional in the conditional mean, but relies on a rigid low dimensional structure for the conditional covariance; in Section~\ref{sec:covariate-shift} we will see that this covariance model reliance can be potentially costly with regards to accuracy when misspecified. 

In recent work~\citet{young} consider estimating a linear component in a partially linear model, where they propose an estimation strategy within a constrained class of estimators that is optimal regardless of the form of the true covariance structure. Our work follows a similar robust `covariance model free' ethos, but additionally considers the further relaxed setting of a fully nonparametric conditional mean function. 

Our clustered random forest algorithm naturally relates to a wide literature on random forests~\citep{randomforest}, which have proved popular due to their success when deployed on modern datasets. 
Random forests results have shown~\citep{meinshausen, biau, cevid} and asymptotic normality of variance-dominating random forests~\citep{wager-walther, wager, naf}. Our theoretical results draw on some of these ideas.

One of the contributions of this work highlights a connection between covariate distribution shift problems and correlated data. The setting of optimal estimation in covariate shift problems for independent, uncorrelated data has been studied by~\citet{shimodaira, covshift-book, ma, pathak, kpotufe, koh}. Much less attention has been paid however to correlated data settings, where we see differences in behaviour under covariate shift to the independence settings.

\subsection{Notation}\label{sec:notation}
Given $n\in\mathbb{N}$ we use the shorthand \([n]:=\left\{1,\ldots,n\right\}\). For a symmetric matrix $M\in\R^{m\times m}$ we write \(\Lambda_{\max}(M)\) and \(\Lambda_{\min}(M)\) for its maximum and minimum eigenvalues respectively, and write $\|M\|_1:=\sup_{v\in\R^m:v\neq0}\|v\|_1^{-1}\|Mv\|_1=\max_{j'\in[m]}\sum_{j=1}^m|M_{jj'}|$. 
We write $\mathrm{e}_m\in\R^M$ to be the $m$th unit vector. 
Let \(\Phi\) denote the cumulative distribution function of a standard Gaussian distribution. For our uniform convergence results, it will be helpful to write, for a law $P$ governing the distribution of a random vector \(U\in\R^d\), $\E_P U$ for its expectation and $\pr_P(U \in B)=: \E_P \ind_B (U)$ for any measurable $B \subseteq \R^d$. Further, given a family of probability distributions $\cP$, and for a sequence of families of real-valued random variables \(\left(A_{P,I}\right)_{P\in\cP,I\in\mathbb{N}}\), we write \(A_{P,I}=o_{\mathcal{P}}(1)\) if \(\lim_{I\to\infty}\sup_{P\in\mathcal{P}_I}\PP_{P}\left(\left|A_{P,I}\right|>\epsilon\right)=0\) for all \(\epsilon>0\) and \(A_{P,I}=O_{\mathcal{P}}(1)\) if for any \(\epsilon>0\) there exist \(M_{\epsilon}, I_{\epsilon}>0\) such that \(\sup_{I\geq I_{\epsilon}}\sup_{P\in\mathcal{P}_I}\PP_{P}\left(\left|A_{P,I}\right|>M_{\epsilon}\right)<\epsilon\). Further, we write $A_{P,I}=o_{\mathcal{P}}(f(I))$ and $A_{P,I}=O_{\mathcal{P}}(f(I))$ for a given function $f:(0, \infty) \to (0, \infty)$ if $f(I)^{-1}A_{P,I}=o_{\mathcal{P}}(1)$ and $f(I)^{-1}A_{P,I}=O_{\mathcal{P}}(1)$ respectively.

\section{Clustered random forests}\label{sec:methodology}
In this section we outline the details of our clustered random forest algorithm. We introduce the following notation. Suppose we have access to a dataset consisting of $I\in\mathbb{N}$ clusters $(Y_i,X_i)\in\R^{n_i}\times\cX^{n_i}$, with $i$th cluster of size $n_i\in\mathbb{N}$. We assume that the conditional mean acts componentwise, i.e.
\begin{equation}\label{eq:mean-model}
\E_P[Y_{ij}\given X_i] = \mu(X_{ij}),
\end{equation}
for each $i\in[I]$ and $j\in[n_i]$, and where the mean function $\mu$ is the target of estimation. If for example $x\in\cX=\R^d$ for some $d\geq2$ and $\mu(x)=x_1\beta_P+f_P(x_2,\ldots,x_d)$ for some function $f_P:\R^{d-1}\to\R$ and $\beta_P\in\R$ then~\eqref{eq:mean-model} takes the form of the grouped partially linear model as studied in~\citet{emmenegger,young}. 

We first introduce our clustered decision trees in Algorithm~\ref{alg:cdt}. Our clustered random forest procedure, outlined in Algorithm~\ref{alg:crf}, is then presented, which takes the form of an aggregation of a number of clustered decision trees generated using a bootstrapped subsample of our clustered data. We outline a few key details of these two algorithms.

\begin{algorithm}[H]
\KwIn{Grouped dataset $(Y_i,X_i)_{i\in\cI_{\text{split}}\cup\cI_{\text{eval}}\cup\cI_{\text{corr}}}$, with groups $(Y_i,X_i)\in\R^{n_i}\times\cX^{n_i}$ indexed by the (pre-partitioned) set $\cI_{\text{split}}\cup\cI_{\text{eval}}\cup\cI_{\text{corr}}$; cluster-wise weight matrix structure $(W_i(\rho))_{\rho\in\Gamma}$; optional test covariate distribution or dataset; hyperparameters for tree splitting.}

    {\bf Split Construction:} Build a decision tree using  the data indexed by $\cI_{\text{split}}$ (with splitting criterion satisfying Assumption~\ref{ass:tree} e.g.~the CART criterion). 
    Let $M$ be the number of nodes for this tree, with leaves denoted by $L_1,\ldots,L_M$. For each $x\in\cX$ let $J(x)$ be the unique element of $\{m\in[M]:x\in L_m\}$.

    {\bf Weight Estimation:} Calculate weight parameter $\hat{\rho}$:

    \For{$i\in\cI_{\mathrm{corr}}$} {

    Calculate $\tilde{\chi}_i\in\{0,1\}^{n_i\times M}$ as $(\tilde{\chi}_i)_{j,m} = \ind_{(X_{ij}\in L_m)}$.

    Calculate $\tilde{\varepsilon}_{ij} := Y_{ij} - \mathrm{e}_{J(X_{ij})}^\top (\sum_{i\in\cI_{\text{corr}}}\tilde{\chi}_i^\top \tilde{\chi}_i)^{-1}(\sum_{i\in\cI_{\text{corr}}}\tilde{\chi}_i^\top Y_i)$.

    }

    Calculate weight parameter $\hat{\rho}$ using data $(\tilde{\varepsilon}_i)_{i\in\cI_{\text{corr}}}$ (alongside potentially a user-chosen test covariate distribution/ dataset). 
    For a $Q$-covariate shift optimised clustered random forest, take e.g.~$\hat\rho$ the minimiser of the loss function~\eqref{eq:rho-testMSPE}. 

    {\bf Evaluation on nodes: }Construct tree predictions node-wise:
    
    \For{$i\in\cI_{\mathrm{eval}}$} {
        Calculate $\chi_i\in\{0,1\}^{n_i\times M}$ as $(\chi_i)_{j,m}=\ind_{(X_{ij}\in L_m)}$.
    }
    
    $\hat{T}(x) := \mathrm{e}_{J(x)}^\top \big(\sum_{i\in\cI_{\text{eval}}}\chi_i^\top W_i(\hat{\rho})\chi_i\big)^{-1}\big(\sum_{i\in\cI_{\text{eval}}}\chi_i^\top W_i(\hat{\rho})Y_i\big)$.

\KwOut{Clustered decision tree estimate $\hat{T}:\cX\to\R$.}
\caption{Clustered Decision Tree (CDT)}
\label{alg:cdt}
\end{algorithm}

\begin{algorithm}[H]
\KwIn{Grouped dataset $(Y_i,X_i)_{i\in\cI}$, with groups $(Y_i,X_i)\in\R^{n_i}\times\cX^{n_i}$ indexed by the set $\cI:=[I]$; subsampling quantities $(s_I,s_I^{\mathrm{corr}})$; clustered decision tree inputs (see Algorithm~\ref{alg:cdt}); number of trees $B\in\mathbb{N}$; number of little bags for variance estimation $R\in\mathbb{N}$.}

\For{$r \in [R]$} {
Select a random subset $\cI_r\subset\cI$ of the index set of size $\floor{I/2}$.

\For{$b\in[B]$} {
    Select disjoint random subsets $\cI_{r,b,\text{split}},\cI_{r,b,\text{eval}},\cI_{r,b,\text{corr}}\subset\cI_r$ of sizes $s_I,s_I$ and $s_I^{\mathrm{corr}}$ respectively. 

    Construct a clustered decision tree $\hat{\mu}_{r,b}:\cX\to\R$ using Algorithm~\ref{alg:cdt} for the data partition $(\cI_{r,b,\text{split}},\cI_{r,b,\text{eval}},\cI_{r,b,\text{corr}})$.
}

$\hat{\mu}_r(\cdot) := \frac{1}{B}\sum_{b=1}^B \hat{\mu}_{r,b}(\cdot)$.

}

    $\hat{\mu}(\cdot) := \frac{1}{R}\sum_{r=1}^R \hat{\mu}_r(\cdot)$.

    $\hat{V}(\cdot) := \frac{1}{R}\sum_{r=1}^R \big(\hat{\mu}_r(\cdot)-
    \hat{\mu}(\cdot)\big)^2$.

    $\hat{C}_{\tilde{\alpha}}(\cdot):=\big[\hat{\mu}(\cdot)-\Phi^{-1}\big(1-\frac{\tilde{\alpha}}{2}\big)\hat{V}^{\frac{1}{2}}(\cdot),\;\hat{\mu}(\cdot)+\Phi^{-1}\big(1-\frac{\tilde{\alpha}}{2}\big)\hat{V}^{\frac{1}{2}}(\cdot)\big]$.

\KwOut{Clustered random forest $\hat{\mu}:\cX\to\R$, estimator of its pointwise variance $\hat{V}:\cX\to\R$, and pointwise confidence intervals $\hat{C}_{\alpha}$ for $\mu$.}
\caption{Clustered Random Forests (CRF)}
\label{alg:crf}
\end{algorithm}

\newpage
\noindent{\bf Weight classes to model intra-cluster correlation:} 

An important distinguishing feature of the clustered random forest algorithm is the added flexibility resulting from our weighting scheme, that in turn models the intra-cluster correlations between observations for more accurate estimation of the mean function $\mu$.

Algorithm~\ref{alg:cdt} involves a user-chosen parametric class of cluster-wise weight matrices $\{W_i(\rho):\rho\in\Gamma\}$ (where by default we take $\rho=0$ to correspond to $W_i(0)=I_{n_i}$). The restriction to specifically a parametric class allows for computationally simple weight estimation, in particular in light of our desire for minimal additional computational burden over standard random forests. Our numerical experiments (see Section~\ref{sec:numericals} to follow) explore two concrete examples of possible weight classes, which are also implemented in the \texttt{corrRF} package: the equicorrelated (exchangeable); and $\text{AR}(1)$ weight structures, which are both popular weighting methods in parametric clustered models (such as mixed effects models and generalised estimating equations) and take the forms
\begin{equation*}
    W_i(\rho)=\big(\big(\ind_{\{j=k\}}+\rho\cdot\ind_{\{j\neq k\}}\big)_{j,k\in[n_i]}\big)^{-1}
    ,
    \quad\text{and}\quad
    W_i(\rho)=\big(\big(\rho^{|j-k|}\big)_{j,k\in[n_i]}\big)^{-1},
\end{equation*}
respectively. These two structures are shown to allow for approximately linear time (in the number of observations used to build a tree) fitting of clustered decision trees (see Section~\ref{sec:linear-time} to follow). Whilst our theoretical results are shown to hold for classes of weight structures beyond these two examples, we treat these two examples as popular running examples throughout. 

\medskip
\noindent{\bf Estimation of weights:} 

Given a user-specified weight matrix class (parametrised by some Euclidean parameter $\rho\in\Gamma$) we will see that the estimator $\hat{\rho}$ that defines these weights may be estimated via any method that need only converge at logarithmic rates to a deterministic weight parameter $\rho^*\in\Gamma$, with optimality under potentially covariate shifted MSPE controlled by the covariate shift integrated variance of an individual $\rho^*$-weighted tree (see Theorem~\ref{thm:minimax}). 

There are a number of natural strategies, not necessarily optimal, one may choose to estimate weights; we discuss some of these methods and highlight their drawbacks. One natural approach would be to fit a random effects model on the residual error terms $(\tilde{\varepsilon}_{ij})_{i\in\cI_{\text{corr}},j\in[n_i]}$ (e.g.~for the residuals defined in Algorithm~\ref{alg:cdt}). This follows the ethos of~\citep{REEMTrees, MERF}, which builds on standard parametric theory in multilevel modeling~\citep{membook, fahrmeir, glmmbook}. An alternative strategy could be to estimate weights that minimise an empirical estimator of the mean squared training error of a clustered decision tree weighted by the parameter $\rho$. Adopting the notation of Algorithm~\ref{alg:cdt}, a simple empirical estimator of the integrated variance over the training covariate distribution is given by

\begin{multline}\label{eq:TRAIN}
    \underset{\rho\in\Gamma}{\argmin}\,\tr\bigg\{\Big(\sum_{i\in\cI_{\text{corr}}}\chi_i^\top \chi_i\Big)\Big(\sum_{i\in\cI_{\text{corr}}}\chi_i^\top W_i(\rho)\chi_i\Big)^{-1}
    \\
    \Big(\sum_{i\in\cI_{\text{corr}}}\chi_i^\top W_i(\rho)\tilde{\varepsilon}_i\tilde{\varepsilon}_i^\top W_i(\rho)\chi_i\Big)
    \Big(\sum_{i\in\cI_{\text{corr}}}\chi_i^\top W_i(\rho)\chi_i\Big)^{-1}\bigg\}.
\end{multline}
We will see later in Section~\ref{sec:covariate-shift} that (see Theorem~\ref{thm:dist-shift-2}) both of these two strategies can select arbitrarily suboptimal MSPE under covariate shift. Because clustered random forest performance is controlled by the individual tree-level covariate shift integrated variance (Theorem~\ref{thm:minimax}) a natural optimal approach would be to directly target this quantity. Given a covariate shift distribution $Q$, consider the empirical estimator of this quantity (again adopting the notation of Algorithm~\ref{alg:cdt}) via
\begin{multline}\label{eq:rho-testMSPE}
    \underset{\rho\in\Gamma}{\argmin}\, \tr\bigg\{ \diag\Big(\big(Q(L_m)\big)_{m\in[M]}\Big)
    \Big(\sum_{i\in\cI_{\text{corr}}}\chi_i^\top W_i(\rho)\chi_i\Big)^{-1}
    \\
    \Big(\sum_{i\in\cI_{\text{corr}}}\chi_i^\top W_i(\rho)\tilde{\varepsilon}_i\tilde{\varepsilon}_i^\top
    W_i(\rho)\chi_i\Big)\Big(\sum_{i\in\cI_{\text{corr}}}\chi_i^\top W_i(\rho)\chi_i\Big)^{-1} \bigg\},
\end{multline}
where $Q(L_m)=\int_{L_m}dQ$ may be replaced by an estimator for this quantity e.g.~the empirical frequency using a new testing dataset consisting only of covariates drawn from $Q$. 
Theory for this empirical estimator~\eqref{eq:rho-testMSPE} is derived in Section~\ref{sec:theory}. 
Alternatively $Q$ can be directly taken as the empirical distribution of the covariates a test set, and so it may contain only a few unlabeled test points. 

\medskip
\noindent{\bf Honesty property for decision trees:}

The clustered decision trees of Algorithm~\ref{alg:cdt} satisfy a principle termed as honesty, following the naming scheme of~\citet{wager}, in that for each individual decision tree a single cluster is used in determining at most one of the decision tree splits, working correlation (weight structure) estimation, and predictions (evaluations); in practice this is achieved through a tree-wise sample splitting of the subsampled data for each tree. As we see in Section~\ref{sec:theory} this allows for guaranteed theoretical results such as asymptotic normality of the pointwise mean function estimator and functionals of the mean. In generality however, an analogous clustered random forest could easily be introduced that does not satisfy honesty (obtained by omitting the tree-wise sample split step) and that therefore more closely mimics the original random forests of~\citet{randomforest}; see Appendix~\ref{appsec:dishonest-rf} for more details, which also allows for instance the option to treat the weight parameters as additional hyperparameters to be tuned via cross validation.

\medskip
\noindent{\bf Decision tree splitting rule:}

The clustered decision tree Algorithm~\ref{alg:cdt} also notably allows for an arbitrary user-chosen splitting rule. Whilst it is likely that different splitting mechanisms will differ in behaviour in practice, the asymptotic results for clustered random forests we present only require the splitting rule used to satisfy certain minimal properties (see Assumption~\ref{ass:tree} to follow). For these reasons, the precise choice of splitting rule here we leave to further work. As a default in our numerical simulations we will use the CART criterion of~\citet{randomforest}, in part due to its popularity and computationally attractive properties in practice.

\medskip
\noindent{\bf Subsampling to generate random forests:}

Algorithm~\ref{alg:crf} constructs clustered random forests, which aggregates a number of clustered decision trees generated with subsamples of the clustered data. 
In practice, we aggregate $R\times B$ clustered decision trees, where the subsampling process is structured into two sample splits; for each $r\in[R]$ a subsample of the data (of size approximately half the full data) is taken, and then $B$ decision trees are constructed from a further subsampling, in which a $s_I$ clusters are used in the evaluation step for each tree. The rate at which $s_I$ diverges will dictate the convergence rate of the final random forests (see Section~\ref{sec:theory}). The former part of this subsampling mechanism, which does not appear in the random forests of~\citet{randomforest}, is introduced to construct a so-called `bootstrap of little bags' estimator of the variance~\citep{sexton, GRF, naf}, allowing for confidence interval construction of the mean function for variance dominating random forests (see Theorem~\ref{thm:normality}). Note however if this estimator for the variance is not desired, for example if one merely wishes to obtain a more accurate estimator for the mean function (see Theorem~\ref{thm:minimax}), then Algorithm~\ref{alg:crf} can simply be taken with $R=1$ and $\cI_1=\cI$.

\section{Asymptotic theory}\label{sec:theory}

Here we present asymptotic theory of our clustered random forests algorithm, in a regime where the number of clusters $I$ diverges. For simplicity of exposition we will present our results in the case where the group size $n_i\equiv n_{\mathrm{c}}$ is constant across clusters, though our results extend to differing group sizes. 
First, we present an asymptotic upper bound on the general $Q$-covariate shifted MSPE, and thus by corollary an upper bound on the pointwise MSE (that for certain splitting rules coincide with the minimax rate). We then study optimality with respect to $Q$-covariate shifted MSPE to motivate the $Q$-adjusted weight estimation method of Algorithm~\ref{alg:crf}. Finally, we prove uniform asymptotic Gaussianity results for variance dominating clustered random forests, which in turn can be used to show asymptotic coverage of the pointwise $\alpha$-level confidence intervals $\hat{C}_\alpha(x)$ presented in Algorithm~\ref{alg:crf}. 
We conclude by discussing the linear time computational comparability of clustered random forests to standard random forests. 
Proofs of all the results presented in this section are given in Appendix~\ref{appsec:theory}.

Our results build on results for random forests for unclustered data~\citep{wager, naf}, which themselves build on consistency results of~\citet{biau, meinshausen, wager-walther, wager, cevid}. In particular, our extensions to clustered data involves a number of additional non-trivial technicalities. To indicate how these arise observe that the predictor at a point $x\in\cX$ of a standard decision tree is a function of only the local data in the leaf $L(x)$ of that tree. In contrast a clustered decision tree's prediction at $x\in\cX$ is also a function of (potentially) all the data, due to the inclusion of information from not only the local datapoints contained in the leaf $L(x)$ but also the data correlated with observations in $L(x)$. A number of common techniques used to analyse random forests in the aforementioned references consequently don't follow in correlated data settings, and must be adapted to accommodate these extra intricacies; see Appendix~\ref{appsec:theory} for further details. 

Throughout this section, our theoretical results are presented for `Monte-Carlo limit' random forests~\citep{wager, naf} constructed from trees generated via all possible subsamplings of the data, and take the form 
\begin{equation}\label{eq:MC}
    \hat{\mu}_I^{\text{MC}}(x) := 
    \binom{I}{\bar{s}_I}^{-1}\sum_{i_1<\cdots<i_{\bar{s}_I}}\E_{\xi\sim\Xi}\bigl[T(x,\xi;Z_{i_1},\ldots,Z_{i_{\bar{s}_I}})\,\big|\, Z_{i_1},\ldots,Z_{i_{\bar{s}_I}}\bigr],
\end{equation}
where $\bar{s}:=3s_I$, and where $T(x,\xi;Z_{i_1},\ldots,Z_{i_{\bar{s}}})$ denotes the prediction at the point $x\in\cX$ of a decision tree $T$ constructed from the data $\{Z_{i_1},\ldots,Z_{i_{\bar{s}}}\}$, with $Z_i:=(Y_i,X_i)$ and $\xi\sim\Xi$ denoting a random variable that encapsulates the randomness in the decision tree splitting procedure. 
In practice our clustered random forests (Algorithm~\ref{alg:crf}) approximate their `Monte-Carlo limit' for a sufficiently large $B\times R$, and in our empirical experiments of Section~\ref{sec:sim2} we take $(B,R)=(500,1)$, which suffices as a sufficient approximation to the Monte-Carlo limit (see Appendix~\ref{appsec:sim2-B}). 
The following assumptions are made on the tree splitting mechanism and random forest construction of Algorithm~\ref{alg:crf}.

\begin{assumption}[Forest properties]\label{ass:tree}
Suppose each tree of Algorithm~\ref{alg:crf} satisfies the following properties:
\begin{itemize}
    \item[(\mylabel{honesty}{A1.1})] (\emph{Honesty}) Each group-wise datapoint $(Y_i,X_i)_{i\in\cI}$ used to build a single tree is used to determine at most one of the splits of the tree, the `random effects' $\hat{\rho}$, the `fixed effects' evaluation (see Algorithm~\ref{alg:crf} for a splitting technique that enforces this property).
    \item[(\mylabel{symmetry}{A1.2})] (\emph{Symmetry}) The distribution of the randomised tree splits are independent of the ordering of the training samples, for orderings preserving the group structure i.e.~interchanging the order of clusters and interchanging intra-cluster observations. 
    \item[(\mylabel{regularity}{A1.3})] (\emph{Regularity}) Each tree split leaves a fraction $\alpha\in(0,0.5]$ of the available training data in each child node. Further, trees are grown until every leaf contains between $k_I$ and $2k_I-1$ observations for some $k_I\in\mathbb{N}$. \label{ass:k}
    \item[(\mylabel{feature}{A1.4})] (\emph{Feature splits}) Each path from the root node to a terminal node (leaf), and for each covariate feature $j\in[d]$, a proportion $d^{-1}\pi$ of the tree splits along the path are along the $j$th feature space almost surely, for some fixed $\pi\in(0,1]$.
    \item[(\mylabel{beta}{A1.5})] (\emph{Subsampling sizes}) The subsampling sizes $s_I,s_I^{\mathrm{corr}}\to\infty$ as $I\to\infty$. 
\end{itemize}
\end{assumption}

These restrictions to the splitting procedure are commonplace in pre-existing results on distributional limits of random forests~\citep{wager,GRF,naf}. 
Our assumption~\ref{feature} differs to~\citet{wager} who assume each tree split is on the $j$th feature with probability at least $d^{-1}\pi$, although notably the two assumptions are equivalent in the $d=1$ case. 
This however allows us to strengthen our theoretical results. Whilst the aforementioned requires require $\alpha\in(0,0.2]$ in Assumption~\ref{regularity} we may accommodate $\alpha\in(0,0.5]$. Subsequently we obtain faster rates that approach minimax rates for certain splitting rules (see Remark~\ref{remark:minimax} to follow), and we obtain variance dominating, asymptotically normal random forests for smaller subsampling regimes, and thus allowing us to form confidence intervals for the pointwise conditional mean function with widths that converge at faster rates. 
Assumption~\ref{ass:tree} does restrict the splitting procedures that we provide our theoretical guarantees over, but the above assumptions can be enforced for popular splitting criteria such as CART, provided the candidate splits being assessed are restricted so as not to include sufficiently extreme splits at each iteration.

Our theoretical results are also shown to hold for a broad class of weight functions. 
It will help to define for symmetric matrices $M\in\R^{n\times n}$ the diagonally dominant function
    \[
    \zeta(M) := \min_{j\in[n]}\biggl(M_{jj}-\sum_{j'\in[n]\setminus\{j\}}|M_{jj'}|\biggr).
    \]

\begin{assumption}[Weight estimation]\label{ass:weights}
{\color{white}.}
\begin{enumerate}
    \item[(\mylabel{weights}{A2.1})] (\emph{Admissible weight classes}) 
    The weight class $(W_i(\rho))_{\rho\in\Gamma}$ defined over the compact set $\Gamma$ in Algorithm~\ref{alg:cdt} is symmetric and satisfies
    \begin{equation*}
        \sup_{\rho\in\Gamma}\|W_i(\rho)\|_1 \leq C_W
        ,\quad
        \inf_{\rho\in\Gamma}\zeta(W_i(\rho))\geq c_W
        ,\quad\text{and}\quad
        \sup_{\rho\in\Gamma}\Lambda_{\max}\bigg(\bigg\{\frac{\partial W_i(\rho)}{\partial\rho}\bigg\}^2\bigg)\leq C_W',
    \end{equation*}
    for finite constants $c_W,C_W,C_W'>0$ and for every $i\in[I]$.
     \item[(\mylabel{weights-rates}{A2.2})]  (\emph{Weight estimation}) Suppose that $\hat\rho=\rho^*+o_{\mathcal{P}}\bigl((\log I)^{-d}\bigr)$ of some deterministic weight limit $\rho^*$.
\end{enumerate} 
\end{assumption}
Assumption~\ref{weights} is satisfied for two common diagonally dominant weight structures used in multilevel modeling: the inverse equicorrelated structure with finite cluster size; and the inverse $\text{AR}(1)$ structure with potentially diverging cluster size (see Appendix~\ref{appsec:equi-ar1}), both of which are implemented in~\texttt{corrRF}. The second condition of~\ref{weights} implies a lower bound on the minimum eigenvalue $\Lambda_{\min}(W_i(\rho))\geq c_W$ via the Gershgorin circle theorem. 

We also make the following assumptions on the training data generating mechanism.

\begin{assumption}[Data generating procedure]\label{ass:data}
    Let $\mathcal{P}$ be a class of distributions on i.i.d.~clusters $(Y_i,X_i)$ with the properties that for each $P\in\mathcal{P}$ the following hold:
    \begin{itemize}
    
    \item[(\mylabel{bounded-density}{A3.1})] (\emph{Covariate Density}) $\cX=[0,1]^d$ with i.i.d.~cluster-level covariates $(X_i)_{i\in[I]}$, with each entry $X_{ij}$ admitting a marginal density $f$ with respect to Lebesgue measure, satisfying $\nu^{-1}\leq f(x)\leq \nu$ for some $\nu\geq 1$,  and either:~(i) covariates $(X_{ij})_{j\in[n_i]}$ are i.i.d and we allow the cluster size $n_i$ to diverge with rates $\log n_i = O(\log I)$; or
    (ii) $X_{ij}=X_{ij'}$ almost surely for all $j,j'\in[n_i]$, $n_i$ is finite, and $k/n_i\in\mathbb{N}$.
     \item[(\mylabel{lip}{A3.2})] (\emph{Lipschitz continuity}) The functions $x \mapsto \E_P[Y_{ij}\given X_{ij}=x] =: \mu(x)$ and $(x,x') \mapsto \Cov_P(Y_{ij},Y_{ik}\given X_{ij}=x, X_{ik}=x')$ 
    are bounded and Lipschitz continuous with Lipschitz constants $L_\mu$ and $L_v$ respectively.
    \item[(\mylabel{bounded-sigma}{A3.3})] (\emph{Bounded covariance}) The covariance matrix $\Cov_P(Y_i\given X_i)$
    has eigenvalues in the interval $[\lambda^{-1},\lambda]$ almost surely for some $\lambda\geq1$.
    \item[(\mylabel{4+delta}{A3.4})] (\emph{Bounded moments}) There exists some $\delta,\tau>0$ such that for any $a\in\R^{n_i}$ and $x\in[0,1]^d$ we have $\E_P\big[|a^T(Y_{i}-\mu(x))|^{2+\delta}\given X_{i}=x\big]\leq\tau\,\E_P[\{a^T(Y_{i}-\mu(x))\}^2\given X_{i}=x]^{1+\frac{\delta}{2}}$. 
    \end{itemize}
\end{assumption}

The condition~\ref{bounded-density} allows for an extreme case when the joint distribution of the cluster-level covariate $X_i$ is degenerate, which occurs for example when one has repeated measures with the same cluster-level covariates. 
Equipped with Assumption~\ref{ass:data}, we may prove the following bound on a general $Q$-covariate shifted MSPE
\begin{equation*}
    \mathrm{MSPE}_Q\bigl(\hat\mu_I^{\mathrm{MC}}\bigr)
    := 
    \int_\cX \bigl(\hat\mu_I^{\mathrm{MC}}(x)-\mu(x)\bigr)^2\,dQ(x).
\end{equation*}

\begin{theorem}\label{thm:minimax}
    Let $\hat\mu_I^{\mathrm{MC}}$ be a clustered random forest estimator~\eqref{eq:MC} satisfying~\ref{ass:tree}~and~\ref{ass:weights}, trained on $I$ clusters drawn from a law $P$ satisfying~\ref{ass:data}, 
    with $k_I^{-1}I^{-1}s_I\to0$ and $k_I^{-1}n_{\mathrm{c}}s_I\to\infty$ as $I\to\infty$. 
    Also fix an arbitrarily small constant $\epsilon\in(0,1)$ and covariate distribution $Q$, there exists some $I_0\in\mathbb{N}$ that does not depend on $(P,Q)$ such that, for all $I\geq I_0$,
   \begin{equation*}
        \E_P\bigl[\mathrm{MSPE}_Q\bigl(\hat\mu_I^{\mathrm{MC}}\bigr)\bigr]
        \leq
        C_{\mathrm{bias}}^2 \biggl(\frac{2k_I}{n_{\mathrm{c}}s_I}\biggr)^{2(1-\epsilon)\frac{\log((1-\alpha)^{-1})}{\log(\alpha^{-1})}\frac{\pi}{d}} +\, \mathcal{V}_Q(\rho^*) \biggl(\frac{s_I}{k_I I}\biggr),
    \end{equation*}
    where $C_{\mathrm{bias}}:=\frac{6C_WL_\mu d^{1/2}\nu}{c_W}$, and 
    \begin{equation*}
        \mathcal{V}_Q(\rho^*) := k_I\int_\cX\Var_P\bigl(T_{\rho^*}(x)\bigr)\,dQ(x),
    \end{equation*}
    is the $Q$-integrated variance of an individual clustered decision tree $T_{\rho^*}(x)$ with fixed correlation weighting parameter $\rho^*$. 
\end{theorem}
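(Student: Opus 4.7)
The plan is to begin with the standard bias--variance decomposition
\begin{equation*}
    \E_P\bigl[\mathrm{MSPE}_Q\bigl(\hat\mu_I^{\mathrm{MC}}\bigr)\bigr]
    = \int_\cX \Var_P\bigl(\hat\mu_I^{\mathrm{MC}}(x)\bigr)\,dQ(x)
    + \int_\cX \bigl(\E_P[\hat\mu_I^{\mathrm{MC}}(x)]-\mu(x)\bigr)^2\,dQ(x),
\end{equation*}
and to bound the two integrands separately to recover, respectively, the claimed variance contribution $\mathcal{V}_Q(\rho^*)\cdot s_I/(k_I I)$ and the claimed squared-bias contribution $C_{\mathrm{bias}}^2\bigl(2k_I/(n_{\mathrm{c}}s_I)\bigr)^{2(1-\epsilon)\pi\log((1-\alpha)^{-1})/(d\log(\alpha^{-1}))}$.

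\emph{Variance term.} I would exploit the $U$-statistic structure of $\hat\mu_I^{\mathrm{MC}}$ in~\eqref{eq:MC}: it is the complete symmetric aggregation of honest tree predictions over subsamples of size $\bar s_I = 3 s_I$ from $I$ clusters. A Hoeffding / Efron--Stein ANOVA decomposition in the spirit of~\citet{wager,naf}, together with honesty~\ref{honesty} (which ensures that the contribution of a single cluster to a tree prediction is of order $1/\bar s_I$), yields
\begin{equation*}
    \Var_P\bigl(\hat\mu_I^{\mathrm{MC}}(x)\bigr)\leq\frac{\bar s_I}{I}\,\Var_P\bigl(T_{\rho^*}(x)\bigr)+\text{(lower order)}.
\end{equation*}
The replacement of $\hat\rho$ by $\rho^*$ inside $T$ is justified by a first-order perturbation argument combining the smoothness bound on $\partial W_i/\partial\rho$ in~\ref{weights} with the logarithmic convergence $\hat\rho-\rho^*=o_\cP((\log I)^{-d})$ in~\ref{weights-rates}. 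Integrating against $Q$ and using $\mathcal{V}_Q(\rho^*)=k_I\int\Var_P(T_{\rho^*}(x))\,dQ(x)$ delivers the variance half of the bound, with the multiplicative $\bar s_I/s_I=3$ absorbed into $I_0$ for $I$ large.

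\emph{Bias term.} Conditioning on the partition $L_1,\ldots,L_M$, the estimated weights $\hat\rho$, and the evaluation covariates (all jointly independent of the evaluation responses by~\ref{honesty}), the expected tree prediction at $x$ reduces to
\begin{equation*}
    \E_P[\hat T(x)\mid\cdot]=\mathrm{e}_{J(x)}^\top A^{-1}\sum_{i\in\cI_{\mathrm{eval}}}\chi_i^\top W_i(\hat\rho)\mu(X_i),
    \qquad A:=\sum_{i\in\cI_{\mathrm{eval}}}\chi_i^\top W_i(\hat\rho)\chi_i.
\end{equation*}
Introducing the piecewise-constant surrogate $\tilde\mu(x):=\mu(x_{J(x)}^\star)$ for a measurable selection $x_m^\star\in L_m$, one directly checks that $\tilde\mu(X_i)=\chi_i m$ with $m_k:=\mu(x_k^\star)$, so that $A^{-1}\sum_i\chi_i^\top W_i(\hat\rho)\tilde\mu(X_i)=m$ and the weighted least squares reproduces $\tilde\mu$ exactly. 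Subtracting this identity the bias splits into
\begin{equation*}
    \bigl(\tilde\mu(x)-\mu(x)\bigr)+\mathrm{e}_{J(x)}^\top A^{-1}\sum_{i}\chi_i^\top W_i(\hat\rho)\bigl(\mu(X_i)-\tilde\mu(X_i)\bigr),
\end{equation*}
the first piece being bounded by $L_\mu\,\mathrm{diam}(L_{J(x)})$ and the second by $(C_W/c_W)\,L_\mu\,\max_m\mathrm{diam}(L_m)$, via $\sup_\rho\|W_i(\rho)\|_1\leq C_W$ together with a Gershgorin-type bound on $\|A^{-1}\|_\infty$ propagated from the diagonal dominance $\inf_\rho\zeta(W_i(\rho))\geq c_W$. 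A standard leaf-diameter analysis under~\ref{regularity} and~\ref{feature} then gives $\max_m\mathrm{diam}(L_m)\leq d^{1/2}(2k_I/(\bar s_I n_{\mathrm{c}}))^{d^{-1}\pi\log((1-\alpha)^{-1})/\log(\alpha^{-1})}$: any root-to-leaf path has depth at least $\log(\bar s_I n_{\mathrm{c}}/(2k_I))/\log(\alpha^{-1})$, a fraction $d^{-1}\pi$ of its splits lie along each feature (by~\ref{feature}), and each such split contracts that feature's side length by at most $1-\alpha$ (using regularity together with the density bounds $\nu^{-1}\leq f\leq\nu$ in~\ref{bounded-density}). Collecting the constants $C_W/c_W$, $L_\mu$, $d^{1/2}$ and $\nu$ into $C_{\mathrm{bias}}$, and absorbing the multiplicative $3$ from $\bar s_I=3s_I$ into the $(1-\epsilon)$ factor (valid for $I$ large enough), yields the stated bias rate uniformly in $(P,Q)$ for $I\geq I_0$.

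\emph{Main obstacle.} The principal difficulty is the bias analysis, not the variance. Unlike in standard random forests, the weighted least squares estimator couples all leaves through the global matrix $A^{-1}$, and \emph{a priori} an approximation error in a leaf far from $x$ could leak into the prediction at $x$ and destroy locality. Showing this leakage is controlled --- precisely that $\|A^{-1}\|_\infty$ scales as $(c_W\cdot(\text{min leaf size}))^{-1}$ uniformly in the number $M$ of leaves --- is the crux of the argument, and is where the diagonal dominance $\inf_\rho\zeta(W_i(\rho))\geq c_W$ of~\ref{weights} enters essentially. A secondary technical point is the unified treatment of the regular sub-case~\ref{bounded-density}(i), where per-leaf cluster counts must be shown to concentrate (via a Bernstein-type argument under the density lower bound and the assumption $\log n_i=O(\log I)$), with the degenerate sub-case~\ref{bounded-density}(ii), where by construction each leaf contains exactly $k_I/n_{\mathrm{c}}$ clusters and the analysis simplifies.
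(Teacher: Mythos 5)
Your proposal is correct and follows essentially the same route as the paper: the bias--variance decomposition, an Efron--Stein/ANOVA bound $\Var_P(\hat\mu_I^{\mathrm{MC}}(x))\lesssim (s_I/I)\Var_P(T_{\rho^*}(x))$ with a perturbation argument in $\rho$ to pass from $\hat\rho$ to $\rho^*$, and a bias analysis that exploits exact reproduction of piecewise-constant functions by the weighted least squares step together with an $\ell_1$-control of the weight vector $\omega(x)$ via diagonal dominance of $W$ (the paper's Lemma on $\sum_{ij}|\omega_{ij}(x)|\le 5C_W/c_W$ is precisely your Gershgorin-type bound on $\|A^{-1}\|_\infty$, implemented via the Ahlberg--Nilson--Varah inequality). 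Your choice of surrogate $\tilde\mu(x)=\mu(x^\star_{J(x)})$ versus the paper's leaf-conditional mean $\gamma(x)=\E_P[Y^*\mid X^*\in L(x),Z_{\mathrm{split}}]$ is immaterial, and you correctly identify the non-locality of the clustered predictor as the genuine new difficulty relative to standard forests.
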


The two terms in the above bound correspond to the integrated squared bias and variance respectively, and dictate the bias-variance tradeoff with respect to the parameters $(s_I,k_I)$. 
Taking $Q$ as the marginal law of the training covariate distribution, the above is a bound on the mean squared training error. When $Q=\delta_x$ is a point mass at some $x\in\cX$ we obtain pointwise MSE convergence rates
\begin{equation*}
    \E_P\Big[\big(\hat{\mu}_I^{\mathrm{MC}}(x)-\mu(x)\big)^2\Big]
    \leq
    C_{\mathrm{bias}}^2 \biggl(\frac{2k_I}{n_{\mathrm{c}}s_I}\biggr)^{2(1-\epsilon)\frac{\log((1-\alpha)^{-1})}{\log(\alpha^{-1})}\frac{\pi}{d}} +\, \mathcal{V}_{\delta_x}(\rho^*) \biggl(\frac{s_I}{k_I I}\biggr),
\end{equation*}
where $\mathcal{V}_{\delta_x}(\rho^*)=k_I\Var_P\bigl(T_{\rho^*}(x)\bigr)$
Moreover, the quantity~$\mathcal{V}_Q(\rho)$ converges to some finite non-zero deterministic limit as $I$ diverges, that in particular does not depend on $(s_I,k_I)$; the pointwise variance of a tree with order $k_I$ observations per leaf scales as $k_I^{-1}$. Because $I_0$ in Theorem~\ref{thm:minimax} does not depend on $P\in\cP$ this result is a uniform result in $P\in\cP$.

Theorem~\ref{thm:minimax} also holds for standard random forests (with cluster-wise subsampling) applied to clustered data by taking $\hat\rho=\rho^*=0$ in the above. Thus, if a clustered random forest with weights $\rho^*$ satisfy $\mathcal{V}_Q(\rho^*)\leq V_Q(0)$ then the clustered random forest improves on the standard random forest regardless of the choice of hyperparameters $(s_I,k_I)$. Whilst not necessarily optimally tuned, if a user had access to a standard random forest already fit and hyperparameter tuned, then asymptotic improvements are achieved by adding in the weight structure of clustered random forests without having to re-tune the other hyperparameters in a `greedy' final post-hoc evaluation step.

\subsection{Optimal weights under covariate distribution shift}\label{sec:covariate-shift}

The $Q$-covariate shifted MSPE in Theorem~\ref{thm:dist-shift-2} is controlled monotonically by $\mathcal{V}_Q(\rho^*)$ 
Thus, given a target covariate distribution~$Q$, we are interested in obtaining the minimiser of the function $\rho\mapsto\mathcal{V}_Q(\rho)$. The following theorem shows that this minimiser can depend critically on the covariate shift $Q$. 

\begin{theorem}[Covariate shift]\label{thm:dist-shift-2}
    Consider a clustered random forest with properties~\ref{ass:tree}~and~\ref{ass:weights} with equicorrelated weight structure, trained on $I$ clusters drawn from law satisfying~\ref{ass:data} with $d=1$, and 
    with leaf size~$k_I^{-1}=o(1)$, $k_I=o\bigl(s_I^{2\phi/(2\phi+1)}\bigr)$ where $\phi:=(1-\epsilon)\frac{\log((1-\alpha)^{-1})}{\log(\alpha^{-1})}\pi$. 

    Fix an arbitrarily small $\varrho\in(0,1)$, and arbitrarily large $\kappa\geq1$. 
    Then there exists a distribution $P$ on the training data satisfying Assumption~\ref{ass:data} with
    $
        \Corr_P(Y_{ij},Y_{ij'}\given X_i) = \varrho,
    $
    almost surely for $j\neq j'$, and there exists covariate shifted distributions $Q_1$, $Q_2$ such that for sufficiently large $I$,
    \begin{equation*}
        \frac{\mathcal{V}_{Q_1}(\rho_{Q_2\text{-}\mathrm{opt}})}{\mathcal{V}_{Q_1}(\rho_{Q_1\text{-}\mathrm{opt}})}
        \geq\kappa,
    \end{equation*}
    where
    \begin{equation*}
        \rho_{Q_j\text{-}\mathrm{opt}} := \underset{\rho\in[0,1)}{\argmin}
        \,\mathcal{V}_{Q_j}(\rho),
        \quad
        j\in\{1,2\}.
    \end{equation*}
\end{theorem}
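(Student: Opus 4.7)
We exhibit explicit $P$ and point-mass shifts $Q_1=\delta_{x_1}$, $Q_2=\delta_{x_2}$ meeting the requirements. The key observation is that the equicorrelated weight class $\{W_i(\rho)\}$ cannot match the true conditional inverse-covariance whenever the marginal variance $\sigma(x)^2:=\Var_P(Y_{ij}\mid X_{ij}=x)$ varies with $x$, even when $\Corr_P(Y_{ij},Y_{ik}\mid X_i)\equiv\varrho$; the best $\rho$ within this class then depends non-trivially on the evaluation measure $Q$. Concretely, take $d=1$, $n_{\mathrm{c}}=2$, $X_{i1},X_{i2}\iid U[0,1]$, and a Lipschitz $\sigma:[0,1]\to[\lambda^{-1/2},\lambda^{1/2}]$ that equals $\lambda^{-1/2}$ on a sub-interval $A_1$ and $\lambda^{1/2}$ on a disjoint sub-interval $A_2$ (Lipschitz-interpolating between). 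Set $\Cov_P(Y_{ij},Y_{ik}\mid X_i):=\sigma(X_{ij})\sigma(X_{ik})(\varrho\ind_{j\neq k}+\ind_{j=k})$ with any Lipschitz $\mu$: Assumption~\ref{ass:data} is satisfied (with $\lambda$ as the bound in~\ref{bounded-sigma}) and $\Corr_P(Y_{i1},Y_{i2}\mid X_i)\equiv\varrho$. Pick $x_j\in A_j$.

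\textbf{Limiting tree variance at $\delta_x$.} The main technical step is to establish
\[
\mathcal{V}_{\delta_x}(\rho)\longrightarrow\mathcal{V}^\infty_x(\rho):=\sigma(x)^2-2\rho\varrho\sigma(x)\bar\sigma+\rho^2\overline{\sigma^2},
\]
as $I\to\infty$ uniformly in $\rho$ on compacts of $[0,1)$, where $\bar\sigma:=\E_P[\sigma(X_{ij})]$ and $\overline{\sigma^2}:=\E_P[\sigma(X_{ij})^2]$. Under the hypothesis $k_I=o(s_I^{2\phi/(2\phi+1)})=o(s_I)$ and iid uniform covariates, concentration gives that $O(k_I^2/s_I)=o(k_I)$ clusters have both observations in any single leaf, so asymptotically each leaf receives $k_I$ observations from $k_I$ distinct clusters whose partner observation is uniform on $[0,1]$. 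Using $W_i(\rho)=(1-\rho^2)^{-1}\bigl(\begin{smallmatrix}1&-\rho\\-\rho&1\end{smallmatrix}\bigr)$, the matrix $A:=\sum_i\chi_i^\top W_i(\rho)\chi_i$ admits the leading-order decomposition $(\alpha-\beta)I_M+\beta\,\mathbf{1}_M\mathbf{1}_M^\top$, with $\alpha\sim k_I/(1-\rho^2)$, $\beta=O(k_I^2/(s_I(1-\rho^2)))$, and $M\asymp s_I/k_I$; a Sherman--Morrison expansion (using $M\beta=O(k_I/(1-\rho^2))$) yields $e_{J(x)}^\top A^{-1}e_{J(x)}\sim(1-\rho^2)/k_I$ with all off-diagonal projections negligible. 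For $B:=\sum_i\chi_i^\top W_i(\rho)\Sigma_iW_i(\rho)\chi_i$, each contributing cluster (one observation at $x$, partner at $x'$) contributes
\[
(W_i\Sigma_iW_i)_{11}=(1-\rho^2)^{-2}\{\sigma(x)^2-2\rho\varrho\sigma(x)\sigma(x')+\rho^2\sigma(x')^2\}
\]
to the $(J(x),J(x))$-entry of $B$; averaging over $x'\sim U[0,1]$ and summing over $k_I$ clusters gives $B_{J(x),J(x)}\sim k_I\{\sigma(x)^2-2\rho\varrho\sigma(x)\bar\sigma+\rho^2\overline{\sigma^2}\}/(1-\rho^2)^2$. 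Combining, $k_I\Var_P(T_\rho(x))=k_I\,e_{J(x)}^\top A^{-1}BA^{-1}e_{J(x)}\to\mathcal{V}^\infty_x(\rho)$.

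\textbf{Blowing up the ratio.} The quadratic $\mathcal{V}^\infty_x(\rho)$ has unique minimiser $\rho^*_x=\varrho\sigma(x)\bar\sigma/\overline{\sigma^2}$. With the step profile above, $\bar\sigma\sim\tfrac12\lambda^{1/2}$ and $\overline{\sigma^2}\sim\tfrac12\lambda$ as $\lambda\to\infty$, hence $\rho^*_{x_1}\sim\varrho/\lambda\to 0$ and $\rho^*_{x_2}\to\varrho$ (both interior to $[0,1)$), and
\[
\mathcal{V}^\infty_{x_1}(\rho^*_{x_2})\sim(\rho^*_{x_2})^2\overline{\sigma^2}\sim\tfrac12\varrho^2\lambda,\qquad \mathcal{V}^\infty_{x_1}(\rho^*_{x_1})\sim\sigma(x_1)^2\{1-\varrho^2\bar\sigma^2/\overline{\sigma^2}\}\sim\lambda^{-1}(1-\tfrac12\varrho^2).
\]
Hence the ratio is $\Omega(\varrho^2\lambda^2)$. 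Given any $\kappa\geq 1$ and $\varrho\in(0,1)$, choose $\lambda$ large enough (depending on $\kappa,\varrho$) that this limiting ratio exceeds $\kappa$; the convergence $\mathcal{V}_{\delta_{x_j}}(\rho)\to\mathcal{V}^\infty_{x_j}(\rho)$ (together with continuity in $\rho$ and uniform transfer to the arg-minima) then yields the displayed inequality for all $I$ sufficiently large.

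\textbf{Main obstacle.} The bulk of the technical work is in establishing $\mathcal{V}^\infty_x(\rho)$: specifically, (i) inverting the $M\times M$ matrix $A$ with $M\asymp s_I/k_I\to\infty$ via its rank-one-plus-diagonal structure, and (ii) concentration of leaf-composition statistics and of partner-observation empirical averages of $\sigma$ and $\sigma^2$ under iid uniform covariates. Both are standard once the structure is identified; the remaining algebra in the ratio computation is elementary.
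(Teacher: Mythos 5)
Your construction is essentially the paper's own: a heteroscedastic equicorrelated Gaussian with constant intra-cluster correlation $\varrho$ but covariate-dependent marginal scale $\sigma(\cdot)$ taking two well-separated values, leading to the same limiting quadratic $\overline{\sigma^2}\rho^2-2\varrho\varsigma_Q\bar\sigma\rho+\varsigma_{\mathrm{sq},Q}$ (the paper writes this as $(1+\eta^2)\rho^2-2\varrho\varsigma_Q(1+\eta)\rho+2\varsigma_{\mathrm{sq},Q}$ with $\sigma\in[1,\eta]$ and with $Q_1,Q_2$ uniform on disjoint intervals rather than point masses, noting in a remark that point masses also work), the same minimiser $\rho^*_Q=\varrho\varsigma_Q\bar\sigma/\overline{\sigma^2}$, and the same blow-up of the variance ratio as the scale contrast grows; your matrix structure for $A$, the per-cluster contribution to $B$, and the final algebra all check out against the paper's computation. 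The one substantive step you gloss over is the reduction of $\Var_P(T_\rho(x))$ to the conditional variance $\E_P\bigl[\mathrm{e}_{J(x)}^\top A^{-1}BA^{-1}\mathrm{e}_{J(x)}\bigr]$: one must show $\Var_P(\E_P[T_\rho(x)\given X,\text{splits}])=o(k_I^{-1})$, and this is precisely where the hypothesis $k_I=o\bigl(s_I^{2\phi/(2\phi+1)}\bigr)$ does its work (via the squared-bias bound of the paper's Lemma~\ref{lem:unbiasedness}), not merely to ensure $k_I=o(s_I)$ as your sketch suggests.
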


Essentially Theorem~\ref{thm:dist-shift-2} shows that for correlated data $\argmin_{\rho\in\Gamma}\mathcal{V}_Q(\rho)$ is a function of $Q$, and thus the optimal weights vary depending on covariate shift; an empirical realisation of this observation is given in Section~\ref{sec:sim2} (see also Figure~\ref{fig:sim2}). Moreover, the optimal clustered random forest under one covariate distribution $Q_1$ can be arbitrarily suboptimal under another covariate distribution $Q_2$. The result of Theorem~\ref{thm:dist-shift-2} holds when restricting $Q_2$ to be the marginal distribution of $X$ under $P$, in which case $\rho_{Q_2\text{-}\mathrm{opt}}$ is the asymptotic limit of weights estimated via cross-validation. Theorem~\ref{thm:dist-shift-2} also holds when restricting $Q_1$ to a point mass e.g.~when performing pointwise inference. 

The consequences listed below follow from the above result, with comparisons between standard random forests, clustered random forests (CRF), REEM (likelihood based) and MERF (likelihood based) random forests given for the same parameter choices $(s_I,k_I)$ and (with the exception of standard random forests that don't have a weight class) 
fit with identical weight classes:
\begin{enumerate}[label=(\roman*)]
    \item $Q$-shift optimised clustered random forests are optimal in the sense of Theorem~\ref{thm:minimax} 
    in the sense that their weights directly target the minimal $\mathcal{V}_Q(\cdot)$ within the user-specified weight class.
    \item Likelihood based random forests (e.g.~REEM and MERF) are suboptimal in the sense that they can select weights that see an arbitrarily larger variance than the optimal weights within the user-specified weight class, i.e.~there exists distributions $P$ and covariate shifts $Q$ for which REEM and MERF random forests perform arbitrarily worse than \emph{both} clustered random forests and even standard random forests.
    \item An approach based on estimating the weights by minimising the MSPE under the training covariate distribution (as is targeted by e.g.~cross validation) can exhibit arbitrarily suboptimal performance on a covariate shifted test set compared to~\emph{both} clustered random forests and even standard random forests.
\end{enumerate}
All the above points are studied empirically in Section~\ref{sec:sim2} (see also Figure~\ref{fig:sim2}). 

In light of Theorem~\ref{thm:dist-shift-2}, our clustered random forest that adapts to $Q$-covariate shift minimises an empirical estimator of $\mathcal{V}_Q(\cdot)$ via~\eqref{eq:rho-testMSPE} as advocated in Algorithm~\ref{alg:cdt} to estimate the optimal parameter $\rho^*_{Q\text{-}\mathrm{opt}}:=\argmin_{\rho\in\Gamma}\mathcal{V}_{Q}(\rho)$. 
The following proposition justifies this empirical estimator~\eqref{eq:rho-testMSPE}.

\begin{proposition}\label{prop:weight-estimation}
    Consider the weight estimator $\hat\rho$ that minimises the empirical $Q$-adjusted loss function~\eqref{eq:rho-testMSPE}, as in Algorithm~\ref{alg:cdt} for the weight class satisfying Assumption~\ref{weights}, alongside the additional identifiability condition that 
    $$\mathcal{V}_Q(\rho)\geq \mathcal{V}_Q(\rho^*_{Q\text{-}\mathrm{opt}})+c\,\|\rho-\rho^*_{Q\text{-}\mathrm{opt}}\|_2^2,$$ for all $\rho\in\Gamma$ with $\|\rho-\rho^*_{Q\text{-}\mathrm{opt}}\|_2\leq\Delta$ and for some $c,\Delta>0$. 
    Also suppose the minimum node size is chosen to satisfy  $k_I^{-1}n_{\mathrm{c}}=o(1)$. Then
    $$\hat\rho=\rho_{Q\text{-}\mathrm{opt}}^* + O_P\Bigl(n_{\mathrm{c}}k_I^{-1/2}+(s_I^{\mathrm{corr}})^{-(\frac{\delta}{2+\delta}\wedge\frac{\phi}{d})}\Bigr), $$
    where $\phi:=(1-\epsilon)\frac{\log((1-\alpha)^{-1})}{\log(\alpha^{-1})}\pi$. 
\end{proposition}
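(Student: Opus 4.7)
The plan is to run the standard argmin/M-estimator argument. Writing $\hat L_I(\rho)$ for the empirical $Q$-adjusted loss in~\eqref{eq:rho-testMSPE} and $L(\rho)$ for its population analogue (which agrees with $\mathcal{V}_Q(\rho)$ up to terms not depending on $\rho$), the usual argmin inequality
\begin{equation*}
L(\hat\rho)-L(\rho^*_{Q\text{-}\mathrm{opt}}) \;\leq\; 2\sup_{\rho\in\Gamma}\bigl|\hat L_I(\rho)-L(\rho)\bigr|,
\end{equation*}
together with the assumed quadratic lower bound $L(\rho)-L(\rho^*_{Q\text{-}\mathrm{opt}})\geq c\|\rho-\rho^*_{Q\text{-}\mathrm{opt}}\|_2^2$ valid on the $\Delta$-neighbourhood, reduces the task to the uniform convergence bound
\begin{equation*}
\sup_{\rho\in\Gamma}\bigl|\hat L_I(\rho)-L(\rho)\bigr| \;=\; O_P\bigl(n_{\mathrm{c}}^2 k_I^{-1}+(s_I^{\mathrm{corr}})^{-2(\delta/(2+\delta)\wedge\phi/d)}\bigr).
\end{equation*}
Preliminary consistency of $\hat\rho$ into this neighbourhood follows from the same uniform bound together with compactness of $\Gamma$ and continuity of $L$.

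To establish this uniform bound I decompose $\hat L_I(\rho)-L(\rho)$ into: (i) the error introduced by using $\tilde\varepsilon_i$ in place of the unobservable population residuals $\varepsilon_i:=Y_i-\mu(X_i)$; (ii) concentration of the resulting empirical cluster-level sums around their conditional expectations. For (i), I write $\tilde\varepsilon_{ij}-\varepsilon_{ij}=\mu(X_{ij})-\hat\mu^{\mathrm{OLS}}_{J(X_{ij})}$ and split this into a piecewise-constant approximation bias bounded under~\ref{lip} by $L_\mu$ times the leaf diameter (itself of order $(k_I/(n_{\mathrm{c}} s_I^{\mathrm{corr}}))^{\phi/d}$ under~\ref{regularity}--\ref{feature} and~\ref{bounded-density}), together with a leaf-mean sampling error of order $k_I^{-1/2}$ from the bounded-covariance assumption~\ref{bounded-sigma}. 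Propagating each of these errors through the $n_{\mathrm{c}}$-dimensional bilinear forms $\chi_i^\top W_i(\rho)\tilde\varepsilon_i\tilde\varepsilon_i^\top W_i(\rho)\chi_i$ using the weight-matrix bounds of~\ref{weights} produces the $n_{\mathrm{c}}^2 k_I^{-1}$ summand of the squared-rate bound on the loss.

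For (ii), the honesty property~\ref{honesty} makes $\cI_{\mathrm{corr}}$ independent of the splits and hence of the leaves $L_1,\ldots,L_M$, so conditional on the tree the central sums
\begin{equation*}
\frac{1}{s_I^{\mathrm{corr}}}\sum_{i\in\cI_{\mathrm{corr}}}\chi_i^\top W_i(\rho)\chi_i,
\qquad
\frac{1}{s_I^{\mathrm{corr}}}\sum_{i\in\cI_{\mathrm{corr}}}\chi_i^\top W_i(\rho)\varepsilon_i\varepsilon_i^\top W_i(\rho)\chi_i,
\end{equation*}
are cluster-level i.i.d.\ averages. A Markov-type maximal inequality using the $(2+\delta)$-moment assumption~\ref{4+delta} entry-wise, together with a chaining step in $\rho$ whose Lipschitz entropy is controlled by the derivative bound in~\ref{weights}, yields uniform convergence of these sums at rate $(s_I^{\mathrm{corr}})^{-\delta/(2+\delta)}$; the intrinsic leaf-partition granularity contributes the $(s_I^{\mathrm{corr}})^{-\phi/d}$ summand through the bias of the piecewise-constant leaf-mean plug-in. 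Local Lipschitzness of $(A,B)\mapsto\mathrm{tr}\{D A^{-1} B A^{-1}\}$ in a neighbourhood of the population values, whose eigenvalue lower bound is implied by~\ref{weights} via a Gershgorin argument applied to $\zeta(\cdot)$, transfers these entrywise bounds to $\hat L_I-L$.

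The principal technical obstacle is keeping the $n_{\mathrm{c}}$-dependence sharp when propagating the leaf-mean OLS sampling error through the $n_{\mathrm{c}}$-dimensional weighted bilinear forms, since a naive operator-norm bound picks up an avoidable extra factor of $n_{\mathrm{c}}^{1/2}$. The way round this will be to exploit the $\|\cdot\|_1$-bound and diagonal dominance in~\ref{weights} to bound cross terms entrywise, and to use~\ref{honesty} to decouple the leaf-mean plug-in from the indicator matrices $\chi_i$ when taking conditional expectations, suppressing cross-contributions that would otherwise scale with $n_{\mathrm{c}}$.
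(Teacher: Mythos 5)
Your overall architecture (argmin inequality plus the quadratic identifiability condition, reduction to uniform convergence of the empirical loss, a decomposition into residual-substitution error and cluster-level concentration handled via the $(2+\delta)$-moment condition) is close in spirit to the paper's proof, which performs essentially the same decomposition, bounds the residual-substitution error via a pilot-estimator error $\hat{e}$ with $\sup_x\hat{e}^2(x)=o_P(k_I^{-1}+s^{-2\phi/d})$, and controls the stochastic term by the von Bahr--Esseen inequality. However, there is a genuine gap in the rate bookkeeping. Your crude global argmin inequality forces you to establish $\sup_{\rho\in\Gamma}|\hat L_I(\rho)-L(\rho)|=O_P\bigl(n_{\mathrm{c}}^2k_I^{-1}+(s_I^{\mathrm{corr}})^{-2(\delta/(2+\delta)\wedge\phi/d)}\bigr)$, i.e.\ the \emph{square} of the target rate, and this is not achievable. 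Writing $\tilde\varepsilon_i=\varepsilon_i-\hat{e}(X_i)$, the difference $\tilde\varepsilon_i\tilde\varepsilon_i^\top-\varepsilon_i\varepsilon_i^\top$ contains the cross terms $-\hat{e}(X_i)\varepsilon_i^\top-\varepsilon_i\hat{e}(X_i)^\top$, which are \emph{linear}, not quadratic, in the pilot error; propagated through the bilinear forms they contribute a term of order $n_{\mathrm{c}}\sup_x|\hat{e}(x)|\asymp n_{\mathrm{c}}k_I^{-1/2}$ to the loss deviation (this is exactly the bound the paper obtains, and since $\hat{\mu}_{\mathrm{p.c.}}$ is fit on the same $\cI_{\mathrm{corr}}$ data these cross terms do not average away). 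Likewise, the von Bahr--Esseen/Markov argument under~\ref{4+delta} yields concentration of $k\sum_i(\tilde\omega_i(x)^\top\varepsilon_i)^2$ at rate $(s_I^{\mathrm{corr}})^{-\delta/(2+\delta)}$, not its square. So the uniform loss deviation is genuinely of order $n_{\mathrm{c}}k_I^{-1/2}+(s_I^{\mathrm{corr}})^{-\delta/(2+\delta)}$, and your global argument would only deliver $\hat\rho-\rho^*_{Q\text{-}\mathrm{opt}}=O_P\bigl(n_{\mathrm{c}}^{1/2}k_I^{-1/4}+(s_I^{\mathrm{corr}})^{-\frac{1}{2}(\delta/(2+\delta)\wedge\phi/d)}\bigr)$, weaker than the claim.

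The missing idea is localization: the paper establishes the (non-squared) uniform bound and then invokes \citet[Theorem~3.2.5]{vandervaart-wellner}, whose modulus-of-continuity formulation exploits that the increments $(\hat L_I-L)(\rho)-(\hat L_I-L)(\rho^*_{Q\text{-}\mathrm{opt}})$ over a $\delta$-neighbourhood of $\rho^*_{Q\text{-}\mathrm{opt}}$ are of order $\delta$ times the rate (using the derivative bound in~\ref{weights} to make the centered process Lipschitz in $\rho$). Combined with the quadratic curvature this gives $\delta^2\lesssim\delta\cdot r_I$, hence $\|\hat\rho-\rho^*_{Q\text{-}\mathrm{opt}}\|\lesssim r_I$ without squaring the uniform bound. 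If you replace your global sup reduction by this localized increment bound, the rest of your outline goes through.
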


The correlation weighting estimation scheme~\eqref{eq:rho-testMSPE} therefore satisfies the rate requirement~\ref{weights-rates} for Theorem~\ref{thm:dist-shift-2} for example if the subsampling rate $s_I$ is taken to be polynomial in $I$, cluster size $n_{\mathrm{c}}$ is finite and $k_I\sim(\log I)^{2d+1}$. Proposition~\ref{prop:weight-estimation} also accommodates diverging cluster sizes at sufficiently slow polynomial rates by taking the asymptotic regime of $k_I$ to diverge sufficiently fast. 
 
\subsubsection{Distinction between correlated and uncorrelated data when adjusting for covariate shift}

Our result of Theorem~\ref{thm:dist-shift-2} directly contrasts uncorrelated data settings e.g.~when the data are independent. In misspecified parametric mean models with independent observations, it is well known that improvements can be achieved (under covariate shift from $Q_1$ to $Q_2$) by the re-weighting scheme given by $dQ_2/dQ_1$~\citep{shimodaira, covshift-book}, with no improvements under model well specification~\citep{ge}. 
Note that for a random forest however provided the conditional mean is indeed Lipschitz smooth, the conditional mean is not model misspecified, and consequently the reweighting $dQ_2/dQ_1$ applied to a standard random forest effectively leaves the random forest predictor and hence covariate shifted MSPE unchanged. For a single decision tree this can be seen by noticing that the leaves are sufficiently localised, with diameters converging to zero, and so the weights $dQ_2/dQ_1$ are effectively constant on the local region on which we do estimation, giving asymptotically identical estimation to standard decision trees. 

\subsection{Optimality over subsampling rates}

We now turn our attention to the dependence of clustered random forests on the other hyperparameters, most notably those of $(s_I,k_I)$. The results of this section give rise to analogous immediate consequences for hyperparameter selection in standard random forests, which may be of independent interest.

Recall that by Theorem~\ref{thm:minimax} we obtain that for sufficiently large $I$,
\begin{equation*}
    \E_P\biggl[\int_\cX\bigl(\hat{\mu}_I^{\mathrm{MC}}(x)-\mu(x)\bigr)^2dQ(x)\biggr]
    \leq
    C_{\mathrm{bias}}^2\biggl(\frac{2k_I}{n_{\mathrm{c}}s_I}\biggr)^{\phi/d}
    +
    \frac{s_I}{k_I}\cdot\frac{\mathcal{V}_Q(\rho^*)}{I},
\end{equation*}
for $C_{\mathrm{bias}}:=\frac{6C_WL_\mu d^{1/2}\nu}{c_W}$, 
which is minimised when 
\begin{equation}\label{eq:subsampling-rate-opt}
    k_I^{-1}s_I=2\Biggl(\frac{C_{\mathrm{bias}}^2\phi I}{n_{\mathrm{c}}^{2\phi/d }\mathcal{V}_Q(\rho^*)}\Biggr)^{\frac{d}{2\phi +d}},
    \qquad
    \phi  := (1-\epsilon)\pi\,\frac{\log((1-\alpha)^{-1})}{\log(\alpha^{-1})}
\end{equation}
whence
\begin{equation}\label{eq:rate}
    \E_P\biggl[\int_\cX\bigl(\hat{\mu}_I^{\mathrm{MC}}(x)-\mu(x)\bigr)^2dQ(x)\biggr]
    \leq C_{\mathrm{bias}}^{\frac{2d}{2\phi +d}}\biggl(\frac{2\phi+d}{d}\biggr)\biggl(\frac{\mathcal{V}_Q(\rho^*)}{\phi N}\biggr)^{\frac{2\phi }{2\phi +d}},
\end{equation}
where $N=n_{\mathrm{c}}I$ constitutes the number of individual-level observations. 

Note therefore the system of hyperparameters is essentially overparametrised with regards to $Q$-shifted MSPE; we only require the ratio $k_I^{-1}s_I$ to diverge as in~\eqref{eq:subsampling-rate-opt} to obtain the above rate of convergence for $Q$-covariate shifted MSPE, thus giving a `valley' of optimal hyperparameters along $s_I=ck_I$ for an optimal constant $c>0$ (see Figure~\ref{fig:sk}). 
This allows us to accommodate forest splitting hyperparameter setups where~$k_I$ diverges sufficiently slowly without incurring an asymptotic cost to MSPE. 
If for example $s_I=3^{-1}I^\beta$ and $k_I=I^\kappa$ for constants $\beta\in(0,1)$, $\kappa>0$ then to achieve the rate of~\eqref{eq:rate} we require $\beta=\kappa+\phi$, implying the restriction on $k_I$ divergence via $\kappa<1-\phi$. 
In practice will grow particularly deep trees, such that we consider either $k_I$ as fixed and finite, or $k_I$ diverging at a poly-logarithmic rate, and in all our empirical experiments we take $k_I=10$. 

Note in particular~\eqref{eq:rate} allows for minimax rates under certain splitting rules.
\begin{remark}[Approaching minimax rates]\label{remark:minimax}
    Taking Theorem~\ref{thm:minimax} with splitting procedure satisfying~\ref{ass:tree} with $(\pi,\alpha)=(1,0.5)$, and $k_I^{-1}s_I$ as in~\eqref{eq:subsampling-rate-opt}, it follows that for sufficiently large $I$,
    \begin{equation*}
        \E_P\Big[\big(\hat{\mu}_I^{\mathrm{MC}}(x)-\mu(x)\big)^2\Big]
        \leq C_{\mathrm{bias}}^{\frac{2d}{2(1-\epsilon)+d}}\biggl(\frac{2(1-\epsilon)+d}{d}\biggr)\biggl(\frac{\mathcal{V}_Q(\rho^*)}{\phi N}\biggr)^{\frac{2}{2+(1-\epsilon)^{-1}d}}.
    \end{equation*}
    Taking $\epsilon\in(0,1)$ arbitrarily small, $\hat\mu_I^{\mathrm{MC}}$ is minimax optimal with respect to pointwise mean squared error. By analogous arguments this minimaxity also hold for the $Q$-integrated MSPE. Note the rate in the $(s_I,k_I)$ regime~\eqref{eq:subsampling-rate-opt} is achieved by taking $(s_I,k_I)\sim(I^\beta,1)$ where $\beta=\frac{d}{2(1-\epsilon)^{-1}+d}$.
\end{remark}
By taking Algorithm~\ref{alg:crf} with unclustered data and $\hat\rho=\rho=0$ the faster rates of Theorem~\ref{thm:minimax} and minimax rates of Remark~\ref{remark:minimax} apply also to standard random forests for unclustered data. 
We remark that, whilst it is informative and of theoretical interest that there exists a random forest mechanism in our setup for which we obtain minimax rates, one would likely expect in practice other splitting rules to outperform the splitting rules utilized in Remark~\ref{remark:minimax}. 
We also observe the curse of dimensionality in the dependence of $d$ in the above rate, as we make merely Lipschitz smoothness assumptions on the $d$-dimensional conditional mean. Throughout we have focused primarily on the case of low to moderate dimensional $d$. Studying the performance of random forests, and by virtue clustered random forests, in the high dimensional case with e.g.~further sparsity assumptions on $\mu$ would be an interesting direction of further work, and we provide some brief motivating empirical results in Section~\ref{sec:sim3}.

\begin{figure}
    \centering
    \includegraphics[width=0.55\linewidth]{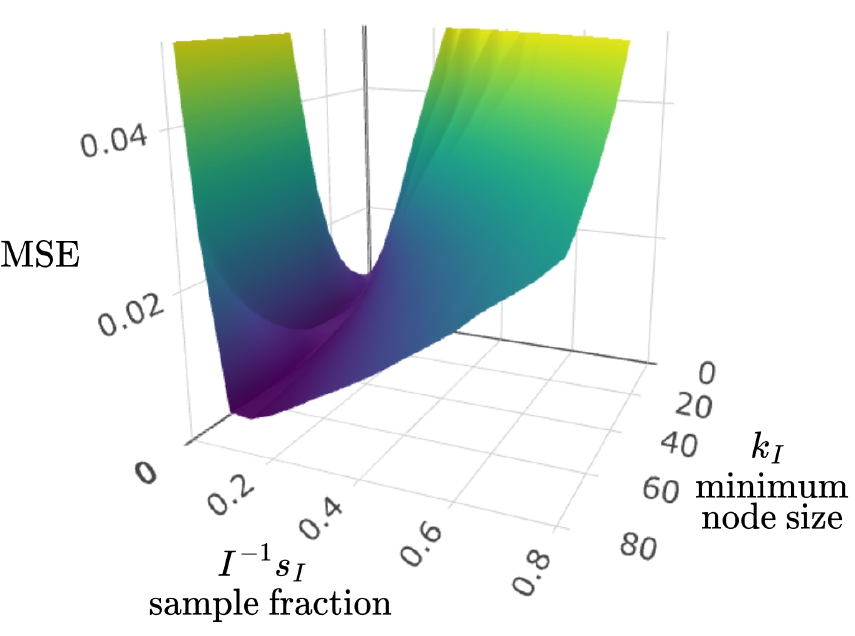}
    \caption{Pointwise MSE as a function of the minimum node size $k_I$ and subsampling fraction $I^{-1}s_I$ for the simulation of Figure~\ref{fig:sim1} with $I=10^4$. A valley runs along $k_I^{-1}s_I\approx30$. See Appendix~\ref{appsec:introsim} for further details.}
    \label{fig:sk}
\end{figure}

The above rate~\eqref{eq:rate} motivates a practical strategy to select a reasonable subsampling rate to minimum node size ratio in practice by using~\eqref{eq:subsampling-rate-opt} as a plug-in estimator given a `bootstrap of little bags' estimator for the quantity~$\mathcal{V}_Q(\rho^*)$. 
As with standard random forests, cross-validation could also be used to select $(s_I,k_I)$, with a reasonable parametrization for this being $s_I=3^{-1}I^\beta$. 

\subsection{Inference}\label{sec:inference}

The subsampling rate~\eqref{eq:subsampling-rate-opt} used to achieve minimax optimality in Theorem~\ref{thm:minimax} corresponds to a random forest that balances the bounds obtained for the squared bias and variance. For larger subsampling rates we obtain variance-dominating random forests, for which we can show uniform (over $\cP$) asymptotic Gaussianity results. For the following result we take an asymptotic regime with subsampling size $s_I=3^{-1}I^\beta$ diverging at a subsampling rate~$\beta$. 

\begin{theorem}[Asymptotic Normality of variance-dominating CRFs]\label{thm:normality}
    Suppose Assumption~\ref{ass:data} holds, and let $\hat{\mu}^{\mathrm{MC}}_I$ denote a clustered random forest estimator~\eqref{eq:MC} with tree splitting procedure satisfying~\ref{ass:tree} with $k = O(\log I)$, and $s_I=3^{-1}I^\beta$ with subsampling rate satisfying $\beta>\beta_{\min}:=1-\Big(1+\frac{d}{2\pi}\frac{\log(\alpha^{-1})}{\log((1-\alpha)^{-1})}\Big)^{-1}$. Suppose also the weighting scheme $(W_i(\rho))_{\rho\in\Gamma}$ satisfies~\ref{ass:weights}.
    Now fix an $x\in\cX$. Then there exists a sequence of positive deterministic functions $(\sigma_{P,I}(x;\rho^*))_{I\in\mathbb{N}}$ with 
    $\sigma_{P,I}^2(x;\rho^*)=O_\cP\big(I^{-(1-\beta)}\big)$ and
    \begin{equation}\label{eq:pointwise-normality}
        \lim_{I\to\infty}\sup_{P\in\mathcal{P}}\sup_{t\in\R}\big|\PP_P\big(\sigma_{P,I}(x;\rho^*)^{-1}\big(\hat{\mu}_I^{\mathrm{MC}}(x)-\mu(x)\big)\leq t\big)-\Phi(t)\big|
        =
        0.
    \end{equation}
\end{theorem}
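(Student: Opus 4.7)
The plan is to analyse $\hat\mu_I^{\mathrm{MC}}(x)$ as a complete cluster-level U-statistic of order $\bar s_I = 3 s_I$ with symmetric kernel
$$h_{P,I}(Z_{i_1},\ldots,Z_{i_{\bar s_I}}) := \E_\xi\bigl[T(x,\xi;Z_{i_1},\ldots,Z_{i_{\bar s_I}})\bigr],$$
and establish Gaussianity via its Hájek projection, with variance domination controlling both the higher-order projection remainder and the bias. First, split
$$\hat\mu_I^{\mathrm{MC}}(x)-\mu(x) = \bigl(U_I - \E_P U_I\bigr) + \bigl(\E_P U_I - \mu(x)\bigr).$$
Theorem~\ref{thm:minimax} bounds the second (bias) term by $O\bigl((k_I/s_I)^{\phi/d}\bigr)$ with $\phi=(1-\epsilon)\pi\log((1-\alpha)^{-1})/\log(\alpha^{-1})$, and the assumption $\beta>\beta_{\min}$ is designed precisely so that, once $\sigma_{P,I}^2(x;\rho^*)\asymp s_I/(k_I I)\asymp I^{-(1-\beta)}$ (up to logarithmic factors in $k_I$) has been verified, this bias is $o_\cP(\sigma_{P,I}(x;\rho^*))$ uniformly in $P\in\cP$.

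For the centred stochastic part $U_I-\E_P U_I$, apply Hoeffding's decomposition. The first-order (Hájek) projection is
$$\hat U_I := \frac{\bar s_I}{I}\sum_{i=1}^I\bigl(h_{P,I}^{(1)}(Z_i) - \E_P h_{P,I}^{(1)}(Z_i)\bigr), \qquad h_{P,I}^{(1)}(z):=\E_P h_{P,I}(z,Z_2,\ldots,Z_{\bar s_I}),$$
a normalised sum of $I$ i.i.d.\ cluster-level terms. The core analytic task reduces to (i) an upper bound on the higher-order projection variances $\zeta_k$ (for $k\geq 2$) sufficient to yield $\Var_P(U_I-\hat U_I)=o_\cP(\Var_P(\hat U_I))$ uniformly in $\cP$, and (ii) a non-degeneracy lower bound $\zeta_1 := \Var_P\bigl(h_{P,I}^{(1)}(Z_1)\bigr)\gtrsim k_I^{-1}\bar s_I^{-1}(\log\bar s_I)^{-d}$. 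The upper bound follows from a uniform $L^\infty$ bound on $h_{P,I}$ built from~\ref{bounded-sigma} and~\ref{weights}; the lower bound adapts the non-degeneracy argument of \citet{wager} to the clustered weighted setting, using~\ref{honesty}, \ref{regularity}, \ref{feature} and the Lipschitz smoothness~\ref{lip} to show that, with positive conditional probability bounded below uniformly in $P\in\cP$, a single cluster $Z_i$ materially moves the prediction at $x$ by an amount of order $k_I^{-1}$.

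Having reduced to the Hájek projection, apply a uniform Lyapunov CLT together with a Berry--Esseen bound: Assumption~\ref{4+delta} applied to the cluster-level summand $h_{P,I}^{(1)}(Z_i)-\E_P h_{P,I}^{(1)}$ (rewritten as a weighted linear contrast of the residuals $Y_i-\mu(X_i)$) provides the requisite $(2+\delta)$-th absolute moment bound, which combined with the $\zeta_1$ lower bound yields Lyapunov's condition uniformly over $\cP$ and hence the uniform Kolmogorov-distance convergence in~\eqref{eq:pointwise-normality}. To absorb the plug-in of $\hat\rho$ for $\rho^*$: under~\ref{weights} the maps $\rho\mapsto W_i(\rho)$ and $\rho\mapsto\bigl(\sum_i \chi_i^\top W_i(\rho)\chi_i\bigr)^{-1}$ are Lipschitz on $\Gamma$ with constants uniform in $i$, so matrix perturbation gives $|T_{\hat\rho}(x)-T_{\rho^*}(x)|=O_\cP\bigl(|\hat\rho-\rho^*|\cdot k_I^{-1/2}\bigr)$ per tree, which by~\ref{weights-rates} is $o_\cP\bigl((\log I)^{-d} k_I^{-1/2}\bigr)$ and thus asymptotically negligible relative to $\sigma_{P,I}(x;\rho^*)$. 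The chief technical obstacle is the non-degeneracy bound~(ii): unlike for the standard honest forest, the prediction at $x$ depends on \emph{every} observation in the clusters intersecting any leaf along the path to $L(x)$ through the cluster-wise weights $W_i(\hat\rho)$, so the usual localisation argument must be supplemented by a perturbation analysis ensuring that one cluster's contribution cannot be cancelled by the reweighting of its correlated intra-cluster companions.
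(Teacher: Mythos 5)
Your overall architecture matches the paper's: the paper also decomposes $\sigma_{P,I}^{-1}(\hat\mu^{\mathrm{MC}}_I(x)-\mu(x))$ into a H\'ajek-projection term (handled by a uniform Lyapunov CLT using~\ref{4+delta}), a higher-order Efron--Stein/Hoeffding remainder (bounded by $\tfrac{s_I^2}{I^2}\Var_P(T(x))$), a bias term killed by $\beta>\beta_{\min}$ via Lemma~\ref{lem:unbiasedness}, and a term accounting for the plug-in of $\hat\rho$; the non-degeneracy lower bound on $\Var_P(\E_P[T(x)\given Z_i])$ via the $k$-PNN argument is likewise the same route. However, there is one genuine gap: your treatment of the $\hat\rho\to\rho^*$ substitution does not close.

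You bound the per-tree perturbation by $|T_{\hat\rho}(x)-T_{\rho^*}(x)|=O_\cP(|\hat\rho-\rho^*|\,k_I^{-1/2})=o_\cP((\log I)^{-d}k_I^{-1/2})$ and declare this negligible relative to $\sigma_{P,I}(x;\rho^*)$. But $\sigma_{P,I}^2(x;\rho^*)\asymp I^{-(1-\beta)}$ up to $k_I$ and logarithmic factors, so $\sigma_{P,I}(x;\rho^*)$ decays \emph{polynomially} in $I$, whereas your perturbation bound decays only \emph{logarithmically} (Assumption~\ref{weights-rates} guarantees no more than $\hat\rho-\rho^*=o_\cP((\log I)^{-d})$). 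A deterministic per-tree bound survives averaging over trees unchanged, so $(\log I)^{-d}k_I^{-1/2}$ is emphatically not $o(I^{-(1-\beta)/2})$ and the argument fails. The fix, as in the paper's Part~4, is to exploit the statistical structure of the difference rather than its magnitude: write $\hat\mu_{\hat\rho}(x)-\hat\mu_{\rho^*}(x)$ (or its H\'ajek projection) as the sum of (i) a conditional-mean part given the covariates, which is controlled by the same leaf-diameter localisation as the bias and hence is $O_\cP(I^{-\phi\beta/d})$, polynomially small and absorbed by $\beta>\beta_{\min}$; and (ii) a conditionally mean-zero fluctuation that is an average over independent clusters, whose variance therefore gains the U-statistic factor $s_I/I$ over the single-tree variance. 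One then obtains
\begin{equation*}
  \E_P\Bigl[\bigl(\sigma_{P,I}^{-1}(x;\rho^*)\{\hat\mu_{\hat\rho}(x)-\hat\mu_{\rho^*}(x)-\text{(mean part)}\}\bigr)^2\Bigr]
  \lesssim (\log I)^{d}\,(\hat\rho-\rho^*)^2 = o_\cP(1),
\end{equation*}
which is where the $(\log I)^{-d}$ rate in~\ref{weights-rates} is actually consumed. Without this mean-zero/variance-reduction step your plug-in argument would force a polynomial rate on $\hat\rho-\rho^*$, which is strictly stronger than what the theorem assumes.
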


Theorem~\ref{thm:normality} shows that the variance--dominated clustered random forests are asymptotically unbiased and Gaussian pointwise over $\cX$ and uniformly in probability for the class of distributions $\cP$ as defined by Assumption~\ref{ass:data}. 
Moreover, a similar result holds for estimating the integrated mean squared error over a fixed covariate distribution $Q$ with $\supp Q\subseteq\cX$, in that there exists a sequence of deterministic positive functions $(\sigma^{(Q)}_I(\rho^*))_{I\in\mathbb{N}}$ with $\sigma_I^{(Q)\,2}(\rho^*)=O_\cP\big(I^{-(1-\beta)}\big)$ and 
    \begin{equation*}
        \lim_{I\to\infty}\sup_{P\in\mathcal{P}}\sup_{t\in\R}\bigg|\PP_P\bigg(\sigma^{(Q)}_I(\rho^*)^{-1}\Big(\int_{\cX}\hat{\mu}_I^{\mathrm{MC}}(x)d{Q}(x)-\int_{\cX}\mu(x)d{Q}(x)\Big)\leq t\bigg)-\Phi(t)\bigg|
        =
        0.
    \end{equation*}
An empirical realisation of this result is seen in Appendix~\ref{appsec:sim2}. 
Also note that the subsampling rates we accommodate of $\beta>1-\bigl(1+\frac{d}{2\pi}\frac{\log(\alpha^{-1})}{\log((1-\alpha)^{-1})}\bigr)^{-1}$ allow for smaller subsampling rates than e.g.~\citet{wager,GRF,naf} who require~$\beta>1-\bigl(1+\frac{d}{\pi}\bigl\{\frac{\log(\alpha^{-1})}{\log((1-\alpha)^{-1})}\vee\frac{\log5}{\log(5/4)}\bigr\}\bigr)^{-1}$. As an example, in the case $(d,\pi,\alpha)=(2,1,0.5)$ we may take $\beta>0.5$, whereas the results of the aforementioned references require~$\beta>0.93$. This wider range of permissible subsampling rates allows for us to obtain confidence intervals whose widths converge at faster rates too. As our non-i.i.d.~data setting collapses to i.i.d.~data in special cases, these strengthened subsampling rates can also be applied in this simpler setting.

Regarding testing based on the pointwise mean function, note in particular that, with a sequence of estimators $(\hat{V}_I(x))_{I\in\mathbb{N}}$ of the variances of the sequence of clustered random forest estimators $(\hat{\mu}_I(x))_{I\in\mathbb{N}}$ we can construct asymptotically valid confidence intervals for $\mu(x)$ using~\eqref{eq:pointwise-normality}. To construct such an estimator a number of bootstrap aggregation procedures~\citep{bootstrap} have been proposed for standard honest random forests, and that naturally extend to clustered settings. These include the infinitesimal jackknife~\citep{efron, wager} the bias corrected infinitesimal jackknife~\citep{inf-jack}, and `bootstrap of little bags' estimators~\citep{sexton, GRF, naf}. As a benchmark, in the construction of Algorithm~\ref{alg:crf}, in addition to the numerical results of Section~\ref{sec:numericals}, we consider the latter `bootstrap of little bags' estimator $\hat{V}_I(x)$ of the variance, and obtain nominal coverage of the $\tilde{\alpha}$-level confidence interval $\hat{C}_{\tilde{\alpha}}(x)$ for $\mu(x)$ for a fixed $x\in\cX$ and $P\in\cP$ (as in Algorithm~\ref{alg:crf}) i.e.
\begin{equation*}
    \lim_{I\to\infty}\big|\PP_P\big(\mu(x)\in\hat{C}_{\tilde{\alpha}}(x)\big)-(1-\tilde{\alpha})\big|=0,
\end{equation*}
following from Theorem~\ref{thm:normality} and the consistency of the `bootstrap of little bags' estimator~\citep[Section 4.1]{GRF}. Note that as one can cast pointwise inference as a covariate distribution shift problem, from the covariate distribution of the training data to a point mass at the point of interest, the optimality results of clustered random forests, and the corresponding suboptimalities of pre-existing alternative methods, under covariate shift (Section~\ref{sec:covariate-shift}) directly apply in this context with regards to obtaining tight confidence intervals for functionals of the conditional mean.

\subsection{Approximately linear time computation of clustered random forests}\label{sec:linear-time}

A key practical advantage of standard random forests is their decision trees can be trained in linear time $O(n_{\mathrm{c}}s_I)$ with the number of observations used for splitting and evaluation in the tree. Using the CART criterion for splitting requires $O(n_{\mathrm{c}}s_I)$ operations. For the evaluation step, using the notation of Algorithm~\ref{alg:cdt}, we recover ordinary decision trees by setting $\hat\rho=\rho=0$. In this case, the matrix in the evaluation step simplifies to
\(
\sum_i \chi_i^\top W_i(0)\chi_i \;=\; \sum_i \chi_i^\top \chi_i,
\)
which is diagonal and therefore can be inverted in $O(M)$ operations. Since $M$ scales like $n_{\mathrm{c}}s_I/k_I$, a standard decision tree is fit in linear time $O(n_{\mathrm{c}}s_I)$ with respect to the number of data points. 

For clustered decision trees (Algorithm~\ref{alg:cdt}) the evaluation step is more involved. 
For general $\rho \neq 0$ the matrix
\(
\sum_i \chi_i^\top W_i(\rho)\chi_i \in \mathbb{R}^{M \times M}
\)
is typically dense, and a naive inversion would require $O(M^3)$ operations—prohibitively expensive as the number of leaves grows with sample size. Nonetheless, although this matrix is not sparse, certain choices of working-correlation weights yield structure that enables efficient computation via conjugate gradient descent. In particular, for the equicorrelated and AR$(1)$ weight classes used in our running examples, we show that the fitted values
\(
\bigl(\sum_i \chi_i^\top W_i(\rho)\chi_i\bigr)^{-1}
\bigl(\sum_i \chi_i^\top W_i(\rho)Y_i\bigr)
\)
can still be computed in linear time through a lightweight application of conjugate gradient descent.

\begin{proposition}[Approximate linear time clustered random forest fitting]\label{prop:linear-time}
    Suppose $W_i(\rho)$ takes the equicorrelated or $\mathrm{AR}(1)$ working correlation weight structure. Then:
\begin{enumerate}[label=(\roman*)]
    \item {\bf Evaluation complexity:} There exists a computational strategy to fit a clustered decision tree with fixed $\rho$ that computes fitted values to an $\epsilon$-accuracy in Euclidean norm in $O(n_{\mathrm{c}}s_I\log(\epsilon^{-1}))$ operations.\label{item:comp-1}
    \item 
    {\bf Correlation weight estimation complexity:} For a given covariate distribution $Q$ and tree splits giving rise to leaves $(L_m:m\in[M])$ define $c_Q=|\{m\in[M]:L_m\cap\,\mathrm{supp}\,Q\neq\emptyset\}|$. Then there exists a computational strategy that estimates the optimal correlation parameter via~\eqref{eq:rho-testMSPE} such that Algorithm~\ref{alg:cdt} requires $O(c_Qn_{\mathrm{c}}s_I^{\mathrm{corr}})$ operations.
    \label{item:comp-2}
\end{enumerate}
\end{proposition}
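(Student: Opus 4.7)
The plan is to reduce both claims to the analysis of conjugate-gradient (CG) solves of linear systems with coefficient matrix $B(\rho):=\sum_i \chi_i^\top W_i(\rho)\chi_i\in\R^{M\times M}$. Two ingredients will then suffice: (a) matrix-vector products with $B(\rho)$ at cost $O(n_{\mathrm{c}}s_I)$ (or $O(n_{\mathrm{c}}s_I^{\mathrm{corr}})$), and (b) a uniform condition-number bound $\kappa(B(\rho))=O(1)$ independent of the tree, $I$ and $\rho$. Together these imply that each CG solve returns an $\epsilon$-accurate solution in $O(n_{\mathrm{c}}s_I\log(\epsilon^{-1}))$ operations.

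For (a), I would observe that for both weight classes the action $u\mapsto W_i(\rho)u$ costs $O(n_i)$. In the equicorrelated case, Sherman-Morrison applied to the inverse yields $W_i(\rho)=\alpha(\rho)\mathrm{I}_{n_i}+\beta(\rho)\mathbf{1}_{n_i}\mathbf{1}_{n_i}^\top$ for scalars $\alpha,\beta$ depending only on $(\rho,n_i)$ and computable in $O(1)$. In the AR$(1)$ case, $W_i(\rho)$ is tridiagonal with explicit entries. Since $\chi_i$ is a sparse $n_i\times M$ selector with exactly $n_i$ nonzero entries, computing $\chi_i^\top W_i(\rho)\chi_i\,v$ for any $v\in\R^M$ then costs $O(n_i)$, and aggregating across clusters a single product with $B(\rho)$ costs $O(n_{\mathrm{c}}s_I)$. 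For (b), Assumption~\ref{weights} gives $\Lambda_{\min}(W_i(\rho))\geq c_W$ via the Gershgorin circle theorem applied to $\zeta(W_i(\rho))\geq c_W$, while $\Lambda_{\max}(W_i(\rho))\leq\|W_i(\rho)\|_1\leq C_W$ by symmetry. The matrix $\sum_{i\in\cI_{\mathrm{eval}}}\chi_i^\top\chi_i$ is diagonal with $m$-th entry equal to the count of observations in leaf $L_m$, which Assumption~\ref{regularity} forces into $[k_I,2k_I]$. Combining, $c_W k_I\,\mathrm{I}_M\preceq B(\rho)\preceq 2C_W k_I\,\mathrm{I}_M$, giving $\kappa(B(\rho))\leq 2C_W/c_W$. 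Standard CG theory then returns an $\epsilon$-accurate fitted-value vector in $O(\log(\epsilon^{-1}))$ iterations, establishing~\ref{item:comp-1}.

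For~\ref{item:comp-2}, I would expand the loss~\eqref{eq:rho-testMSPE} as $L(\rho)=\sum_m Q(L_m)\,\mathrm{e}_m^\top B(\rho)^{-1}C(\rho)B(\rho)^{-1}\mathrm{e}_m$, where $C(\rho)=\sum_{i\in\cI_{\mathrm{corr}}}w_i(\rho)w_i(\rho)^\top$ with $w_i(\rho):=\chi_i^\top W_i(\rho)\tilde\varepsilon_i\in\R^M$. Only the $c_Q$ leaves meeting $\supp Q$ contribute. Each $w_i(\rho)$ is precomputed in $O(n_i)$. For each such $m$, CG produces $v_m:=B(\rho)^{-1}\mathrm{e}_m$ in $O(n_{\mathrm{c}}s_I^{\mathrm{corr}})$ operations by the ingredients already established; the $m$-th summand then equals $Q(L_m)\sum_i(w_i(\rho)^\top v_m)^2$, evaluable in $O(\sum_i n_i+c_Q)$. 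Aggregating yields $O(c_Q n_{\mathrm{c}}s_I^{\mathrm{corr}})$ operations per evaluation of $L(\rho)$; minimising the univariate $L$ over the low-dimensional compact $\Gamma$ by a bracketed search (e.g.\ Brent's method) to fixed precision runs in $O(1)$ outer iterations and so does not inflate the rate.

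The hard part will be the uniform condition-number bound in step (b): although $B(\rho)$ is in general a dense $M\times M$ matrix that would naively require $O(M^3)$ operations to invert, its spectrum must be pinned inside $[c_W k_I,\,2C_W k_I]$ using both the weight-class regularity of Assumption~\ref{weights} and the leaf-balance constraint of Assumption~\ref{regularity}; absent either, CG iteration counts could depend badly on $M$ and so on $s_I$. All $\epsilon$-dependent and polylogarithmic precision factors are absorbed into the stated complexities, as is standard for CG-based computational claims.
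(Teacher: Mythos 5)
Your proposal is correct and follows essentially the same route as the paper: conjugate gradient solves with $O(\log(\epsilon^{-1}))$ iterations guaranteed by the condition-number bound $\kappa(\chi^\top W(\rho)\chi)\leq C_W(2k_I-1)/(c_Wk_I)\leq 2C_W/c_W$ (from \ref{weights} and the leaf-size constraint in \ref{regularity}), combined with $O(n_{\mathrm{c}}s_I)$ matrix--vector products exploiting the diagonal-plus-sparse-rank-one structure in the equicorrelated case and tridiagonality in the AR$(1)$ case, and $c_Q$ separate solves $B(\rho)^{-1}\mathrm{e}_m$ for part~\ref{item:comp-2}. If anything, your treatment of part~\ref{item:comp-2} (the explicit expansion of the loss and the $O(1)$ outer optimisation over $\Gamma$) is slightly more detailed than the paper's, which dispatches that case with ``analogous arguments.''
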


Case~\ref{item:comp-1} motivates a linear time fitting procedure for a fixed correlation parameter $\rho$, e.g.~if $\rho$ were to be treated as a hyperparameter for tuning alongside the standard random forest hyperparameters, which could be estimated in non-covariate shift settings e.g.~by cross-validation. 
Note in the case of~\ref{item:comp-2} when considering a covariate shift to a point mass - e.g.~targeting optimal pointwise MSE or pointwise inference (see also Section~\ref{sec:inference}) - the quantity $c_Q=1$. At the other extreme when $\mathrm{supp}\,Q=\cX$ then $c_Q=M\sim k_I^{-1}n_{\mathrm{c}}s_I^{\mathrm{corr}}$ and thus the computational order of our proposed strategy is of order~$O(k_I^{-1}n_{\mathrm{c}}^2(s_I^{\mathrm{corr}})^2)$. Because we only require slow rates for $\rho$-estimation (of which arbitrarily slow polynomial rates suffices; see Assumption~\ref{weights-rates}), taking an asymptotic regime for $s_I^{\mathrm{corr}}=o\bigl((n_{\mathrm{c}}^{-1}k_Is_I)^{1/2}\bigr)$ allows the correlation weight estimation to add only an asymptotically negligible excess computational cost over standard random forests. 

Together~\ref{item:comp-1} and~\ref{item:comp-2} allow for clustered random forests to be fit in approximately linear time, effectively matching the favourable scalability properties of standard random forests.  
We remark that the pre-existing random forests algorithms for clustered data via REEM and MERF random forests are cubic-time procedures of order $O(M^3)$ (even with oracle knowledge of the correlation) and subsequently accommodating correlations with these procedures come at a high computational cost. 

\section{Numerical experiments}\label{sec:numericals}

We study the performance of our clustered random forests on a number of simulated and real-world data examples. 
We will compare the following random forests: 
\begin{itemize}[itemsep=0.0ex]
    \item[RF] The standard honest random forest as in the package~\texttt{grf}~\citep{wager, GRF}. Sample splitting for each decision tree is performed cluster-wise, but each decision tree is fit as though the data were independent. This approach is therefore essentially that of Algorithm~\ref{alg:crf} with $\rho=0$. This therefore acts as a good `unweighted' benchmark.
    \item[CRF] Our clustered random forest estimator of Algorithm~\ref{alg:crf}, with weight parameters estimated via~\eqref{eq:rho-testMSPE} for covariate distribution $Q$ as specified for each simulation.
    \item[TRAIN] The clustered random forest estimator of Algorithm~\ref{alg:crf}, with weight parameters estimated by minimising the training error loss~\eqref{eq:TRAIN}.
    \item[REEM] The clustered random forest subsampling of Algorithm~\ref{alg:crf} with predictions estimated by fitting a mixed effects model with intercept only random effects  as in the package~\texttt{REEMTrees}~\citep{REEMTrees}. The approach is therefore asymptotically equivalent to the procedure of Algorithm~\ref{alg:crf} with $\rho$ estimated by fitting a random effects model on the residual terms.
\end{itemize}
All forests satisfy honesty~\ref{honesty}, with each tree trained using $s_I=s_I^{\mathrm{corr}}=3^{-1}I^\beta$ observations for the evaluation step, with subsampling rate $\beta=0.9$. Trees are grown in all cases to minimum node size of $k=10$, with tree splits obtained by standard CART. 
Thus the distinction between all four forests are via their weight estimation strategies only, allowing for direct comparison. 
Section~\ref{sec:sim2} studies the setting of optimal prediction under covariate distribution shift. Section~\ref{sec:sim3} studies inference for variance-dominating random forests, and Section~\ref{sec:realdata} studies a real data analysis of personalised predictions of CD4 cell counts for HIV seroconverters, highlighting the variance reduction gains clustered random forests exhibit. 
When the target is predictive performance (Section~\ref{sec:sim2}) we compare all the aforementioned estimators. As REEM trees are not readily amenable for uncertainty quantification, they are not included when the target is inference (Sections~\ref{sec:sim3} and~\ref{sec:realdata}).

\subsection{Pediction under covariate distribution shift}\label{sec:sim2}

Consider training data consisting of $I=10\,000$ i.i.d.~realisations of groups each of size $n_i\equiv4$ and following the distribution
\begin{gather*}
    X \sim N_4\big({\bf 0}, I_4\big),
    \qquad
    Y\given X \sim N_4\big(\mu(X), \Sigma(X)\big),
    \\
    (\mu(x))_j = \mu(x_j) = \tanh(x_j)
    \qquad
    \Sigma_{jk}(x) := \rho_{jk}\sigma(x_j)\sigma(x_k)
    ,
    \\
    \sigma(x_j) := \frac{1}{4}+\frac{1}{1+\text{exp}(4x_j)}, 
    \qquad
    \rho_{jk} = \ind_{(j=k)} + 0.8\cdot\ind_{(j\neq k)}.
\end{gather*}
Weights for the CRF and TRAIN random forests are fit using the equicorrelated weight structure; note as REEM random forests are fit using an intercept only random effects model the weight classes for all estimators coincide. We study the MSPE under two different covariate distributions; the first is precisely that of the training data i.e.~$N(0,1)$, and the second is a covariate distribution shifted test dataset~$\text{Unif}\hspace{0.08em}[1,2]$. The CRF random forest minimises~\eqref{eq:rho-testMSPE} with $Q=N(0,1)$ in the former and $Q=\text{Unif}\hspace{0.08em}[1,2]$ in the latter. Table~\ref{tab:sim2} presents the MSPE the two covariate distributions. In particular, training error based weighted TRAIN and model based weighted REEM forests both are seen to perform rather poorly under the covariate shifted distribution, with accuracy even worse than the benchmark RFs; see Figure~\ref{fig:sim2}. In contrast, our covariate distributed adjusted random forests of CRF outperform all the other methods regardless of the covariate shifting. 

\begin{figure}[h]
    \centering
    \includegraphics[width=0.98\linewidth]{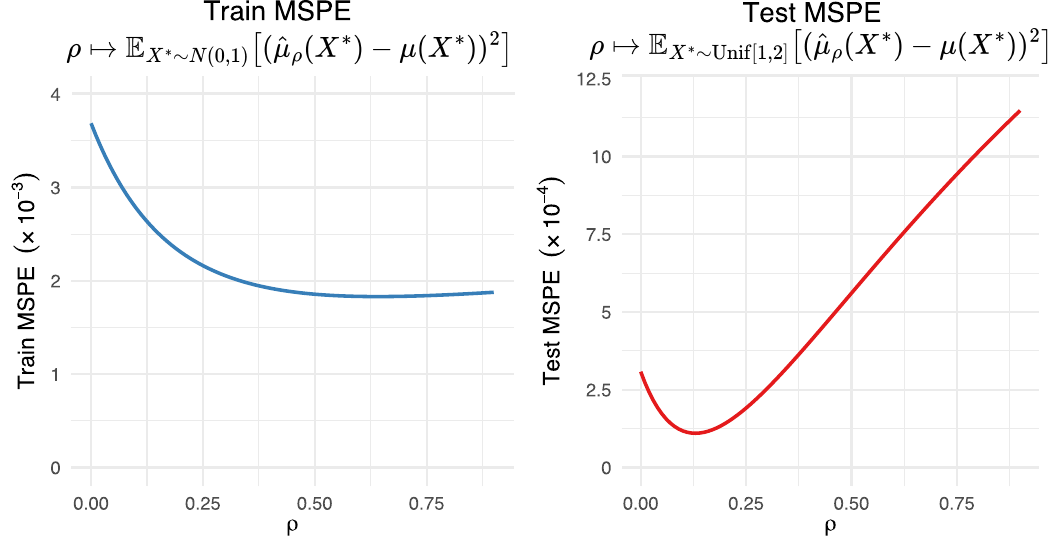}
    \caption{The training (blue) and testing (red) mean squared prediction errors (MSPE) for clustered random forests, weighted by an equicorrelated structure with fixed equicorrelation parameter $\rho\in[0,0.9]$ for the setting of Simulation~\ref{sec:sim2}. 
    The optimal clustered random forest with respect to training MSPE (corresponding to $\rho\approx 0.55$) can be seen to provide worse testing MSPE than any clustered random forests with any $\rho\in[0,0.55)$, including worse testing MSPE over even standard honest random forests (obtained by taking $\rho=0$).
    }
    \label{fig:sim2}
\end{figure}

\begin{table}[ht]
	\begin{center}
		\begin{tabular}{c|cc}
		\toprule
            \multirow{3}{*}{Method}& \text{Mean Squared Train} & \text{Mean Squared Test}
            \\
            &\text{Prediction Error ($\times10^{-3}$)}&\text{Prediction Error ($\times10^{-4}$)}
            \\
            & $\E_{X^*\sim N(0,1)}\big[(\hat{\mu}(X^*)-\mu(X^*))^2\big]$ & $\E_{X^*\sim\text{Unif}\hspace{0.08em}[1,2]}\big[(\hat{\mu}(X^*)-\mu(X^*))^2\big]$
            \\
		\midrule
            RF & 3.47 & 3.16
            \\
            CRF & {\bf 1.71} & {\bf 1.10}
            \\
            TRAIN & {\bf 1.71} & 7.07
            \\
            REEM & 1.72 & 7.06
            \\
		\bottomrule
		\end{tabular}
		\caption{Results of Simulation~\ref{sec:sim2} (1000 simulations).}\label{tab:sim2}
	\end{center}
\end{table}

\subsection{Inference}\label{sec:sim3}

Consider $I=1000$ i.i.d.~clusters each of size $n_i\equiv5$, with each cluster $(Y_i,X_i)\in\R^{n_i}\times\R^{n_i\times d}$ generated via 
\begin{gather*}
    (X_i)_{jk} \iid N(0,1),
    \qquad
    Y_{ij} = \mu(X_{ij}) + \varepsilon_{ij},
    \qquad
    (j\in[n_i], k\in[d])
    \\
    \mu(x) = 4\sin x_1,
    \qquad
    \varepsilon_i \sim \text{AR}(2), 
    \quad
    \phi_{\text{AR}} = (0.6,0.3),
\end{gather*} 
where we take i.i.d.~Gaussian innovations for the autoregressive $\text{AR}(2)$ process with autoregressive parameter $\phi_{\text{AR}}$. 
We consider a range of covariate dimensions $d\in\{1,3,5,10,50\}$. 
Our primary task of interest here is inference on the quantity $\mu({\bf1})$, where ${\bf1}\in\R^d$ is the vector of all ones. 
Confidence intervals are constructed for honest clustered random forests (Algorithm~\ref{alg:crf} with weights estimated via~\eqref{eq:rho-testMSPE} with $Q=\delta_{\bf1}$) with a (misspecified) inverse $\text{AR}(1)$ working weight structure. 
We compare CRFs with RFs, both of which use $B=500$ and $R=100$ in Algorithm~\ref{alg:crf}. 
REEM forests are omitted here as they do not readily allow for uncertainty quantification.

\begin{table}[h]
\centering
    \begin{minipage}{0.49\textwidth}
    \begin{center}
        Length of $95\%$ confidence interval $(\times 10^{-1})$
        \\
        \vspace{0.04cm}
		\begin{tabular}{c|ccccc}
		\toprule
            \multirow{2}{*}{Method} & \multicolumn{5}{c}{$d$}
            \\
            \cline{2-6} & 1 & 3 & 5 & 10 & 50 
            \\
		\midrule
            RF & 2.60 & 3.08 & 3.10 & 3.10 & 2.87
            \\
            CRF & 1.77 & 2.46 & 2.61 & 2.71 & 2.60
            \\
		\bottomrule
		\end{tabular}
    \end{center}
    \end{minipage}
  \begin{minipage}{0.49\textwidth}
    \begin{center}
        Coverage of 95\% confidence intervals
        \\
        \vspace{0.04cm}
		\begin{tabular}{c|ccccc}
            \toprule
            \multirow{2}{*}{Method} & \multicolumn{5}{c}{$d$}
            \\
            \cline{2-6} & 1 & 3 & 5 & 10 & 50
            \\
		\midrule
            RF & 0.948 & 0.945 & 0.950 & 0.935 & 0.955
            \\
            CRF & 0.956 & 0.940 & 0.948 & 0.931 & 0.936
            \\
		\bottomrule
		\end{tabular}
    \end{center}
    \end{minipage}
  \begin{minipage}{0.8\textwidth}
  \vspace{0.2cm}
	\begin{center}
        Mean squared error $\E\big[(\hat{\mu}({\bf1})-\mu({\bf1}))^2\big]$ ($\times 10^{-3}$)
        \\
        \vspace{0.04cm}
		\begin{tabular}{c|ccccc}
		\toprule
            \multirow{2}{*}{Method} & \multicolumn{5}{c}{$d$}
            \\
            \cline{2-6} & 1 & 3 & 5 & 10 & 50 
            \\
		\midrule
            RF & 4.23 & 6.16 & 5.83 & 6.27 & 5.38
            \\
            CRF & 1.93 & 3.99 & 4.18 & 4.86 & 4.32
            \\
		\bottomrule
		\end{tabular}
        \end{center}
    \end{minipage}
    \caption{Results of Simulation~\ref{sec:sim3} (1000 simulations).}\label{tab:sim3}
\end{table}

Table~\ref{tab:sim3} presents the empirical mean squared error of $\hat{\mu}({\bf1})$ and the interval width and coverage of nominal $95\%$ confidence intervals for $\mu({\bf1})$. 
In all cases the confidence intervals for CRFs are seen to maintain equivalent validity in terms of coverage as RFs do, but with smaller confidence interval width. Further, as we would expect, CRFs see a reduced mean squared error over RFs. Whilst these gains are most prevalent in lower dimensions, they are seen to persist even in moderate dimensional settings such as $d=50$. For further details of the results, including QQ plots and a more detailed analysis of coverage properties, see Appendix~\ref{appsec:sim3}. 

\subsection{Real data analysis: CD4 cell count prediction for HIV patients}\label{sec:realdata}

We study the CD4 cell count dataset from a repeated measures study from human immunodeficiency virus (HIV) seroconverters, which can be accessed as the \texttt{aids} dataset in the \texttt{R} package \texttt{jmcm}~\citep{pan}. The dataset contains the CD4 cell count of $I=369$ HIV participants during seroconversion, each of which we treat as an individual cluster. The median number of repeated measures (cluster size) is 6, with minimum and maximum cluster sizes of 1 and 12 respectively. For each patient we have access to the following covariates: time before/after seroconversion, age, smoking status, whether the participant takes drugs, the number of sex partners and depression status. See~\citet{zeger} for further details on this dataset, which has been subsequently studied by~\citet{taavoni, wang, emmenegger}. The aformentioned references all study variable importance of each covariate by imposing (semi)parameteric mixed models on the data. In contrast we study the problem of nonparametric personalised predictions, specifically we consider predicting the CD4 cell count for three `example patients' as per the following demographics:
\begin{table}[ht]
    \centering
    \begin{tabular}{c|cccccc}
        & \begin{tabular}{@{}c@{}c@{}}Years after\\seroconversion\end{tabular} & \begin{tabular}{@{}c@{}c@{}}Smoking\\status\end{tabular} & \begin{tabular}{@{}c@{}c@{}}Drugs use\\status\end{tabular} & Age & \begin{tabular}{@{}c@{}c@{}}Number of\\sexual partners\end{tabular} & \begin{tabular}{@{}c@{}c@{}}Depression\\status\end{tabular}\\
        \hline
        $x_{(1)}^*$ &  1 year & 2 packs & Drug user & 1st quartile & median & 10th centile
        \\
        $x_{(2)}^*$ & 2 years &  4 packs & Drug user & median & 90th centile & 90th centile
        \\
        $x_{(3)}^*$ & 5 years & none 
        & No drugs & 90th centile & 90th centile & 90th centile
    \end{tabular}
    \label{tab:my_label}
\end{table}

\noindent
We compare our CRFs with RFs, constructing confidence intervals for the conditional mean function evaluated at the three patient characteristic covariates above; see Table~\ref{tab:real-data}. As in Section~\ref{sec:sim3} we again omit REEM forests. We see improvements in our CRFs in the variance estimates for each patient characteristic.

\begin{table}[ht]
    \begin{center}
        $\text{CD4 cell count dataset (Section~5.3.2)}$
        \\
		\begin{tabular}{c|cc|cc|cc}
		\toprule
            Method & $\hat{\mu}(x^*_{(1)})$ & $\hat{V}^{1/2}\big(x^*_{(1)}\big)$ & $\hat{\mu}\big(x^*_{(2)}\big)$ & $\hat{V}^{1/2}\big(x^*_{(2)}\big)$ & $\hat{\mu}\big(x^*_{(3)}\big)$ & $\hat{V}^{1/2}\big(x^*_{(3)}\big)$ 
            \\
		\midrule
            RF & 666.0 &
25.3 \;\,(0\%) & 553.9 & 27.1 \;\,(0\%) & 645.2 & 33.5 \;\,(0\%)           \\
            CRF & 655.8 & 20.8 (32\%) & 539.8 & 22.2 (33\%) & 625.1 & 25.9 (40\%)
            \\
		\bottomrule
		\end{tabular}
    \end{center}
    \caption{Results of real data analysis of Section~\ref{sec:realdata}. 
    We present (personalised) pointwise estimates of the expected CD4 cell count for three patients, and their standard error estimates (alongside in brackets the percentage reduction in the estimator of the variance compared to the RF benchmark).}\label{tab:real-data}
\end{table}

\section{Discussion}\label{sec:discussion}

In this work we have introduced a new random forest algorithm for clustered data that utilises intra-cluster correlations for lower variance prediction error. We extend the theory of random forests to obtain asymptotic Gaussianity results for `Monte-Carlo limit' variance-dominating clustered random forests as well as the existence of splitting rules and subsampling mechanisms for which minimax rates are obtained. 

A main contribution of this work is the recognition that when data are correlated the optimal weighting parameters in clustered random forests depend intricately on the covariate distribution with which one wishes to obtain optimal MSPE, and consequently the lack of optimality of standard random forests in these settings (Theorem~\ref{thm:dist-shift-2}). We implement a computationally competitive weight estimation strategy that aims to target the optimal MSPE with respect to a user-specified covariate shift, allowing for optimal covariate shift adaptation in these settings. Given the growing recognition of covariate shift being a potential factor for poor practical performance in a range of machine learning tasks, these results offer a fresh perspective on aspects of the covariate shift problem, specifically through the lens of correlated data. 

This work naturally offers a number of interesting directions for further study. One is to extend the class of `working correlations' that clustered random forests may accommodate, both theoretically with regards to valid inference (i.e.~relaxing the property~\ref{ass:weights}) and practically in terms of computational considerations. 
Another helpful direction would be to extend estimation beyond functionals of the conditional mean to other functionals of the conditional distribution, such as conditional quantiles. 

It would also be of interest to further study the higher dimensional properties of clustered random forests; in particular we see empirical evidence in Section~\ref{sec:sim3} that clustered random forests still maintain good performance and confidence intervals remain valid even in moderate dimensions with a sparse mean function, albeit with a cost to variance. A further theoretical treatment of Theorem~\ref{thm:normality} to high dimensional settings, for example with sparsity, would also be a natural extension. 

\paragraph*{Funding.}~EHY was supported by the EPSRC Doctoral Training Partnership and European Research Council Advanced Grant
101019498.

\paragraph*{Data availability.} The data supporting the findings of this study are available at \url{https://cran.r-project.org/src/contrib/jmcm_0.2.4.tar.gz}, and all code used to produce the numerical results in this article is available at \url{https://github.com/elliot-young/clustered_randomforests_simulations}.

\newpage

\appendix

\begin{center}
    \Large \bf Supplementary material relating to `Clustered random forests with correlated data for optimal estimation and inference under potential covariate shift', by Elliot H.\ Young \\and Peter B\"uhlmann
\end{center}

\section{Additional details on numerical experiment of Section~\ref{sec:intro}}\label{appsec:introsim}

We give further details of the numerical experiment presented in Figure~\ref{fig:sim1} of Section~\ref{sec:intro}. Specifically, we consider $I=10\,000$ i.i.d.~realisations of clusters of size $n_i\equiv2$ and covariate dimension $d=2$ each following
\begin{gather*}
    X_{jk}\iid N(0,1)
    \qquad(j\in[n_i],k\in[d]),
    \\
    Y\given X\iid N_2\bigg(
    \begin{pmatrix}
        \tanh(X_{11})+\tanh(X_{12})
        \\
        \tanh(X_{21})+\tanh(X_{22})
    \end{pmatrix}
    ,\,
    \begin{pmatrix}
        1&0.8\\0.8&1
    \end{pmatrix}
    \bigg),
\end{gather*}
Both random forests are grown as honest forests (see Algorithm~\ref{alg:crf}) with each tree grown to minimum node size 10 and subsampling rate $\beta\in(0.7,0.95)$. 
Performance is given for the pointwise estimator $\hat{\mu}({\bf1})$ where ${\bf1}=(1,1)$. In the case of the clustered random forest, the working correlation parameter dictating the additional weights are estimated via the loss function~\eqref{eq:rho-testMSPE} with $Q=\delta_{\bf1}(x)$ (i.e.~a point mass at the covariate of interest).

\section{Adapted (dishonest) clustered random forests}\label{appsec:dishonest-rf}

In Section~\ref{sec:methodology} we introduced honest clustered random forests, extending honest random forests as in~\citet{wager} to model correlations in clustered data, and allowing for theoretically valid inference as seen in Section~\ref{sec:theory}. Note however we could similarly extend `dishonest' clustered random forests as in~\citet{randomforest} to clustered data. Algorithm~\ref{alg:dcrf} presents this extension, with forests defined in terms of a `working correlation' parameter that acts as an additional hyperparameter and can be tuned via for example cross validation.

\begin{algorithm}[ht]
\KwIn{Grouped dataset, with groups indexed by a set $\cI$ of size $I:=|\cI|$; subsampling fraction $f$; weighting parameter $\rho$.}

\For{$b\in[B]$} {
    Select a random subset $\cI_b\subset\cI$ of size $s_n\sim f\,I$.

    Build a decision tree using  using the data $\Isplit$ (with splitting criterion satisfying Assumption~\ref{ass:tree}). Suppose this outputs $M_b$ nodes, with leaves denotes by $L_1,\ldots,L_{M_b}$. Further, let $L(x)$ be the leaf that contains $x\in\R^d$ and let $J(x)$ be the node index $m$ such that $L(x)=L_m$.

    For each $i\in\cI_{b}$ calculate $(\chi_i)_{j,m}=\ind_{(X_{ij}\in L_m)}$.

    $\hat{\mu}_b(x) := \mathrm{e}_{J(x)}^\top \big(\sum_{i\in\cI_b}\chi_i^\top W_i(\rho)\chi_i\big)^{-1}\big(\sum_{i\in\cI_b}\chi_i^\top W_i(\rho)Y_i\big)$.
}

    $\hat{\mu}(x) := B^{-1}\sum_{b=1}^B \hat{\mu}_b(x)$.

\KwOut{Clustered random forest estimate $\hat{\mu}(x)$.}
\caption{(Dishonest) Clustered Random Forests (dCRF)}
\label{alg:dcrf}
\end{algorithm}

\section{Proofs of results in Section~\ref{sec:theory}}\label{appsec:theory}

In this section we prove the theoretical results of Section~\ref{sec:theory}. First we introduce some relevant notation that will be used throughout this section.

\subsection{Notational Setup}
We notate $Z_i:=(Y_i,X_i)$ as the data making up the $i$th cluster. Our random forest predictor is constructed by aggregating a number of clustered decision trees (as in Algorithm~\ref{alg:crf}), that takes the form $T(x,\xi;Z_{i_1},\ldots,Z_{i_{\bar{s}}})$ where $(Z_{i_1},\ldots,Z_{i_{\bar{s}}})$ is a (cluster-wise) size ${\bar{s}}$ subset of the data set $(Z_i)_{i\in[I]}$ and $\xi\sim\Xi$ is a source of auxiliary randomness in 
the sample splitting for the tree. Note where clear we will omit the explicit notational dependence on $I$ in quantities such as $\bar{s}_I$ and $k_I$ for brevity. In practice an average over $B$ trees (or $B\times R$ trees as in Algorithm~\ref{alg:crf}) is taken. 
The `Monte Carlo limit' of this random forest given by~\citep{wager} 
\begin{align*}
    \hat{\mu}(x) &:= 
    \binom{I}{\bar{s}}^{-1}\sum_{i_1<\cdots<i_{\bar{s}}}\E_{\xi\sim\Xi}[T(x,\xi;Z_{i_1},\ldots,Z_{i_{\bar{s}}})\given Z_{i_1},\ldots,Z_{i_{\bar{s}}}]
    \\
    &=
    \binom{I}{\bar{s}}^{-1}\sum_{i_1<\cdots<i_{\bar{s}}}T(x;Z_{i_1},\ldots,Z_{i_{\bar{s}}}),
\end{align*}
i.e.~where the expectation is over the auxiliary random variable $\xi$. 
Again for brevity we write $\hat\mu$ in place of $\hat\mu_I^{\mathrm{MC}}$ in Section~\ref{sec:theory}. 
For notational convenience we will define $T(x;Z_{i_1},\ldots,Z_{i_{\bar{s}}}) := \E_P[T(x,\xi;Z_{i_1},\ldots,Z_{i_{\bar{s}}})\given Z_{i_1},\ldots,Z_{i_{\bar{s}}}]$, and where clear we omit the dependence on $(Z_{i_1},\ldots,Z_{i_{\bar{s}}})$ i.e.~notate such a tree by $T(x)$. For clarity, $T(x)$ is a clustered decision tree for a weight function $W(\hat{\rho})$ in terms of an estimated $\hat{\rho}$; whereas we also introduce the notation $T_{\rho}(x)$ for an analogous tree with weights taken as $W(\rho)$ in terms of a fixed, deterministic parameter $\rho\in\Gamma$, and $\hat{\mu}_{\rho}(x)$ the analogous `Monte Carlo limit' random forest.

The number of clusters in the dataset is denoted by $I$, and an individual tree's evaluation data index set is given by $\cI_{\text{eval}}$. Also define $n_i$ to be the number of observations in the $i$th cluster, $N:=\sum_{i=1}^In_i$ the total number of observations across all clusters, $s_I:=|\{i:i\in\cI_{\text{eval}}\}|$ to be the number of subsampled clusters in the evaluation dataset, and $s_N:=|\{(i,j):i\in\cI_{\text{eval}}\}|=\sum_{i\in\cI_{\text{eval}}}n_i=n_{\mathrm{c}}s_I$ the total number of observations in the subsampled evaluation dataset. 
We also write $Z_{\mathrm{split}}:=(Z_i:i\in\cI_{\mathrm{split}})$, $Z_{\mathrm{eval}}:=(Z_i:i\in\cI_{\mathrm{eval}})$ and $Z_{\mathrm{corr}}:=(Z_i:i\in\cI_{\mathrm{corr}})$, as well as the analogous $X_{\mathrm{split}}, Y_{\mathrm{split}}, X_{\mathrm{eval}}$, $Y_{\mathrm{eval}}$. 

For sequences $(a_n)_{n\in\mathbb{N}}$ and $(b_n)_{n\in\mathbb{N}}$ we write $a_n\lesssim b_n$ when there exists a constant $c>0$ that does not depend on $n\in\mathbb{N}$, $x\in\cX$ or $P\in\cP$ and $n^*\in\mathbb{N}$, that does not depend on $x\in\cX$ nor $P\in\mathcal{P}$, such that for all $n\geq n^*$ we have $a_n\leq cb_n$, and analogously for $a_n\gtrsim b_n$. Note in particular the lack of dependence on $x\in\cX$ and $P\in\mathcal{P}$ in $c$ here allows us to obtain the uniform convergence results where relevant. Also write $a_n\sim b_n$ when $a_n\lesssim b_n$ and $a_n\gtrsim b_n$.

\subsection{Asymptotic unbiasedness of clustered random forests}\label{appsec:proof-unbiasedness}

In this section we prove the asymptotic unbiasedness of the estimator in Algorithm~\ref{alg:crf}.

\begin{lemma}[Asymptotic unbiasedness of clustered random forests]\label{lem:unbiasedness}
    Suppose Assumption~\ref{ass:data} holds, and let $\hat{\mu}_I$ denote a clustered random forest estimator (Algorithm~\ref{alg:crf}) with properties~\ref{ass:tree} and~\ref{ass:weights}. Fix an arbitrarily small constant $\epsilon\in(0,1)$. Then for sufficiently large $I$,
    \begin{equation*}
        \sup_{P\in\mathcal{P}}
        \sup_{x\in\cX}\big|\E_P[\hat{\mu}_I(x)]-\mu(x)\big| \leq C_{\mathrm{bias}}\,\Bigl(\frac{s_N}{2k_I-1}\Bigr)^{-(1-\epsilon)\frac{\log((1-\alpha)^{-1})}{\log(\alpha^{-1})}\frac{\pi}{d}\beta},
    \end{equation*}
    for the finite positive constant $C_{\mathrm{bias}}:=\frac{6C_WL_\mu d^{1/2}\nu}{c_W}$.
\end{lemma}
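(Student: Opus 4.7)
The plan is to bound the bias in three stages: (i) use Monte-Carlo symmetry and honesty to reduce to the conditional bias of a single clustered tree with an inserted deterministic mean function, (ii) exploit that the tree's implicit weights sum to one and invoke Lipschitzness to reduce to a leaf-diameter quantity, and (iii) bound the leaf diameter uniformly via the regularity and feature-split conditions.

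First I would apply Assumption~\ref{symmetry} and the tower rule to rewrite $\E_P[\hat{\mu}_I(x)] - \mu(x)$ as the single-tree bias $\E_P[T(x; Z_1,\ldots, Z_{\bar s})]-\mu(x)$. Conditioning on $Z_{\mathrm{split}}$ (which fixes the partition $L_1,\ldots,L_M$), on $Z_{\mathrm{corr}}$ (which fixes $\hat\rho$), and on $X_{\mathrm{eval}}$, honesty~\ref{honesty} lets me substitute $\E_P[Y_{ij}\mid X_i]=\mu(X_{ij})$ into
\[
T(x) = \mathrm{e}_{J(x)}^\top \Bigl(\sum_i \chi_i^\top W_i(\hat\rho)\chi_i\Bigr)^{-1}\sum_i \chi_i^\top W_i(\hat\rho) Y_i.
\]
The key algebraic observation is that $\chi_i\mathbf{1}_M = \mathbf{1}_{n_i}$ (each observation lies in a unique leaf), so the tree reproduces constant functions exactly, and the implicit linear weights $v_{ij}(x)$ assigned to $Y_{ij}$ satisfy $\sum_{i,j}v_{ij}(x)=1$. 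Hence the conditional bias equals $\sum_{i,j}v_{ij}(x)(\mu(X_{ij}) - \mu(x))$, which by~\ref{lip} is bounded by $L_\mu\sum_{i,j}|v_{ij}(x)|\,\|X_{ij}-x\|$.

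The crux is the control of this weighted sum, which is also the main departure from the standard uncorrelated random forest analysis. I would split contributions according to whether $X_{ij}\in L_{J(x)}$ or not: inside-leaf contributions are at most $L_\mu\,\mathrm{diam}(L_{J(x)})$ times the inside-leaf weight mass, while outside-leaf observations enter only because they share a cluster with an observation indexing a nonzero column of $\mathrm{e}_{J(x)}^\top A^{-1}\chi_i^\top W_i(\hat\rho)$, where $A:=\sum_i\chi_i^\top W_i(\hat\rho)\chi_i$. Using $\|W_i(\rho)\|_1\leq C_W$ and $\zeta(W_i(\rho))\geq c_W$ from Assumption~\ref{weights} (the latter giving $\Lambda_{\min}(W_i)\geq c_W$ via Gershgorin and hence $\Lambda_{\min}(A)\geq c_W\Lambda_{\min}(\sum_i\chi_i^\top\chi_i)$), I would bound both the overall $\ell^1$-norm of $v(x)$ and the cross-leaf spillover mass by a multiple of $C_W/c_W$, yielding
\[
\bigl|\E_P[T(x)\mid\cdot]-\mu(x)\bigr| \leq \tfrac{2C_W L_\mu}{c_W}\,\mathrm{diam}(L_{J(x)}),
\]
with an additional $\sqrt{d}$ absorbed when passing between coordinate-wise and Euclidean diameters. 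This cross-leaf spillover estimate is the step I expect to be the main obstacle, since with a generic weight matrix the tree's prediction at $x$ is not a priori local to $L_{J(x)}$.

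For the final step I would bound $\sup_x\mathrm{diam}(L_{J(x)})$ along the Wager--Walther/Wager--Athey lines, but adapted to the strengthened feature-split assumption~\ref{feature}: almost surely, along every root-to-leaf path and every coordinate $j$, at least a fraction $\pi/d$ of splits are on coordinate $j$; regularity~\ref{regularity} forces depth $\gtrsim\log(s_N/(2k_I-1))/\log(\alpha^{-1})$; each coordinate-$j$ split shrinks the leaf in direction $j$ by a factor of at most $1-\alpha$. A Chernoff-type deviation, combined with the density bound $\nu$ from~\ref{bounded-density} to convert observation counts to leaf Lebesgue mass, produces the $(1-\epsilon)$ factor and yields $\sup_x\mathrm{diam}(L_{J(x)})\leq 3\nu\sqrt{d}\,(s_N/(2k_I-1))^{-(1-\epsilon)(\pi/d)\log((1-\alpha)^{-1})/\log(\alpha^{-1})}$ uniformly in $P\in\cP$. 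Assembling the constants gives $C_{\mathrm{bias}}=6C_WL_\mu d^{1/2}\nu/c_W$. Finally, to remove the dependence on the random $\hat\rho$ I would use Assumption~\ref{weights-rates} together with the derivative eigenvalue bound in~\ref{weights} to show that replacing $W_i(\hat\rho)$ by $W_i(\rho^*)$ introduces only an $o_{\cP}((\log I)^{-d})$ perturbation, which is negligible against the stated rate provided $s_N/k_I$ grows polynomially in $I$.
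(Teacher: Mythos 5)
Your overall architecture matches the paper's (reduce to a single honest tree, bound the $\ell_1$-mass of the implicit weights via diagonal dominance, then apply the uniform leaf-diameter bound \`{a} la Wager--Walther), but the central step --- the one you yourself flag as the main obstacle --- has a genuine gap. You write the conditional bias as $\sum_{i,j}v_{ij}(x)\bigl(\mu(X_{ij})-\mu(x)\bigr)$ and bound it by $L_\mu\sum_{i,j}|v_{ij}(x)|\,\|X_{ij}-x\|$, then claim this is $O(\mathrm{diam}\,L_{J(x)})$ because the total and cross-leaf weight masses are bounded by multiples of $C_W/c_W$. That does not follow: the cross-leaf absolute weight mass is a \emph{non-vanishing constant} (it is exactly the excess of $\sum_{i,j}|v_{ij}(x)|$ over $1$, which the diagonal-dominance argument only bounds by a constant like $4C_W/c_W$, not by anything tending to zero), and for observations $X_{ij}$ in distant leaves $\|X_{ij}-x\|$ is of order $\sqrt d$, not of order $\mathrm{diam}\,L_{J(x)}$. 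So your displayed inequality $\bigl|\E_P[T(x)\mid\cdot]-\mu(x)\bigr|\leq \tfrac{2C_WL_\mu}{c_W}\,\mathrm{diam}(L_{J(x)})$ is not a consequence of the estimates you establish; as stated, the outside-leaf contribution is only controlled by an $O(1)$ constant.

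The missing idea is a leaf-wise cancellation, and it is exactly what the paper's proof exploits. Because $\sum_{(i,j):X_{ij}\in L_m}v_{ij}(x)=\mathrm{e}_m^\top(\chi^\top \hat W\chi)(\chi^\top \hat W\chi)^{-1}\mathrm{e}_{J(x)}=\ind_{(m=J(x))}$, the weights sum to zero over every leaf other than $L(x)$, so you may recenter $\mu$ within each leaf at no cost. The paper does this by introducing $\gamma_m:=\E_P[Y^*\mid X^*\in L_m,Z_{\mathrm{split}}]$ and decomposing the bias as $\sum_{i,j}\omega_{ij}(x)\bigl(\mu(X_{ij})-\gamma_{J(X_{ij})}\bigr)+\bigl(\gamma(x)-\mu(x)\bigr)$; each term $\mu(X_{ij})-\gamma_{J(X_{ij})}$ is then bounded by $L_\mu d^{1/2}$ times the diameter of the leaf \emph{containing $X_{ij}$}, which is uniformly small by the diameter lemma regardless of how far $X_{ij}$ is from $x$, and the $\ell_1$ bound $\sum|\omega_{ij}(x)|\leq 5C_W/c_W$ finishes the argument. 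With this substitution your proof goes through; without it, it does not. (A minor further remark: your closing perturbation argument replacing $W_i(\hat\rho)$ by $W_i(\rho^*)$ is unnecessary here, since the $\ell_1$ weight bound from Assumption~\ref{weights} holds uniformly over $\rho\in\Gamma$ and hence applies directly to the random $\hat\rho$.)
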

The result of Lemma~\ref{lem:unbiasedness} holds for clustered random forests consisting of an arbitrary number of clustered decision trees, and so applies to both a single decision tree as well as $\hat{\mu}_I^{\text{MC}}$ as in~\eqref{eq:MC}. We observe that for our splitting properties~\ref{ass:tree} - that are equivalent in the $d=1$ case - we obtain an extra factor of two in the bias convergence rate over the analogous result of~\citet[Lemma 2]{wager}. 
This faster bias convergence rate, alongside an upper bound on the variance of the random forest estimator, allows us to obtain minimax optimal rates in pointwise mean squared error for Lipschitz functions~\citep{stone} for certain decision tree splitting regimes for (clustered) random forests $\hat{\mu}_I^{\text{MC}}$ in unclustered i.i.d.~settings (see Remark~\ref{remark:minimax}).  

\begin{proof}[Proof of Lemma~\ref{lem:unbiasedness}]
    Let $B\in\mathbb{N}$ be the number of trees that make up the random forest $\hat{\mu}_I$. We begin by considering an individual tree $T(x)$, for which we drop the notational dependence on $b\in[B]$ for brevity.   
     For an individual tree $T(x)$, let $M$ be the number of leaves of the tree, and $(L_m)_{m\in[M]}$ be the rectangular region that makes up each leaf. Also for a given $x\in\cX$ define $L(x)$ to be the leaf to which $x$ belongs, and let $J(x)$ be the index for the leaf $L(x)$ i.e.~$J(x)$ is the unique value of $m$ satisfying $L(x)=L_m$. 

    Then the tree predictor at $x\in\cX$ is given by
    \begin{align*}
        T(x) &= \mathrm{e}_{J(x)}^\top \big(\chi^\top \hat{W}\chi\big)^{-1} \big(\chi^\top \hat{W}\cY\big)
        \\
        &=
        \mathrm{e}_{J(x)}^\top \Big(\sum_{i=1}^I\chi_i^\top W_i(\hat{\rho})\chi_i\Big)^{-1}\Big(\sum_{i=1}^I\chi_i^\top W_i(\hat{\rho})Y_i\Big)
        ,
    \end{align*}
    where
    \begin{equation*}
        \chi := \begin{pmatrix}\chi_1\\\vdots\\\chi_I\end{pmatrix}
        ,\quad
        \mathcal{Y} := \begin{pmatrix}Y_1\\\vdots\\Y_I\end{pmatrix}
        ,\quad
        \hat{W} := \diag\big(W_i(\hat{\rho})\ind_{(i\in\Ieval)}\big)_{i\in[I]}.
    \end{equation*}
    \begin{equation*}
        (\chi_i)_{j,m} = \ind_{(X_{ij}\in L_m)}\ind_{(i\in\Ieval)},
        \quad
        (i\in[I],j\in[n_i],m\in[M]).
    \end{equation*}
    
    Now for any $x\in\cX$, define $\gamma(x) := \E_P[Y^*\given X^*\in L(x), Z_{\text{split}}]$, where the data $(Y^*,X^*)\in\R\times\cX$ is such that $(Y^*,X^*)\independent Z_{\text{split}}$. Note therefore by construction $\gamma(x')=\gamma(x'')$ for all $x',x''\in J(x)$. 
    Now for each $m\in[M]$ define $\gamma_{m} := \gamma(x)$ for any $x\in L_m$. We then have
    \begin{align}
        & \qquad \E_P[T(x)\given Z_{\text{split}},Z_{\text{corr}},X_{\text{eval}}] - \mu(x)
        \notag
        \\
        & = \mathrm{e}_{J(x)}^\top \big(\chi^\top \hat{W} \chi)^{-1} (\chi^\top \hat{W} \{\mu(X_{\text{eval}})-\chi\boldsymbol{\gamma}\} \big)
        +
        \{\gamma(x)-\mu(x)\}
        \notag
        \\
        & = \sum_{i\in\Ieval}\sum_{j=1}^{n_i}\omega_{ij}(x)\theta_{ij}
        + \{\gamma(x)-\mu(x)\},
        \label{eq:bias-decomp}
    \end{align}
    where 
    $\boldsymbol{\gamma}:=(\gamma_1,\gamma_2,\ldots,\gamma_M)$, and
    \begin{align}
        \omega_{ij}(x) &:= e_{ij}^\top \hat{W}\chi(\chi^\top \hat{W}\chi)^{-1}\mathrm{e}_{J(x)},
        \label{eq:omega(x)}
        \\
        \theta_{ij} &:= \mu(X_{ij}) - \gamma_{J(X_{ij})}.
        \notag
    \end{align}
    We work on the event $\Omega_I$ given by Lemma~\ref{lem:diam} which holds with probability at least $1-s_N^{-3}$ (see also Lemma~12 of~\citet{wager-walther}). 
    By Lemma~\ref{lem:diam} on $\Omega_I$ and for sufficiently large $I$,
    \begin{multline}\label{eq:bias2}
        \big|\gamma(x)-\mu(x)\big|
        = 
        \big|\E_P[Y^*\given X^*\in L(x), Z_{\text{split}}] - \E_P[Y^*\given X^*=x]\big|
        \\
        \leq
        L_{\mu}d^{1/2} \,\diam L(x)
        \leq 
        L_{\mu}d^{1/2} \Big(\frac{s_N}{2k-1}\Big)^{-(1-\epsilon)\frac{\log((1-\alpha)^{-1})}{\log(\alpha^{-1})}\cdot\frac{\pi}{d}}
        ,
    \end{multline}
    which follows by Lemma~\ref{lem:diam}.
    
    The former term of~\eqref{eq:bias-decomp} that arises is precisely zero for standard random forests, but typically non-zero for clustered random forests. First, note that
    \begin{align*}
        |\theta_{ij}| &= \big|\mu(X_{ij})-\gamma_{J(X_{ij})}\big|
        \\
        &= 
        |\E_P[Y^*\given X^* = X_{ij}] - \E_P[Y^*\given X^*\in L(X_{ij}), Z_{\text{split}}]|
        \\
        &\leq L_{\mu}d^{1/2}\,\diam L(X_{ij}),
    \end{align*}
    and so, also on the event $\Omega_I$,
    \begin{align}
        \Big|\sum_{i=1}^I
        \sum_{j=1}^{n_i}\omega_{ij}(x)\theta_{ij}\Big|
        &\leq \Big|\sum_{i\in\cI_{\eval}}
        \sum_{j=1}^{n_i}\omega_{ij}(x)\theta_{ij}\Big|
        \notag
        \\
        &\leq \sum_{i\in\Ieval}\sum_{j=1}^{n_i}|\omega_{ij}(x)||\theta_{ij}|
        \notag
        \\
        &\leq
        L_{\mu}d^{1/2} \sum_{i\in\Ieval}\sum_{j=1}^{n_i}|\omega_{ij}(x)| \, \diam L(X_{ij})
        \notag
        \\
        &\leq L_{\mu}d^{1/2} \Big(\frac{s_N}{2k-1}\Big)^{-(1-\epsilon)\frac{\log((1-\alpha)^{-1})}{\log(\alpha^{-1})}\cdot\frac{\pi}{d}} \sum_{i\in\Ieval}\sum_{j=1}^{n_i}|\omega_{ij}(x)|
        \notag
        \\
        &\leq
        L_{\mu}d^{1/2}C_1 \Big(\frac{s_N}{2k-1}\Big)^{-(1-\epsilon)\frac{\log((1-\alpha)^{-1})}{\log(\alpha^{-1})}\cdot\frac{\pi}{d}},
        \label{eq:bias1}
    \end{align}
    where the penultimate inequality follows by Lemma~\ref{lem:diam}, and the final inequality by Lemma~\ref{lem:sum-abs}, where $C_1\in\R_+$ is as given in Lemma~\ref{lem:sum-abs}. 
    Combining~\eqref{eq:bias2} and~\eqref{eq:bias1} with the triangle inequality we therefore have that
    \begin{equation}\label{eq:tree-bias}
        \big| \E_P[T(x)\given Z_{\text{split}},X_{\text{eval}}] - \mu(x) \big|
        \leq
        L_{\mu}d^{1/2}(1+C_1) \Big(\frac{s_N}{2k-1}\Big)^{-(1-\epsilon)\frac{\log((1-\alpha)^{-1})}{\log(\alpha^{-1})}\cdot\frac{\pi}{d}},
    \end{equation}
    Now consider the clustered random forest consisting of $B$ trees i.e.~$\hat{\mu}_I(x)=B^{-1}\sum_{b=1}^B T_b(x)$, and where we denote $Z_{b,\text{split}}$ to be the split data for specifically the $b$th tree, and similarly $X_{b,\text{eval}}$ etc. Then for each tree $T_b$ we have
    \begin{equation}\label{eq:tree-bias-b}
        \big| \E_P[T_b(x)\given Z_{b,\text{split}},X_{b,\text{eval}}] - \mu(x) \big|
        \leq
        L_{\mu}d^{1/2}(1+C_1) \Big(\frac{s_N}{2k-1}\Big)^{-(1-\epsilon)\frac{\log((1-\alpha)^{-1})}{\log(\alpha^{-1})}\cdot\frac{\pi}{d}},
    \end{equation}
    almost surely by the above. Therefore
    \begin{align*}
        \big| \E_P[\hat{\mu}(x)] - \mu(x) \big|
        &\leq
        B^{-1}\sum_{b=1}^B \big|\E_P[T_b(x)]-\mu(x)\big|
        \\
        &=
        B^{-1}\sum_{b=1}^B \big|\E_P\big[\E_P[T_b(x)\given Z_{b,\text{split}},X_{b,\text{eval}}]-\mu(x)\big]\big|
        \\
        &\leq
        B^{-1}\sum_{b=1}^B \E_P\Big[\big|\E_P[T_b(x)\given Z_{b,\text{split}},X_{b,\text{eval}}]-\mu(x)\big|\Big]
        \\
        &\leq
        L_{\mu}d^{1/2}(1+C_1) \Big(\frac{s_N}{2k-1}\Big)^{-(1-\epsilon)\frac{\log((1-\alpha)^{-1})}{\log(\alpha^{-1})}\cdot\frac{\pi}{d}},
    \end{align*}
    where the penultimate inequality follows by Jensen's inequality and the final inequality by applying~\eqref{eq:tree-bias-b}. The universal constants $L_{\mu}d^{1/2},C_1,k,\alpha,\pi$ carry no $x\in\cX$ or $P\in\cP$ dependence, so
    \begin{equation*}
        \sup_{P\in\cP}\sup_{x\in\cX}\big| \E_P[\hat{\mu}(x)] - \mu(x) \big|
        \leq
        L_{\mu}d^{1/2}(1+C_1)\Big(\frac{s_N}{2k-1}\Big)^{-(1-\epsilon)\frac{\log((1-\alpha)^{-1})}{\log(\alpha^{-1})}\cdot\frac{\pi}{d}}
        ,
    \end{equation*}
    completing the proof.
\end{proof}

\begin{lemma}\label{lem:diam}
    Fix an abitrarily small $\epsilon\in(0,1)$. Let $T$ be a tree satisfying Assumption~\ref{ass:tree}, built such that $s$ observations are used in the evaluation step. Let $L(x)$ be the leaf of $T$ that contains the point $x\in\cX$. Then with probability at least $1-s^{-3}$ and for large enough $s$,
    \begin{equation*}
        \sup_{x\in\cX}\mathrm{diam}\, L(x)
        \leq
        \Big(\frac{s}{2k-1}\Big)^{-(1-\epsilon)\frac{\log((1-\alpha)^{-1})}{\log(\alpha^{-1})}\cdot\frac{\pi}{d}}.
    \end{equation*}
\end{lemma}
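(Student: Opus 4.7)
I aim to control the diameter of every leaf uniformly in $x \in \cX$ by combining a deterministic lower bound on the tree depth along each coordinate with a high-probability upper bound on the per-split length shrinkage, then accumulating over splits. By the regularity condition~\ref{regularity}, every split leaves at least a fraction $\alpha$ of the parent's data in each child, so any node at depth $h$ contains at least $\alpha^h s$ observations. Combined with the upper bound $2k-1$ on leaf sizes, every leaf has depth at least $h_{\min}:=\log(s/(2k-1))/\log(\alpha^{-1})$. By the almost-sure $\pi$-regularity in~\ref{feature}, along each root-to-leaf path the number of splits along coordinate $j\in[d]$ is $m_j \geq (\pi/d)\, h_{\min}$.

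\emph{Per-split shrinkage via uniform concentration.} At an internal node $v$ of extent $\ell$ along the split coordinate $j$, the empirical CDF of that coordinate restricted to $v$ assigns proportion in $[\alpha,1-\alpha]$ on either side of the split point. A Vapnik--Chervonenkis-type uniform concentration inequality over axis-aligned rectangles in $[0,1]^d$, combined with a union bound over the at most $O(s/k)$ internal nodes and a polynomial-size discretization of candidate split points, ensures that with probability $\geq 1-s^{-3}$ every relevant empirical fraction is within $\delta_s = O\bigl(\sqrt{h_{\min}\log s\,/\,(\alpha^{h_{\min}} s)}\bigr)$ of its true conditional mass. Together with the density bounds $\nu^{-1}\leq f\leq\nu$ from~\ref{bounded-density} (so that conditional densities on any rectangle remain bounded by a power of $\nu$), this forces the length fraction of the larger child along the split coordinate to be at most $1 - (\alpha - \delta_s)/\nu^{c}$ for some constant $c$ depending only on the setup.

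\emph{Accumulating and main obstacle.} Iterating over $m_j$ splits along coordinate $j$, the extent of $L(x)$ along $j$ is at most $\bigl(1 - (\alpha-\delta_s)/\nu^{c}\bigr)^{m_j}$, which upon substituting the lower bound on $m_j$ and rewriting becomes $(s/(2k-1))^{-(\pi/d)\log((1-(\alpha-\delta_s)/\nu^{c})^{-1})/\log(\alpha^{-1})}$. For sufficiently large $s$, $\delta_s \to 0$ and the constant $\nu^{-c}$ can be absorbed into the $(1-\epsilon)$ factor in the exponent, and the $\sqrt{d}$ factor converting per-coordinate extent into diameter is similarly absorbed. The main obstacle is the uniform concentration step: one must simultaneously control candidate splits at every internal node of an adaptively grown tree, where the data at depth $h$ comprises only $\Theta(\alpha^h s)$ observations, while still preserving the sharp $(1-\epsilon)$ exponent after accounting for both the slack $\delta_s$ and the density-bound constant. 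This is precisely what makes the ``large enough $s$'' threshold depend non-trivially on $\nu,\alpha,\pi,d,\epsilon,k$.
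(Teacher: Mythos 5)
Your skeleton matches the paper's: the deterministic depth bound $c(x)\ge \log(s/(2k-1))/\log(\alpha^{-1})$ from~\ref{regularity}, the per-coordinate split count $c_j(x)\ge (\pi/d)\,c(x)$ from~\ref{feature}, and then a per-split length-shrinkage argument along each coordinate. The gap is in the shrinkage step. You convert the data-fraction condition into a length bound via the density bounds, obtaining a per-split factor of $1-(\alpha-\delta_s)/\nu^{c}$, and then claim the $\nu^{-c}$ ``can be absorbed into the $(1-\epsilon)$ factor in the exponent.'' It cannot. After accumulating over $m_j\ge(\pi/d)h_{\min}$ splits, your exponent is $\frac{\pi}{d}\cdot\frac{\log\bigl((1-\alpha/\nu^{c})^{-1}\bigr)}{\log(\alpha^{-1})}$ in the limit $\delta_s\to0$, whereas the lemma requires $(1-\epsilon)\frac{\pi}{d}\cdot\frac{\log((1-\alpha)^{-1})}{\log(\alpha^{-1})}$ for \emph{arbitrarily small} $\epsilon$. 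The ratio $\log\bigl((1-\alpha/\nu^{c})^{-1}\bigr)/\log\bigl((1-\alpha)^{-1}\bigr)$ is a fixed constant strictly less than $1$ whenever $\nu>1$ (for instance $\alpha=0.3$, $\nu^{c}=4$ gives roughly $0.078/0.357\approx0.22$), and it does not improve as $s\to\infty$. A constant multiplicative \emph{prefactor} can be absorbed into the $(1-\epsilon)$ in the exponent for large $s$; a change of the \emph{base} of the geometric decay cannot. The paper sidesteps this by invoking Lemma~12 of \citet{wager-walther} (see also Lemma~2 of \citet{wager}), which delivers $\sup_x\nu^{-1}\diam L_j(x)\le (1-\alpha)^{(1-\epsilon)c_j(x)}$ directly --- base $(1-\alpha)$, with only a single overall factor of $\nu$ --- and that sharper per-coordinate statement is precisely the nontrivial content your argument would need to reprove.

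A second, subsidiary problem is your concentration rate $\delta_s=O\bigl(\sqrt{h_{\min}\log s/(\alpha^{h_{\min}}s)}\bigr)$: since $\alpha^{h_{\min}}s\asymp k$, this is of order $\sqrt{h_{\min}\log s/k}$, which does not tend to zero in the regimes where the paper actually uses the lemma (e.g.\ $k=O(\log I)$ in Theorem~\ref{thm:normality}, or $k_I\sim1$ in Remark~\ref{remark:minimax}). Uniform relative concentration of empirical over population mass fails for cells containing $O(1)$ data points, so the deepest splits cannot be controlled by a VC bound plus union bound in the way you describe; handling exactly this adaptivity at small cells is what the ``adaptive concentration'' machinery of \citet{wager-walther} is built for.
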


\begin{proof}
    Fix an $x\in\cX=[0,1]^d$. Let $c(x)$ be the number of splits leading to leaf $L(x)$, and $c_j(x)$ be the number of these splits along the $j\text{th}$ feature. 
    Also let $L_j(x)$ be the interval in the $j$th covariate direction of the rectangular leaf $L(x)$, so that $L(x)=\Pi_{j=1}^d L_j(x)$. 
    By Assumption~\ref{regularity} we have that $s\alpha^{\frac{d}{\pi} c(x)}\leq 2k-1$, and thus $c(x)\geq\big(\log(\alpha^{-1})\big)^{-1}\log\big(\frac{s}{2k-1}\big)$ almost surely. Then by Assumption~\ref{feature} it follows $c_j(x)\geq\frac{\pi}{d}c(x)\geq\frac{\pi}{d}\big(\log(\alpha^{-1})\big)^{-1}\log\big(\frac{s}{2k-1}\big)$ almost surely. Further, by  Lemma~12 of~\citet{wager-walther} (see also Lemma~2 of~\citet{wager}), for large enough $s$ and with probability at least $1-s^{-3}$ we have $\sup_{x\in\cX}\frac{\nu^{-1}\diam L_j(x)}{(1-\alpha)^{(1-\epsilon)c_j(x)}}\leq 1$. Therefore, with probability at least $1-s^{-3}$ and for large enough $s$,
    \begin{equation*}
        \diam L_j(x)
        \leq
       \nu(1-\alpha)^{(1-\epsilon)\frac{\log(\frac{s}{2k-1})}{\log(\alpha^{-1})}\cdot\frac{\pi}{d}}
        =
        \nu \Big(\frac{s}{2k-1}\Big)^{-(1-\epsilon)\frac{\log((1-\alpha)^{-1})}{\log(\alpha^{-1})}\cdot\frac{\pi}{d}}.
    \end{equation*}
    As this upper bound is independent of $x\in\cX$, $P\in\cP$ and $j\in[d]$, the result follows;
    \begin{equation*}
        \sup_{P\in\cP}\sup_{x\in\cX}\diam L(x) = \sup_{P\in\cP}\sup_{x\in\cX}\max_{j\in[d]}\diam L_j(x) \leq \nu\Big(\frac{s}{2k-1}\Big)^{-(1-\epsilon)\frac{\log((1-\alpha)^{-1})}{\log(\alpha^{-1})}\cdot\frac{\pi}{d}}.
    \end{equation*}
\end{proof}

\begin{lemma}\label{lem:sum-abs}
    Let the weights $\big(\omega_{ij}(x)\big)_{i\in[I],j\in[n_i]}$ be as defined in~\eqref{eq:omega(x)} in the proof of Lemma~\ref{lem:unbiasedness} for a clustered decision tree satisfying property~\ref{ass:weights}. Then there exists a universal constant $C_1>0$ such that
    \begin{equation*}
        \sup_{P\in\cP}\sup_{x\in\cX}\sum_{i=1}^I\sum_{j=1}^{n_i} |\omega_{ij}(x)| \leq C_1 .
    \end{equation*}
\end{lemma}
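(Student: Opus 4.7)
The plan is to express $\sum_{i,j}|\omega_{ij}(x)|$ as a matrix–vector operation and bound it through a combination of the column-sum bound on $W_i$ and a Varah-type diagonal dominance bound on the leaf-level Gram matrix $A := \chi^{\top}\hat{W}\chi$. Writing $u := A^{-1}\mathrm{e}_{J(x)}\in\R^M$, we have $\omega(x) = \hat{W}\chi u$, so the quantity of interest is the vector $\ell_1$-norm $\|\hat{W}\chi u\|_1$. Expanding entrywise,
\begin{equation*}
    \sum_{i\in\cI_{\mathrm{eval}}}\sum_{j=1}^{n_i}|\omega_{ij}(x)|
    \leq \sum_{i\in\cI_{\mathrm{eval}}}\sum_{j,j'}\big|(W_i(\hat\rho))_{jj'}\big|\,|u_{J(X_{ij'})}|.
\end{equation*}
Swapping the order of summation and applying the symmetry of $W_i$ together with $\|W_i(\hat\rho)\|_1\leq C_W$ from~\ref{weights} (so that each column sum of $|W_i|$ is bounded by $C_W$) yields the intermediate bound $C_W\sum_{i\in\cI_{\mathrm{eval}},j'}|u_{J(X_{ij'})}|$. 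Grouping by leaf and using that each leaf contains at most $2k_I-1$ evaluation observations under~\ref{regularity} gives $\sum_{i,j}|\omega_{ij}(x)| \leq C_W(2k_I-1)\|u\|_1$.

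The main obstacle is the matching lower bound on $\|u\|_1^{-1}$, which I will obtain by showing that $A$ inherits diagonal dominance from the individual $W_i$, with $\zeta(A)\geq c_W k_I$. The key manipulation is
\begin{align*}
    A_{mm} - \sum_{m'\neq m}|A_{mm'}|
    &\geq \sum_{i\in\cI_{\mathrm{eval}}}\sum_{j:X_{ij}\in L_m}\biggl[\sum_{j':X_{ij'}\in L_m}(W_i)_{jj'} - \sum_{j':X_{ij'}\notin L_m}|(W_i)_{jj'}|\biggr]
    \\
    &\geq \sum_{i\in\cI_{\mathrm{eval}}}\sum_{j:X_{ij}\in L_m}\biggl[(W_i)_{jj}-\sum_{j'\neq j}|(W_i)_{jj'}|\biggr]
    \\
    &\geq c_W\,|\{(i,j):i\in\cI_{\mathrm{eval}},\,X_{ij}\in L_m\}|
    \geq c_W k_I,
\end{align*}
where the first inequality is the triangle inequality, the second splits the row-$j$ summation over all $j'$ into the in-leaf and out-of-leaf parts (noting the in-leaf part contains $(W_i)_{jj}$), the third invokes $\zeta(W_i(\hat\rho))\geq c_W$ from~\ref{weights}, and the final inequality uses the lower bound on leaf sizes from~\ref{regularity}. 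The interchange in the second step is the delicate part: one needs that the \emph{negative} contributions to $A_{mm}$ from in-leaf off-diagonal entries of $W_i$ are absorbed by the diagonal dominance of $W_i$ itself, so that the out-of-leaf contributions from the off-diagonal blocks $A_{mm'}$ are the only remaining ones to control.

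Once diagonal dominance of $A$ is established, I would conclude by Varah's bound~\citep[which gives $\|A^{-1}\|_\infty\leq 1/\zeta(A)$ for any strictly row-diagonally-dominant matrix] and the symmetry of $A$, which identifies $\|A^{-1}\|_1=\|A^{-1}\|_\infty$, that
\begin{equation*}
    \|u\|_1 \leq \|A^{-1}\|_1 \leq \frac{1}{\zeta(A)}\leq \frac{1}{c_W k_I}.
\end{equation*}
Combining with the upper bound derived in the first paragraph yields
\begin{equation*}
    \sum_{i,j}|\omega_{ij}(x)|\leq \frac{C_W(2k_I-1)}{c_W k_I}\leq \frac{2C_W}{c_W}=:C_1,
\end{equation*}
uniformly in $x\in\cX$, $P\in\cP$ and in $I$. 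The universality of $C_1$ follows because all ingredients ($C_W$, $c_W$) depend only on the admissible weight class of~\ref{weights} and not on $(P,x,I,k_I)$, which gives the required supremum bound.
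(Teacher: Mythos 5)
Your proof is correct. The decisive ingredient --- establishing $\zeta(\chi^\top \hat{W}\chi)\geq c_W k_I$ by transferring the diagonal dominance of the individual $W_i(\hat\rho)$ to the leaf-level Gram matrix, and then invoking the Ahlberg--Nilson--Varah bound together with symmetry to control $\|(\chi^\top\hat{W}\chi)^{-1}\|_1$ --- is exactly the paper's argument, down to the same use of~\ref{weights} and the leaf-size bounds from~\ref{regularity}. Where you differ is in the reduction to that bound: the paper first proves the exact identity $\sum_{i,j}\omega_{ij}(x)=1$ (via ${\bf 1}_N=\chi{\bf 1}_M$), writes $\sum_{i,j}|\omega_{ij}(x)|=1+2\,|\sum_{i,j}\omega_{ij}(x)d_{ij}(x)|$ with $d_{ij}$ flagging the negatively weighted observations, and bounds only that negative mass through an auxiliary indicator matrix $\bar\chi$; you instead bound $\|\hat{W}\chi A^{-1}\mathrm{e}_{J(x)}\|_1$ directly by the triangle inequality, the column-sum bound $\|W_i(\hat\rho)\|_1\leq C_W$, and the cap of $2k_I-1$ observations per leaf. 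Your route is more elementary, avoids the sign-splitting device entirely, and delivers a marginally sharper constant ($2C_W/c_W$ versus the paper's $5C_W/c_W$); the paper's route has the minor advantage of making explicit that the excess over $1$ is attributable solely to negative weights, which is conceptually informative (for $\rho=0$ the bound collapses to exactly $1$), but nothing downstream depends on the particular constant, so either argument serves equally well.
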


\begin{proof}
    Fix some $P\in\cP$ and $x\in\cX$. We first show that ${\bf 1}^\top \omega(x) = 1$. This holds as 
    \begin{align*}
        \sum_{i=1}^I\sum_{j=1}^{n_i}\omega_{ij}(x) &= {\bf 1}_N^\top \omega(x)
        \\
        &= ({\bf 1}_N^\top \hat{W}\chi)(\chi^\top \hat{W}\chi)^{-1}\mathrm{e}_{J(x)}
        \\
        &= {\bf 1}_M^\top (\chi^\top \hat{W}\chi)(\chi^\top \hat{W}\chi)^{-1}\mathrm{e}_{J(x)}
        \\
        &= {\bf 1}_M^\top \mathrm{e}_{J(x)} = 1,
    \end{align*} 
    where in the penultimate line we use that ${\bf 1}_N = \chi {\bf 1}_M$, and where $M$ is the number of leaves that make up the tree defining $\omega_{ij}$. Now define $(d_{ij}(x))_{i\in[I],j\in[n_i]}$ by $d_{ij}(x) = \frac{1-\sgn(\omega_{ij}(x))}{2}\ind_{(\omega_{ij}(x)\neq 0)}$. Note therefore that $d_{ij}(x)\in\{0,1\}$ with $d_{ij}(x)=0$ if $i\notin\Ieval$. 
    Then
    \begin{align*}
        \sum_{i=1}^I\sum_{j=1}^{n_i} |\omega_{ij}(x)|
        &= \sum_{i\in\Ieval}\sum_{j=1}^{n_i} \omega_{ij}(x) \sgn(\omega_{ij}(x)) \ind_{(\omega_{ij}(x)\neq 0)}
        \\
        &= \sum_{i\in\Ieval}\sum_{j=1}^{n_i} \omega_{ij}(x)  +  \sum_{i\in\Ieval}\sum_{j=1}^{n_i} \omega_{ij}(x)\big(\sgn(\omega_{ij}(x))-1\big) \ind_{(\omega_{ij}(x)\neq 0)}    
        \\
        &= 1 - 2\sum_{i\in\Ieval}\sum_{j=1}^{n_i} \omega_{ij}(x) d_{ij}(x)
        \\
        &= 1 + 2\,\bigg|\sum_{i\in\Ieval}\sum_{j=1}^{n_i} \omega_{ij}(x) d_{ij}(x)\bigg|.
    \end{align*}
    Now define $\bar{\chi}_i \in \{0,1\}^{n_i \times M}$ for each $i\in[I]$ and $\bar{\chi}$ by
    \begin{equation*}
        \big(\bar{\chi}_i\big)_{j,m} = \ind_{(X_{ij}\in L_m)}\ind_{(\omega_{ij}(x) < 0)}\ind_{(i\in\Ieval)} 
        \qquad\text{and}\qquad
        \bar{\chi} := \begin{pmatrix}\bar{\chi}_1\\\vdots\\\bar{\chi}_I\end{pmatrix}.
    \end{equation*}
    Then by construction we have that $\text{vec}(d(x)) = \bar{\chi}{\bf{1}}_M$. Therefore
    \begin{align*}
        \bigg|\sum_{i\in\Ieval}\sum_{j=1}^{n_i} \omega_{ij}(x) d_{ij}(x)\bigg|
        &=
        \big| \mathrm{e}_{J(x)}^\top (\chi^\top \hat{W}\chi)^{-1} (\chi^\top \hat{W}\bar{\chi}) {\bf 1}_M \big|
        \\
        &\leq
        \underbrace{\|{\bf 1}_M\|_{\infty}}_{=1} \big\|(\bar{\chi}^\top \hat{W}\chi)(\chi^\top \hat{W}\chi)^{-1}\mathrm{e}_{J(x)}\big\|_1
        \\
        &\leq
        \big\|(\bar{\chi}^\top \hat{W}\chi)(\chi^\top \hat{W}\chi)^{-1}\big\|_1 \underbrace{\|\mathrm{e}_{J(x)}\|_1}_{=1}
        \\
        &\leq
        \big\|\bar{\chi}^\top \hat{W}\chi\big\|_1 \big\|(\chi^\top \hat{W}\chi)^{-1}\big\|_1.
    \end{align*}
Now
\begin{equation*}
    \big\|\bar{\chi}^\top \hat{W}\chi\big\|_1
    \leq
    \underbrace{\|\bar{\chi}\|_1}_{\in\{0,1\}} \|\hat{W}\|_1 \underbrace{\|\chi\|_{\infty}}_{\leq 2k-1}
    \leq
    (2k-1) \|\hat{W}\|_1,
\end{equation*}
following by H\"{o}lder's inequality and Assumption~\ref{regularity}, and where for a vector $v\in\R^n$ we write the norm $\|v\|_p:=(\sum_i|v_i|^p)^{1/p}$ and for a matrix $M\in\R^{n\times n}$ write the operator norm $\|M\|_p:=\sup_{v\in\R^n:\|v\|_p=1}\|Mv\|_p$. Now denote the entries of the block diagonal matrix $\hat{W}$ by $\hat{W}_{(i,j),(i',j')}$, so that
\begin{equation}\label{eq:hatW-notation}
    \hat{W}_{(i,j),(i',j')} =
    (W_i(\hat{\rho}))_{jj'}\cdot\ind_{(i=i')}
\end{equation}
Then we also have
\begin{align}
    \big\|(\chi^\top \hat{W}\chi)^{-1}\big\|_1
    &=
    \big\|(\chi^\top \hat{W}\chi)^{-1}\big\|_{\infty}
    \label{eq:dd1}
    \\
    &\leq
    \Big(\min_{m\in[M]}\Big(\big|\mathrm{e}_m^\top \chi^\top \hat{W}\chi \mathrm{e}_m\big|-\Big|\sum_{m'\neq m}e_{m'}^\top \chi^\top \hat{W}\chi \mathrm{e}_m\Big|\Big)\Big)^{-1}
    \label{eq:dd2}
    \\
    &\leq
    \Big(\min_{m\in[M]}\Big(\Big|\sum_{(i,j)}\ind_{(X_{ij}\in L_m)}\hat{W}_{(i,j),(i,j)}\Big| - \Big|\sum_{(i,j):X_{ij}\in L_m}\sum_{(i',j')\neq(i,j)} \hat{W}_{(i,j),(i',j')}\Big|\Big)\Big)^{-1}
    \notag
    \\
    &=
    \Big(\min_{m\in[M]}\Big(\sum_{(i,j):X_{ij}\in L_m} \Big\{\hat{W}_{(i,j),(i,j)}
    - \sum_{(i',j')\neq(i,j)}\big|\hat{W}_{(i,j),(i',j')}\big|\Big\}\Big)\Big)^{-1}
    \notag
    \\
    &\leq
    \frac{1}{k\,\tilde{\nu}(\hat{W})} \leq k^{-1}c_W^{-1},
    \label{eq:dd3}
\end{align}
where each step in the above is justified as follows:~\eqref{eq:dd1} follows by symmetry of $(\chi^\top\hat{W}\chi)^{-1}$; \eqref{eq:dd2} follows by the Ahlberg--Nilson--Varah bound for diagonally dominant matrices, which can be applied as
\begin{equation*}
    \tilde{\nu}(\hat{W}) := \min_{i\in\cI_{\text{eval}}, j\in[n_i]}\Big((W_i(\hat{\rho}))_{jj'}-\sum_{j'\neq j}\big|(W_i(\hat{\rho}))_{jj'}\big|\Big) \geq c_W,
\end{equation*}
by Assumption~\ref{ass:weights} (see also Lemma~\ref{lem:diag-dom}); and \eqref{eq:dd3} follows by Assumption~\ref{regularity}. 
Therefore
    \begin{multline*}
        \sum_{i=1}^I\sum_{j=1}^{n_i}|\omega_{ij}(x)|
        =
        1 + 2\|\omega(x)^\top \bar{\chi}{\bf 1}_M\|
        \leq 
        1 + 2\|\bar{\chi}^\top \hat{W}\chi\|_1 \|(\chi^\top \hat{W}\chi)^{-1}\|_1
        \\
        \leq
        1 + \frac{2(2k-1)}{k}\cdot\frac{\|\hat{W}\|_1}{\zeta\hat{W})}
        \leq
        1+\frac{2(2k-1)}{k}\cdot\frac{C_W}{c_W}
        \leq \frac{5C_W}{c_W}
        =: C_1,
    \end{multline*}
    completing the proof. Note in particular that $C_1$ by construction is a universal constant i.e.~carries no $x\in\cX$ nor $P\in\cP$ dependence.

\end{proof}

\begin{remark}\label{remark:weight-class-l2}
    Note that Assumption~\ref{ass:weights} on the weight structure is satisfied for inverse correlation structures that take the form of equicorrelated structures with finite group size, and inverse $\text{AR}(1)$ processes of potentially diverging group sizes (see Lemma~\ref{lem:diag-dom}). 
    Note however the assumption~\ref{ass:weights} may be relaxed in a number of cases. In particular in the case $d=1$ and $k\sim1$ 
    the condition on weights $W$ can be weakened to the $\ell_2$ condition number of the weight matrix $\kappa_2(W):=\frac{\Lambda_{\max}(W)}{\Lambda_{\min}(W)}$ being bounded. This relaxation follows by observing
    \begin{align*}
    &\quad\sum_{i=1}^I \sum_{j=1}^{n_i}
    |\omega_{ij}(x)| \, \diam L(X_{ij})
    \\
    &\leq
    \bigg(\sum_{i}\sum_{j=1}^{n_i}\omega_{ij}(x)^2\bigg)^{\frac{1}{2}}\bigg(\sum_{i}\sum_{j=1}^{n_i}\diam L(X_{ij})^2\bigg)^{\frac{1}{2}}
    \\
    &\leq \|\omega\|_2^2\bigg(\sum_{i}\sum_{j=1}^{n_i}\diam L(X_{ij})^2\bigg)^{\frac{1}{2}}
    \\
    &\leq \frac{\Lambda_{\max}(W)}{k\Lambda_{\min}(W)}\bigg(\sum_{i}\sum_{j=1}^{n_i}\diam L(X_{ij})^2\bigg)^{\frac{1}{2}}
    \\
    &\leq \frac{\Lambda_{\max}(W)}{k\Lambda_{\min}(W)}\left(\frac{s_N}{2k-1}\right)^{-\frac{1}{2}(1-\epsilon)\frac{\log((1-\alpha)^{-1})}{\log(\alpha^{-1})}\cdot\frac{\pi}{d}}\bigg(\sum_{i}\sum_{j=1}^{n_i}\diam L(X_{ij})\bigg)^{\frac{1}{2}}
    \\
    &= \frac{\Lambda_{\max}(W)}{k\Lambda_{\min}(W)}\left(\frac{s_N}{2k-1}\right)^{-\frac{1}{2}(1-\epsilon)\frac{\log((1-\alpha)^{-1})}{\log(\alpha^{-1})}\cdot\frac{\pi}{d}}\bigg(\sum_{m=1}^M\sum_{(i,j):X_{ij}\in L_m}\diam L_m\bigg)^{\frac{1}{2}}
    \\
    &\leq \frac{\sqrt{2k-1}\,\Lambda_{\max}(W)}{k\,\Lambda_{\min}(W)}\left(\frac{s_N}{2k-1}\right)^{-\frac{1}{2}(1-\epsilon)\frac{\log((1-\alpha)^{-1})}{\log(\alpha^{-1})}\cdot\frac{\pi}{d}}\bigg(\sum_{m=1}^M\diam L_m\bigg)^{\frac{1}{2}}
    \\
    &= \frac{\sqrt{2k-1}\,\Lambda_{\max}(W)}{k\,\Lambda_{\min}(W)}\left(\frac{s_N}{2k-1}\right)^{-\frac{1}{2}(1-\epsilon)\frac{\log((1-\alpha)^{-1})}{\log(\alpha^{-1})}\cdot\frac{\pi}{d}},
\end{align*}
which makes use of the inequality
\begin{multline}\label{eq:omega2}
        \|\omega(x)\|_2^2
        = \mathrm{e}_{J(x)}^\top (\chi^\top W\chi)^{-1} (\chi^\top W^2\chi) (\chi^\top W\chi)^{-1} \mathrm{e}_{J(x)}
        \leq \Lambda_{\max}(W) \mathrm{e}_{J(x)}^\top (\chi^\top W\chi)^{-1} \mathrm{e}_{J(x)}
        \\
        \leq \frac{\Lambda_{\max}(W)}{\Lambda_{\min}(W)\sigma_{\min}^2(\chi^\top )}
        \leq \frac{\Lambda_{\max}(W)}{\Lambda_{\min}(W)} \cdot\frac{1}{k}.
\end{multline}
Note however that whilst these weight structures still provide control over the bias, this is at a slower rate than in Lemma~\ref{lem:unbiasedness} by a factor of two and so the result of Theorem~\ref{thm:minimax} only would be shown here at a cost of two in the bias convergence rate, and the results of Theorem~\ref{thm:normality} follow for this wider weight class for $\beta>1-\Big(1+\frac{d}{2\pi}\frac{\log(\alpha^{-1})}{\log((1-\alpha)^{-1})}\Big)^{-1}$.
\end{remark}

\subsection{Interpreting Assumption~\ref{ass:weights} in the context of popular working correlation models}\label{appsec:equi-ar1}

In the following we show that the general weight structure property of Assumption~\ref{ass:weights} holds for equicorrelated and $\text{AR}(1)$ working correlation structures.

\begin{lemma}\label{lem:diag-dom}
    Suppose the clustered random forest weight structure $(W_i(\rho))_{\rho\in\Gamma}$ take either of the forms:
    \begin{enumerate}[label=(\roman*)]
        \item Equicorrelated structure, with finite group size 
        i.e.~$W_i(\rho) = \big(I_{n_i}+\rho{\bf1}_{n_i}{\bf1}_{n_i}^\top\big)^{-1}$ with $\Gamma\subset[0,\infty)$ bounded. 
        \item Autoregressive $\mathrm{AR}(1)$ structure, with potentially diverging group size 
        i.e. $W_i(\rho)=\big(\rho^{|j-k|}\big)^{-1}_{(j,k)\in[n_i]\times[n_i]}$ with $\Gamma\subset(-1+t,1-t)$ for some $t>0$.
    \end{enumerate}
    Then in either case there exists universal constants $c_W,C_W,C_W'>0$ such that
    \begin{equation*}
        \sup_{\rho\in\Gamma}\|W_i(\rho)\|_1 \leq C_W
        ,\qquad
        \inf_{\rho\in\Gamma}\zeta(W_i(\rho))\geq c_W
        ,\quad\;\text{and}\quad\;
        \sup_{\rho\in\Gamma}\Lambda_{\max}\bigg(\frac{\partial W_i(\rho)}{\partial\rho}\bigg)\leq C_W'
    \end{equation*}
    Recall the function $\nu$ is as defined in Assumption~\ref{ass:weights}.
\end{lemma}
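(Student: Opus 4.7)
\textbf{Proof plan for Lemma~\ref{lem:diag-dom}.} The plan is to handle the two weight structures separately, in each case writing down the entries of $W_i(\rho)$ explicitly and then reading off the three bounds directly. Both computations are essentially elementary linear algebra, with no significant obstacle.

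\medskip
\emph{Case (i), equicorrelated.} I would first apply the Sherman--Morrison identity to get the closed form
\[
W_i(\rho) = I_{n_i} - \frac{\rho}{1+\rho n_i}{\bf 1}_{n_i}{\bf 1}_{n_i}^\top,
\]
so the diagonal entries equal $\frac{1+\rho(n_i-1)}{1+\rho n_i}$ and the off-diagonal entries equal $-\frac{\rho}{1+\rho n_i}$. For $\rho\in\Gamma\subset[0,\rho_{\max}]$ and $n_i\leq n_{\max}$ the absolute row sum is $\frac{1+2\rho(n_i-1)}{1+\rho n_i}$, bounded above by some $C_W$. The diagonal-dominance quantity simplifies to $\zeta(W_i(\rho))=\frac{1}{1+\rho n_i}\geq\frac{1}{1+\rho_{\max} n_{\max}}=:c_W$. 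Finally $\partial_\rho W_i(\rho) = -\frac{1}{(1+\rho n_i)^2}{\bf 1}_{n_i}{\bf 1}_{n_i}^\top$ is rank one with largest-in-magnitude eigenvalue $\frac{n_i}{(1+\rho n_i)^2}\leq n_{\max}$, giving the bound $C_W'$ on $\Lambda_{\max}((\partial_\rho W_i(\rho))^2)$.

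\medskip
\emph{Case (ii), $\mathrm{AR}(1)$.} Here I would use the well-known tridiagonal form of the inverse of an AR(1) covariance: $W_i(\rho)$ has diagonal entries $\frac{1+\rho^2}{1-\rho^2}$ in the interior and $\frac{1}{1-\rho^2}$ at the two boundaries, sub- and super-diagonal entries $-\frac{\rho}{1-\rho^2}$, and zeros elsewhere. This immediately gives a row-sum bound independent of $n_i$:
\[
\|W_i(\rho)\|_1 \leq \frac{1+\rho^2+2|\rho|}{1-\rho^2} = \frac{1+|\rho|}{1-|\rho|}\leq \frac{2-t}{t}.
\]
For the diagonal-dominance bound, the interior rows give $\frac{1+\rho^2-2|\rho|}{1-\rho^2}=\frac{1-|\rho|}{1+|\rho|}\geq \frac{t}{2-t}$, and the boundary rows give $\frac{1-|\rho|}{1-\rho^2}\geq\frac{1}{2-t}$, so $\zeta(W_i(\rho))$ is bounded below by a positive constant depending only on $t$. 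The derivative $\partial_\rho W_i(\rho)$ is again tridiagonal, with entries rational in $\rho$ whose denominators are powers of $1-\rho^2\geq t(2-t)$; its operator norm is therefore bounded by Gershgorin uniformly over $\rho\in\Gamma$ and over $n_i$, yielding $C_W'$.

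\medskip
\emph{Expected difficulties.} There is essentially no analytical obstacle: the equicorrelated case reduces to Sherman--Morrison, and the AR(1) case reduces to the standard tridiagonal inverse formula. The only care needed is to ensure the bounds are uniform in $n_i$ (automatic for AR(1) because the row sums do not depend on $n_i$, and enforced for the equicorrelated structure by the finite cluster-size hypothesis), and uniform in $\rho\in\Gamma$ via compactness and the gap $|\rho|\leq 1-t$ in case (ii) which bounds all denominators $1-\rho^2$ away from zero.
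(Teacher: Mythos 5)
Your proposal is correct and follows essentially the same route as the paper: explicit closed forms (Sherman--Morrison for the equicorrelated case, the standard tridiagonal inverse for AR(1)), followed by direct computation of the absolute row sums, the diagonal-dominance gap $\zeta$, and the derivative in $\rho$. The only cosmetic difference is that in the AR(1) case the paper works with the inverse scaled by $(1-\rho^2)$, whereas you keep the exact inverse with $\frac{1}{1-\rho^2}$ factors; since $1-\rho^2$ is bounded away from zero on $\Gamma$ this changes nothing.
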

\begin{proof}
    As $W=\diag(W_i(\rho))$ is block diagonal, with blocks $W_{(i,\cdot),(i,\cdot)}=W_i(\rho)$ for each $i\in[I]$ (recall the notation as in~\eqref{eq:hatW-notation}), it follows that
    \begin{equation*}
        \frac{\|W\|_1}{\zeta(W)} \leq \frac{\max_{i\in[I]}\|W_{(i,\cdot),(i,\cdot)}\|_1}{\min_{i\in[I]}\nu\big(W_{(i,\cdot),(i,\cdot)}\big)}
        =
        \frac{\max_{i\in[I]}\|W_i(\rho)\|_1}{\min_{i\in[I]}\nu\big(W_i(\rho)\big)}
        .
    \end{equation*} 
    We consider each case in turn.
    
    \medskip
    \noindent{\bf Case (i): Equicorrelated, finite group size: }
    For the equicorrelated structure, whose correlation matrix takes the form $\big(\ind_{\{j=k\}}+\ind_{\{j\neq k\}}\frac{\rho}{1+\rho}\big)_{(j,k)\in[n_i]^2}$, the resulting weights (proportional to the inverse of the correlation matrix) takes the form
    \begin{equation*}
        W_{(i,\cdot),(i,\cdot)} = \left(\ind_{\{j=k\}} - \frac{\rho}{1 + \rho n_i}\right)_{(j,k)\in[n_i]^2}.
    \end{equation*}
    Thus
    \begin{align*}
        \|W_{(i,\cdot),(i,\cdot)}\|_1 
        &= \frac{|1+(n_i-1)\rho|+(n_i-1)|\rho|}{|1+n_i\rho|}
        = \frac{1+2(n_i-1)\rho\ind_{\{\rho>0\}}}{|1+n_i\rho|},
        \\
        \nu\big(W_{(i,\cdot),(i,\cdot)}\big) &= \frac{|1+(n_i-1)\rho|-(n_i-1)|\rho|}{|1+n_i\rho|}
        =
        \frac{1+2(n_i-1)\rho\ind_{\{\rho<0\}}}{|1+n_i\rho|},
    \end{align*}
    noting that for $W$ to be positive definite it is necessary that $1+n_i\rho>0$, and so $1+(n_i-1)\rho>0$. 
    Then
        \begin{multline*}
        \frac{\|W\|_1}{\zeta(W)}
        \leq
        \frac{\max_{i\in[I]}\|W_{(i,\cdot),(i,\cdot)}\|_1}{\min_{i\in[I]}\nu\big(W_{(i,\cdot),(i,\cdot)}\big)}
        \leq
        \frac{\max_{i\in[I]}|1+n_i\rho|}{\min_{i\in[I]}|1+n_i\rho|}\cdot\frac{\max_{i\in[I]}\big(1+2(n_i-1)\rho\ind_{\{\rho>0\}}\big)}{\min_{i\in[I]}\big(1+2(n_i-1)\rho\ind_{\{\rho<0\}}\big)},
    \end{multline*}
    which is finite whenever $\rho>0$ and  $n_i$ is finite. 
    Further,
    \begin{equation*}
        \bigg\{\frac{\partial W_i(\rho)}{\partial\rho}\bigg\}^2 = \frac{n_i}{(1+\rho n_i)^4}{\bf1}_{n_i}{\bf1}_{n_i}^\top,
    \end{equation*}
    and so
    \begin{equation*}
        \Lambda_{\max}\bigg(\bigg\{\frac{\partial W}{\partial\rho}\bigg\}^2\bigg)
        \leq
        \max_{i\in[I]}\Lambda_{\max}\bigg(\bigg\{\frac{\partial W_i(\rho)}{\partial\rho}\bigg\}^2\bigg)
        =
        \frac{n_i^2}{(1+\rho n_i)^4},
    \end{equation*}
    which is finite.
    
    \medskip
    \noindent{\bf Case (ii): Autoregressive $\text{AR}(1)$, potentially diverging group size: }
    For the autocorrelated structure whose correlation matrix takes the form $(\rho^{|j-k|})_{(j,k)\in[n_i]^2}$, the resulting weights (proportional to the inverse of the correlation matrix) takes the form
    \begin{equation*}
    (W_{(i,j),(i,k)})_{(j,k)\in[n_i]^2}
        =\begin{cases}
        \left( \ind_{\{j=k\}} + \rho^2\ind_{\{1<j=k<n_i\}} - \rho \ind_{\{|j-k|=1\}} \right)_{(j,k)\in[n_i]^2} & \text{if } n_i > 1
        \\
        (1-\rho^2) &\text{if } n_i = 1
        \end{cases},
    \end{equation*}
    and so
    \begin{equation*}
        \|W_{(i,\cdot),(i,\cdot)}\|_1 
        =
        \begin{cases}
            \big(1+|\rho|\big)^2 &\text{if }n_i>2
            \\
            1+|\rho| &\text{if }n_i=2
            \\
            1-\rho^2 &\text{if }n_i=1
        \end{cases}
    ,\qquad
        \nu\big(W_{(i,\cdot),(i,\cdot)}\big)
        =
        \begin{cases}
            \big(1-|\rho|\big)^2 &\text{if }n_i>2
            \\
            1-|\rho| &\text{if }n_i=2
            \\
            1-\rho^2 &\text{if }n_i=1
        \end{cases},
    \end{equation*}
    and thus
    \begin{equation*}
        \frac{\|W\|_1}{\zeta(W)}
        \leq
        \frac{\max_{i\in[I]}\|W_{(i,\cdot),(i,\cdot)}\|_1}{\min_{i\in[I]}\nu\big(W_{(i,\cdot),(i,\cdot)}\big)}
        \leq
        \frac{\max\big((1+|\rho|)^2,\,1+|\rho|,\,1-\rho^2\big)}{\min\big((1-|\rho|)^2,\,1-|\rho|,\,1-\rho^2\big)}
        \leq
        \bigg(\frac{1+|\rho|}{1-|\rho|}\bigg)^2,
    \end{equation*}
    in all cases. Note this result holds irrespective of cluster sizes $n_i$, and thus clusters can be of diverging size. Further
    \begin{equation*}
        \bigg(\frac{\partial W_i(\rho)}{\partial\rho}\bigg)_{(j,k)\in[n_i]^2} = \begin{cases}
            \big(2\rho\ind_{\{1<j=k<n_i\}}-\ind_{\{|j-k|=1\}}\big)_{(j,k)\in[n_i]^2} &\quad\text{if $n_i>1$}
            \\
            (-2\rho) &\quad\text{if $n_i=1$}
        \end{cases},
    \end{equation*}
    and so, by the Gershgorin circle theorem
    \begin{equation*}
        \bigg|\Lambda_{\max}\bigg(\frac{\partial W_i(\rho)}{\partial\rho}\bigg)\bigg| \vee \bigg|\Lambda_{\min}\bigg(\frac{\partial W_i(\rho)}{\partial\rho}\bigg)\bigg| \leq 1+2|\rho| \leq 3,
    \end{equation*}
    and thus
    \begin{equation*}
        \Lambda_{\max}\bigg(\bigg\{\frac{\partial W}{\partial\rho}\bigg\}^2\bigg)
        \leq
        \max_{i\in[I]}\Lambda_{\max}\bigg(\bigg\{\frac{\partial W_i(\rho)}{\partial\rho}\bigg\}^2\bigg)
        \leq 9,
    \end{equation*}
    which again is finite.
\end{proof}

\subsection{Proof of Theorem~\ref{thm:minimax}}\label{appsec:proof-thm-minimax}

Theorem~\ref{thm:minimax} incorporates the bias bound of Lemma~\ref{lem:unbiasedness} with an existing variance bound, which follows from results by~\citet{efron} (also see~\citet{dynkin, vitale, cevid}). 
For simplicity of exposition from hereon, and without loss of generality, we consider the simplified setting where, in place of Assumption~\ref{honesty}, all $s_I$ subsampled data points in each tree are used solely for the prediction / evaluation step (using instead auxiliary data for splits and weight estimation); the relevant additional terms are absorbed into the $\epsilon>0$ term in Theorem~\ref{thm:minimax} (see also e.g.~\citet[Corollary 6]{wager}). We also denote the Monte-Carlo limit random forest $\hat{\mu}_I^{\mathrm{MC}}$ by $\hat{\mu}$ for simplicity.

\begin{proof}[Proof of Theorem~\ref{thm:minimax}]
    Recall the bias--variance decomposition
    \begin{equation*}
        \E_P\Big[\big(\hat{\mu}(x)-\mu(x)\big)^2\Big]
        =
        \big\{\E_P[\hat{\mu}(x)]-\mu(x)\big\}^2
        +
        \Var_P\big(\hat{\mu}(x)\big).
    \end{equation*}
    Let $\Omega_I$ be the event of Lemma~\ref{lem:unbiasedness} for which on $\Omega_I$ we have for sufficiently large $I$,
    \begin{equation*}
        \supP\big\{\E_P[\hat{\mu}(x)]-\mu(x)\big\}^2 \leq C_{\mathrm{bias}}^2\biggl(\frac{n_{\mathrm{c}}s_I}{k}\biggr)I^{-2(1-\epsilon)\frac{\log((1-\alpha)^{-1})}{\log(\alpha^{-1})}\frac{\pi}{d}\beta},
    \end{equation*}
    and $\supP\PP_P(\Omega_I^c)\leq s_N^{-3}$ for sufficiently large $I$. It thus remains to prove 
    the relevant upper bound on the variance quantity, which follows by standard results from~\citet{ANOVA}; we include an overview here for completeness. Fix a $P\in\cP$. Then each individual decision tree prediction $T(x)$ making up $\hat{\mu}(x)$ admits the Efron--Stein ANOVA decomposition~\citep{efron}
\begin{equation*}
    T(x) = \E_P[T(x)] + \sum_{k=1}^{s_I}
    \sum_{i_1<\cdots<i_k}T^{(k)}(Z_{i_1},\ldots,Z_{i_k}),
\end{equation*}
for some symmetric functions $(T^{(k)})_{k=1,\ldots,s}$ such that $\big(T^{(k)}(Z_{i_1},\ldots,Z_{i_k})\big)_{k\in\{1,2,\ldots,s\}}$ are all mean-zero, uncorrelated functions. 
Note in particular that
\begin{equation}\label{eq:T^{(1)}}
    T^{(1)}(Z_i) = \E_P[T(x)\given Z_i]-\E_P[T(x)].
\end{equation}
Similarly the Efron--Stein ANOVA decomposition for the aggregated random forest predictor $\hat{\mu}(x)$ is
\begin{equation*}
    \hat{\mu}(x) = \E_P[T(x)] + \binom{I}{s_I}^{-1} \sum_{k=1}^{s_I}
        \binom{I-k}{s_I-k} \sum_{i_1<\cdots<i_k}
        T^{(k)}(Z_{i_1},\ldots,Z_{i_k}).
\end{equation*}
Note in particular, such a decomposition holds even if the tree $T$ is not symmetric and $(Z_1,\ldots,Z_I)$ are independent but not identically distributed (see~\citet[Comment 4]{ANOVA}). Then
\begin{align*}
    \Var_P\big(\hat{\mu}(x)\big)
    &=
    \binom{I}{s_I}^{-2}\sum_{k=1}^{s_I}\binom{I-k}{s_I}^2\Var_P\bigg(\sum_{i_1<\cdots<i_k}T^{(k)}(Z_{i_1},\ldots,Z_{i_k})\bigg)
    \\
    &= \frac{s_I^2}{I^2}\Var_P\bigg(\sum_{i=1}^I T^{(1)}(Z_i)\bigg)
    +
    \binom{I}{s_I}^{-2}\sum_{k=2}^{s_I}\binom{I-k}{s_I-k}^2\Var_P\bigg(\sum_{i_1<\cdots<i_k}T^{(k)}(Z_{i_1},\ldots,Z_{i_k})\bigg)
    \\
    &\leq
    \frac{s_I^2}{I^2}\Var_P\bigg(\sum_{i=1}^I T^{(1)}(Z_i)\bigg)
    +
    \frac{s_I^2}{I^2}\sum_{k=2}^{s_I}\Var_P\bigg(\sum_{i_1<\cdots<i_k}T^{(k)}(Z_{i_1},\ldots,Z_{i_k})\bigg)
    \\
    &\leq
    \frac{s_I^2}{I^2}\Var_P\bigg(\sum_{i=1}^I T^{(1)}(Z_i)\bigg)
    +
    \frac{s_I^2}{I^2}\sum_{k=1}^{s_I}\Var_P\bigg(\sum_{i_1<\cdots<i_k}T^{(k)}(Z_{i_1},\ldots,Z_{i_k})\bigg)
    \\
    &\leq
    \frac{s_I^2}{I^2}\Var_P\bigg(\sum_{i=1}^I T^{(1)}(Z_i)\bigg)
    +
    \frac{s_I^2}{I^2}\Var_P\bigg(\sum_{k=1}^{s_I}\sum_{i_1<\cdots<i_k}T^{(k)}(Z_{i_1},\ldots,Z_{i_k})\bigg)
    \\
    &=
    \frac{s_I^2}{I^2}\sum_{i=1}^I\Var_P\big(\E_P[T(x)\given Z_i]\big)
    +
    \frac{s_I^2}{I^2}\Var_P\big(T(x)\big)
    .
\end{align*}
For identically distributed data, and by symmetry of the decision trees, we have that
\begin{equation*}
    \frac{s_I^2}{I^2}\sum_{i=1}^I\Var_P\big(\E_P[T(x)\given Z_i]\big)
    \leq
    \frac{s_I}{I}\Var_P\big(T(x)\big),
\end{equation*}
see for example~\citet{dynkin, vitale}~\citet[Lemma 9]{cevid}. 
Therefore
\begin{equation*}
    \Var_P\big(\hat{\mu}(x)\big) \leq \bigg(\frac{s_I}{I}+\frac{s_I^2}{I^2}\bigg)\Var_P\big(T(x)\big)
    \leq
    \frac{(1+\epsilon')s_I}{I}\Var_P\big(T(x)\big),
\end{equation*}
for some arbitrarily small $\epsilon'>0$ and for sufficiently large $I$. 

We now claim that $\sup_{P\in\cP}\sup_{x\in\cX}\Bigl|\frac{\Var_P(T_{\hat\rho}(x))}{\Var_P(T_{\rho^*}(x))}-1\Bigr|=o(1)$. For every $\rho\in\Gamma$, and writing $X=(X_{i})_{i\in\cI_{\mathrm{eval}}}$,
\begin{align*}
    &\quad\;
    \sup_{x\in\cX}\bigl|\Var_P\bigl(T_{\rho}(x)\bigr)-\E_P\bigl[\Var_P\bigl(T_{\rho}(x)\biggiven X\bigr)\bigr]\bigr|
    \\
    &=
    \sup_{x\in\cX}\Var_P\bigl(\E_P[T_{\rho}(x)\given X]\bigr)
    \\
    &\leq
    \sup_{x\in\cX}\E_P\bigl(\{\E_P[T_{\rho}(x)-\mu(x)\given X]\}^2\bigr)
    \\
    &\leq
    L_\mu^2d(1+C_1)^2\Bigl(\frac{s_N}{2k-1}\Bigr)^{-2(1-\epsilon)\frac{\log((1-\alpha)^{-1})}{\log(\alpha^{-1})}\cdot\frac{\pi}{d}}.
\end{align*}
Moreover
\begin{align*}
    &
    \quad\;
    \biggl|\frac{\partial}{\partial\rho}\E_P\bigl[\Var_P\bigl(T_{\rho}(x)\biggiven X\bigr)\bigr]\biggr|
    \\
    &=
    \biggl|\frac{\partial}{\partial\rho}\E_P\bigl[
    \mathrm{e}_{J(x)}^\top(\chi^\top W\chi)^{-1}(\chi^\top W\Sigma W\chi)(\chi^\top W\chi)^{-1}\mathrm{e}_{J(x)}
    \bigr]\biggr|
    \\
    &=
    \bigl|\E_P\bigl[
    \mathrm{e}_{J(x)}^\top A^{-1}BA^{-1}CA^{-1}\mathrm{e}_{J(x)}
    +
    \mathrm{e}_{J(x)}^\top A^{-1}CA^{-1}BA^{-1}\mathrm{e}_{J(x)}
    \\
    &\qquad
    +
    \mathrm{e}_{J(x)}^\top A^{-1}DA^{-1}\mathrm{e}_{J(x)}
    \bigr]\bigr|
    \\
    &\leq
    \E_P\bigl[
    \mathrm{e}_{J(x)}^\top A^{-1}(B+C)A^{-1}(B+C)A^{-1}\mathrm{e}_{J(x)}
    +
    \mathrm{e}_{J(x)}^\top A^{-1}BA^{-1}BA^{-1}\mathrm{e}_{J(x)}
    \\
    &\qquad
    +
    \mathrm{e}_{J(x)}^\top A^{-1}CA^{-1}CA^{-1}\mathrm{e}_{J(x)}
    +
    \mathrm{e}_{J(x)}^\top A^{-1}DA^{-1}\mathrm{e}_{J(x)}
    \bigr]
\end{align*}
where
\begin{equation*}
    A := \chi^\top W\chi,
    \quad
    B := \chi^\top\frac{\partial W}{\partial\rho}\chi,
    \quad
    C := \chi^\top W\Sigma W \chi,
    \quad
    D := \chi^\top \biggl(\frac{\partial W}{\partial\rho}\Sigma W + W\Sigma\frac{\partial W}{\partial\rho}\biggr) \chi.
\end{equation*}
Because
\begin{gather*}
    \Lambda_{\min}(A)\geq c_Wk,
    \quad
    \Lambda_{\max}(B) \leq C_W'(2k-1),
    \\
    \Lambda_{\max}(C) \leq \lambda C_W^2(2k-1),
    \quad
    \lambda_{\max}(D) \leq 2\lambda C_WC_W'(2k-1),
\end{gather*}
almost surely, and $$\E_P\bigl[\Var_P\bigl(T_{\rho}(x)\bigr)\biggiven X\bigr]\geq 
\E_P\bigl[\Lambda_{\min}(W^{1/2}\Sigma W^{1/2})\mathrm{e}_{J(x)}^\top A^{-1} \mathrm{e}_{J(x)}\biggiven X\bigr]
\geq \lambda^{-1}c_WC_W^{-1}
(2k-1)^{-1},$$ it follows that for any $\bar{\rho},\tilde{\rho}$, 
\begin{multline*}
    \biggl(\E_P\bigl[\Var_P\bigl(T^{(\bar{\rho})}(x)\biggiven X\bigr)\bigr]\biggr)^{-1}
    \biggl(\biggl|\frac{\partial}{\partial\rho}\E_P\bigl[\Var_P\bigl(T_{\rho}(x)\biggiven X\bigr)\bigr]\biggr|_{\rho=\tilde{\rho}}\biggr)
    \\
    \leq \frac{\lambda C_W(2k-1)}{c_W}\biggl(c_W^{-3}\frac{(2k-1)^2}{k^3}\bigr[(\lambda C_W^2+C_W')^2+(C_W')^2+\lambda^2C_W^4\bigr]+\frac{(2k-1)}{k^2}2c_W^{-2}\lambda C_WC_W'\biggr)
    \lesssim 1.
\end{multline*}
Thus by Lemmas~\ref{lem:VarT(x)}~and~\ref{lem:unif-rho},
\begin{equation*}
    \sup_{P\in\cP}\sup_{x\in\cX}\biggl|\frac{\Var_P(T_{\hat\rho}(x))}{\Var_P(T_{\rho^*}(x))}-1\biggr| = o(1).
\end{equation*}

Combining the above variance bound with Lemma~\ref{lem:unbiasedness} gives the required result;
\begin{equation*}
    \sup_{P\in\cP}\E_P\Big[\big(\hat{\mu}_I^{\mathrm{MC}}(x)-\mu(x)\big)^2\Big]
    \leq
    C_{\mathrm{bias}}^2 \biggl(\frac{s_N}{2k_I-1}\biggr)^{-2(1-\epsilon)\frac{\log((1-\alpha)^{-1})}{\log(\alpha^{-1})}\frac{\pi}{d}} +\, I^{-1}s_I\,\Var_P\bigl(T_{\rho^*}(x)\bigr), 
\end{equation*}
where the term involving $\epsilon'$ and the term on the event $\Omega_I$ is incorporated into the $\epsilon>0$ term for sufficiently large $I$. 
Finally for $I\geq I_0$ (note that $I_0$ carries no $x$ dependence),
\begin{multline*}
    \int_\cX\E_P\Bigl[\bigl(\hat{\mu}_I^{\mathrm{MC}}(x)-\mu(x)\bigr)^2\Bigr]dQ(x)
    \\
    \leq
    Q(\cX) C_{\mathrm{bias}}^2 \biggl(\frac{s_N}{2k_I-1}\biggr)^{-2(1-\epsilon)\frac{\log((1-\alpha)^{-1})}{\log(\alpha^{-1})}\frac{\pi}{d}}
    + Q(\cX) I^{-1}s_I \sup_{x\in\cX}\Var_P\bigl(T_{\rho^*}(x)\bigr)
    <\infty,
\end{multline*}
where $Q(\cX)=1$ and $k_I\sup_{x\in\cX}\Var_P(T_{\rho^*}(x))<\infty$ by Lemma~\ref{lem:VarT(x)}. Thus applying Fubini's Theorem gives the required result.

\end{proof}

\subsection{Proof of Theorem~\ref{thm:dist-shift-2}}

We adopt the same setup and notation of Section~\ref{appsec:proof-thm-minimax}. Contained within this section, we will use $s$ as a shorthand for $s_I$. 

\begin{proof}[Proof of Theorem~\ref{thm:dist-shift-2}]

Recall that $T_{\rho}(\cdot)$ is a the clustered random forest with fixed working correlation weighting $\rho\in\Gamma$. 
First note that
\begin{align*}
    &\quad\;
    \sup_{x\in\cX}\bigl|\Var_P\bigl(T_{\rho}(x)\bigr)-\E_P\bigl[\Var_P\bigl(T_{\rho}(x)\biggiven X_{\mathrm{eval}},Z_{\mathrm{split}}\bigr)\bigr]\bigr|
    \\
    &=
    \sup_{x\in\cX}\Var_P\bigl(\E_P[T_{\rho}(x)\given X_{\mathrm{eval}},Z_{\mathrm{split}}]\bigr)
    \\
    &\leq
    \sup_{x\in\cX}\E_P\bigl(\{\E_P[T_{\rho}(x)-\mu(x)\given X_{\mathrm{eval}},Z_{\mathrm{split}}]\}^2\bigr)
    \\
    &\leq
    L_\mu^2d(1+C_1)^2\Bigl(\frac{s_N}{2k-1}\Bigr)^{-2(1-\epsilon)\frac{\log((1-\alpha)^{-1})}{\log(\alpha^{-1})}\cdot\frac{\pi}{d}} 
    \\
    &\leq k^{-1}\epsilon_1,
\end{align*}
for all $I\geq I_1$, for sufficiently large $I_1\in\mathbb{N}$, where the penultimate inequality follows by Lemma~\ref{lem:unbiasedness}. We therefore consider from hereon a fixed arbitrary set of splits, and treat expectations throughout as conditional on the splits. Moreover, for any $x\in\cX$,
\begin{equation*}
    \E_P\bigl[\Var_P(T_{\rho}(x)\given X_{\mathrm{eval}},Z_{\mathrm{split}})\bigr]
    =
    \E_P\bigl[\mathrm{e}_{J(x)}^\top (\chi^\top W\chi)^{-1}(\chi^\top W\Sigma W\chi)(\chi^\top W\chi)^{-1}\mathrm{e}_{J(x)}\bigr],
\end{equation*}
thus
\begin{align*}
    \int_\cX\Var_P\bigl(T_{\rho}(x)\bigr)dQ(x)
    &
    \int_\cX\E_P\bigl[\mathrm{e}_{J(x)}^\top(\chi^\top W\chi)^{-1}(\chi^\top W\Sigma W\chi)(\chi^\top W\chi)^{-1}\mathrm{e}_{J(x)}\bigr]dQ(x) + o\big(k^{-1}\big)
    \\
    &=
    (1+o(1))\E_P\bigl[\tr\bigl\{\diag(Q(L_m))^\top(\chi^\top W\chi)^{-1}(\chi^\top W\Sigma W\chi)(\chi^\top W\chi)^{-1}\bigr\}\bigr],
\end{align*}
with the final equality following because, as in Lemma~\ref{lem:VarT(x)},
\begin{equation*}
    \int_{\cX}\E_P\bigl[\mathrm{e}_{J(x)}^\top(\chi^\top W\chi)^{-1}(\chi^\top W\Sigma W\chi)(\chi^\top W\chi)^{-1}\mathrm{e}_{J(x)}\bigr]dQ(x)
    \geq
    \frac{\Lambda_{\min}(W^{1/2}\Sigma W^{1/2})}{k\Lambda_{\max}(W)}\gtrsim \frac{1}{k}.
\end{equation*}

Now fix $\varrho\in(0,1)$ and $\kappa\geq1$. Also fix some arbitrary $0\leq a_1<a_2<b_1<b_2\leq1$, then take $Q_1=\text{Unif}[a_1,a_2]$, $Q_2=\text{Unif}[b_1,b_2]$, and define $P$ via the data generating mechanism $X_{ij}\iid\text{Unif}[0,1]$ and 
$$Y_i\given X_i \iid N\biggl(
\begin{pmatrix}
    \mu(X_{i1})\\\mu(X_{i2})
\end{pmatrix},\;
\begin{pmatrix}
    \sigma^2(X_{i1}) & \varrho\sigma(X_{i1})\sigma(X_{i2})
    \\
    \varrho\sigma(X_{i1})\sigma(X_{i2}) & \sigma^2(X_{i2})
\end{pmatrix}
\biggr),$$
where $\sigma(x):=\ind_{[0,a_2]}(x)+\frac{(\eta-1)x+(b_1-\eta a_2)}{b_1-a_2}\ind_{(a_2,b_1)}(x)+\eta\ind_{[b_1,1]}(x)$ for some sufficiently large $\eta>0$; specifically fix some $\eta\geq 2\varrho^{-1}\kappa^{1/2}$. st will help to introduce the notation
\begin{equation*}
    \Sigma:=\diag(\Sigma_i)_{i\in[s]},
    \qquad
    \Sigma_i := \begin{pmatrix}
        \sigma_{i1}^2 & \varrho\sigma_{i1}\sigma_{i2}
        \\
        \varrho\sigma_{i1}\sigma_{i2} & \sigma_{i2}^2
    \end{pmatrix},
\end{equation*}
with shorthand $\sigma_{ij}:=\sigma(X_{ij})$. Weights take the form of the block diagonal matrix 
\begin{equation*}
    W:=\diag(W_i)_{i\in[s]},
    \qquad
    W_i=\begin{pmatrix}
    1&-\rho\\-\rho&1
\end{pmatrix},
\end{equation*}
proportional to the inverse equicorrelated weight matrix in terms of some correlation parameter $\rho\in[0,1)$.

Let $M$ be the number of leaves, and denote the leaves $L_1,\ldots,L_M$ in $T$, with $L(x)$ being the leaf to which a covariate $x\in[0,1]$ belongs. For each $m\in[M]$, let $p_m:=P(L_m)=\Pr_P(X_{ij}\in L_m)$ and $q_m:=Q(L_m)=\Pr_{X'\sim Q}(X'\in L_m)$, and further write $\boldsymbol{p}:=(p_1,\ldots,p_M)$ and $\boldsymbol{q}:=(q_1,\ldots,q_M)$. Also take $\chi_i\in\{0,1\}^{2\times M}$ as in Algorithm~\ref{alg:cdt} and $\chi:=(\chi_1,\ldots,\chi_s)$. Then
\begin{multline*}
    k\int_\cX\E_P\bigl[\mathrm{e}_{J(x)}^\top(\chi^\top W\chi)^{-1}(\chi^\top W\Sigma W\chi)(\chi^\top W\chi)^{-1}\mathrm{e}_{J(x)}\bigr]dQ(x)
    \\
    =
    \int_\cX\E_P\bigl[\mathrm{e}_{J(x)}^\top T_s^{-1}S_sT_s^{-1}\mathrm{e}_{J(x)}\bigr]dQ(x)
    =\sum_{m=1}^Mq_m\E_P\bigl[\mathrm{e}_m^\top T_s^{-1}S_sT_s^{-1}\mathrm{e}_m\bigr],
\end{multline*}
where
\begin{equation*}
    T_s := k^{-1}\chi^\top W\chi = k^{-1}\sum_{i=1}^s\chi_i^\top W_i\chi_i,
    \qquad
    S_s := k^{-1}\chi^\top W \Sigma W \chi = k^{-1}\sum_{i=1}^s\chi_i^\top W_i\Sigma_iW_i\chi_i,
\end{equation*}
along with $T:=\E_P(T_s)$ and $S:=\E_P(S_s)$. 
By Chebyshev's inequality,
\begin{equation*}
    T_s=T+O_P\bigl(k^{-1/2}\bigr)=T+o_P(1),
    \qquad
    S_s=S+O_P\bigl(k^{-1/2}\bigr) = S+o_P(1).
\end{equation*}
Additionally
\begin{align*}
    \sum_{m=1}^Mq_m\E_P\bigl[\mathrm{e}_m^\top T_s^{-1}S_sT_s^{-1}\mathrm{e}_m\bigr] 
    &\leq
    k\sum_{m=1}^Mq_m\E_P\bigl[\mathrm{e}_m^\top (\chi^\top W \chi)^{-1}(\chi^\top W\Sigma W \chi) (\chi^\top W \chi)^{-1}\mathrm{e}_m\bigr]
    \\
    &\leq
    \frac{\E_P[\Lambda_{\max}(W^{1/2}\Sigma W^{1/2})]}{\Lambda_{\min}(W)}
    \leq
    \frac{1+\rho}{1-\rho}\eta^2(1+\varrho), 
\end{align*}
and so applying dominated convergence gives
\begin{equation*}
    k\int_\cX\Var_P\bigl(T_{\rho}(x)\bigr)dQ(x)
    =
    \mathcal{R}_Q(\rho) + o(1),
\end{equation*}
where
\begin{equation*}
    \mathcal{R}_Q(\rho)
    :=
    k\sum_{m=1}^Mq_m\mathrm{e}_m^\top T^{-1}ST^{-1}\mathrm{e}_m.
\end{equation*}
Note in the above the dependence on $Q$ is seen through $q_1,\ldots,q_m$. 
Moreover,
\begin{equation*}
    \mathcal{R}_Q(\rho)
    \geq
    k\Lambda_{\min}(S) \{\Lambda_{\max}(T)\}^{-2}\geq 
    \frac{(1-\rho)(1-\varrho)}{1+\rho}\cdot\frac{k^2}{(2k-1)^2}
    \gtrsim 1,
\end{equation*}
as $\Lambda_{\max}(T)\leq \E_P(\Lambda_{\max}(T_s))\leq (1+\rho)(2k-1)$ by Jensen's inequality, and $\Lambda_{\min}(S)\geq k(1-\rho^2)(1-\varrho)$. Therefore 
\begin{equation*}
    k\int_\cX\Var_P\bigl(T_{\rho}(x)\bigr)dQ(x)
    =(1+o(1))\mathcal{R}_Q(\rho).
\end{equation*}

We proceed to evaluate $\mathcal{R}_Q(\rho)$. For each $m\in[M]$,
\begin{multline}\label{eq:chie}
    \chi_i\mathrm{e}_m = \boldsymbol{1}_2\ind_{L_m}(X_{i1})\ind_{L_m}(X_{i2})
    +e_1\ind_{L_m}(X_{i1})\ind_{L_m^c}(X_{i2})
    \\
    +e_2\ind_{L_m^c}(X_{i1})\ind_{L_m}(X_{i2})
    +\boldsymbol{0}_2\ind_{L_m^c}(X_{i1})\ind_{L_m^c}(X_{i2}),
\end{multline}
and so
\begin{equation*}
    \mathrm{e}_m^\top \chi_i^\top W_i\chi_i \mathrm{e}_m
    =
    2(1-\rho)\ind_{L_m}(X_{i1})\ind_{L_m}(X_{i2})
    + \ind_{L_m}(X_{i1})\ind_{L_m^c}(X_{i2})
    + \ind_{L_m^c}(X_{i1})\ind_{L_m}(X_{i2}),
\end{equation*}
and further
\begin{equation}\label{eq:T-same-m}
    s^{-1}kT_{mm} = 2(1-\rho)p_m^2+2p_m(1-p_m)
    =
    2p_m-2\rho p_m^2.
\end{equation}
Moreover, for $m,\tilde{m}\in[M]$ with $m\neq\tilde{m}$,
\begin{equation*}
    \mathrm{e}_m^\top \chi_i^\top W_i\chi_i e_{\tilde{m}}
    =
    -\rho\ind_{L_m}(X_{i1})\ind_{L_{\tilde{m}}}(X_{i2})
    -\rho\ind_{L_{\tilde{m}}}(X_{i1})\ind_{L_m}(X_{i2})
    ,
\end{equation*}
thus
\begin{equation}\label{eq:T-diff-m}
    s^{-1}kT_{m\tilde{m}} = -2\rho p_m^2.    
\end{equation}
Combining~\eqref{eq:T-same-m}~and~\eqref{eq:T-diff-m}, it follows that
\begin{equation*}
    T = 2sk^{-1}\bigl(\diag(\boldsymbol{p})-\rho\,\boldsymbol{p}\boldsymbol{p}^\top\bigr).
\end{equation*}
By the Sherman--Morrison formula
\begin{equation*}
    T^{-1} = 
    \frac{k}{2s}\Bigl(\diag(\boldsymbol{p})^{-1}+\frac{\rho}{1-\rho}\,\boldsymbol{1}\boldsymbol{1}^\top\Bigr)
    ,
\end{equation*}
and so for $m\in[M]$,
\begin{equation*}
    T^{-1}\mathrm{e}_m = \frac{k}{2s}\Bigl(p_m^{-1}\mathrm{e}_m+\frac{\rho}{1-\rho}\boldsymbol{1}\Bigr),
\end{equation*}
thus
\begin{align}
    4s^2k^{-2} \mathrm{e}_m^\top T^{-1}ST^{-1}\mathrm{e}_m
    &=
    \Bigl(p_m^{-1}\mathrm{e}_m+\frac{\rho}{1-\rho}\boldsymbol{1}\Bigr)^\top S\Bigl(p_m^{-1}\mathrm{e}_m+\frac{\rho}{1-\rho}\boldsymbol{1}\Bigr)
    \notag
    \\
    &=
    p_m^{-2}\mathrm{e}_m^\top S\mathrm{e}_m + \frac{2\rho}{1-\rho}p_m^{-1}\mathrm{e}_m^\top S\boldsymbol{1} + \frac{\rho^2}{(1-\rho)^2}\boldsymbol{1}^\top S\boldsymbol{1}.
    \label{eq:TST-decomp}
\end{align}

We now evaluate each term in the decomposition~\eqref{eq:TST-decomp} separately. 
The first term in~\eqref{eq:TST-decomp} decomposes as
\begin{align}
    \boldsymbol{1}^\top S\boldsymbol{1}
    &=
    \E_P[\boldsymbol{1}^\top\chi^\top W\Sigma W\chi\boldsymbol{1}]
    =
   \E_P[ \boldsymbol{1}^\top W\Sigma W\boldsymbol{1} ]
    =
    (1-\rho)^2\E_P[\boldsymbol{1}^\top\Sigma\boldsymbol{1}]
    \notag\\
    &=
    (1-\rho)^2\sum_{i=1}^s\E_P\bigl[\sigma^2(X_{i1})+\sigma^2(X_{i2})+2\varrho\sigma(X_{i1})\sigma(X_{i2})\bigr]
    \notag\\
    &=
    2s (1-\rho)^2 \Bigl(\E_P[\sigma^2(X)]+\varrho\bigl(\E_P[\sigma(X)]\bigr)^2\Bigr)
    \notag\\
    &=
    \tfrac{s}{2} (1-\rho)^2 \Bigl(2(1+\eta^2)+\varrho(1+\eta)^2\Bigr).
    \label{eq:TST-decomp-1}
\end{align}
For the second term of~\eqref{eq:TST-decomp},
\begin{align*}
    \mathrm{e}_m^\top S\boldsymbol{1}
    =
    (1-\rho)\E_P[\mathrm{e}_m^\top\chi^\top W\Sigma\boldsymbol{1}]
    =
    (1-\rho)\sum_{i=1}^s\E_P[\mathrm{e}_m^\top\chi_i^\top W_i\Sigma_i\boldsymbol{1}]
    ,
\end{align*}
and further, utilizing~\eqref{eq:chie},
\begin{align*}
    \mathrm{e}_m^\top\chi_i^\top W_i\Sigma_i\boldsymbol{1}
    &=
    \mathrm{e}_m^\top\chi_i^\top \begin{pmatrix}
        \{\sigma_{i1}^2+\varrho\sigma_{i1}\sigma_{i2}\}-\rho\{\sigma_{i2}^2+\varrho\sigma_{i1}\sigma_{i2}\}
        \\
        \{\sigma_{i2}^2+\varrho\sigma_{i1}\sigma_{i2}\}-\rho\{\sigma_{i1}^2+\varrho\sigma_{i1}\sigma_{i2}\}
    \end{pmatrix}
    \\
    &=
    \begin{cases}
        (1-\rho)\big\{(\sigma_{i1}^2+\sigma_{i2}^2)+2\varrho\sigma_{i1}\sigma_{i2}\big\}
        &\quad\text{if }(X_{i1},X_{i2})\in L_m\times L_m
        \\
        \sigma_{i1}^2-\rho\sigma_{i2}^2+(1-\rho)\varrho\sigma_{i1}\sigma_{i2}
        &\quad\text{if }(X_{i1},X_{i2})\in L_m\times L_m^c
        \\
        \sigma_{i2}^2-\rho\sigma_{i1}^2+(1-\rho)\varrho\sigma_{i1}\sigma_{i2}
        &\quad\text{if }(X_{i1},X_{i2})\in L_m^c\times L_m
        \\
        0
        &\quad\text{if }(X_{i1},X_{i2})\in L_m^c\times L_m^c
    \end{cases}
    \\
    &=
    \bigl(\sigma_{i1}^2-\rho\sigma_{i2}^2+(1-\rho)\varrho\sigma_{i1}\sigma_{i2}\bigr)\ind_{L_m}(X_{i1})
    \\
    &\quad
    +\bigl(\sigma_{i2}^2-\rho\sigma_{i1}^2+(1-\rho)\varrho\sigma_{i1}\sigma_{i2}\bigr)\ind_{L_m}(X_{i2})
    ,
\end{align*}
and so
\begin{align*}
    \E_P[\mathrm{e}_m^\top\chi_i^\top W_i\Sigma_i\boldsymbol{1}]
    &=
    2\E_P\bigl[\bigl\{\sigma^2(X_1)-\rho\sigma^2(X_2)+(1-\rho)\varrho\sigma(X_1)\sigma(X_2)\bigr\}\ind_{L_m}(X_1)\bigr]
    \\
    &=
    2\Bigl(\E_P[\sigma^2(X_1)\ind_{L_m}(X_1)]-\rho\E_P[\sigma^2(X_2)]p_m
    \\
    &\qquad
    +(1-\rho)\varrho\E_P[\sigma(X_1)\ind_{L_m}(X_1)]\E_P[\sigma(X_2)]\Bigr)
    \\
    &=
    2p_m\Bigl(\varsigma_{\mathrm{sq}.m}-\rho\bigl\{\tfrac{1+\eta^2}{2}\bigr\}+(1-\rho)\varrho\varsigma_m\bigl\{\tfrac{1+\eta}{2}\bigr\}\Bigr)
    \\
    &=
    p_m\Bigl(2\varsigma_{\mathrm{sq}.m}-\rho(1+\eta^2)+(1-\rho)\varrho(1+\eta)\varsigma_m\bigr\}\Bigr),
\end{align*}
thus
\begin{equation}\label{eq:TST-decomp-2}
    k\,\mathrm{e}_m^\top S\boldsymbol{1}
    =
    s(1-\rho)p_m\Bigl(2\varsigma_{\mathrm{sq}.m}-\rho(1+\eta^2)+(1-\rho)\varrho(1+\eta)\varsigma_m\bigr\}\Bigr).
\end{equation}
For the third term in the decomposition~\eqref{eq:TST-decomp}, first note that
\begin{align*}
    \mathrm{e}_m^\top\chi_i^\top W_i\Sigma_iW_i \mathrm{e}_m 
    &=
    \begin{cases}
        (1-\rho)^2\bigl\{(\sigma_{i1}^2+\sigma_{i2}^2)+2\varrho\sigma_{i1}\sigma_{i2}\bigr\}
        &\quad\text{if }(X_{i1},X_{i2})\in L_m\times L_m
        \\
        \sigma_{i1}^2+\rho^2\sigma_{i2}^2-2\rho\varrho\sigma_{i1}\sigma_{i2}
        &\quad\text{if }(X_{i1},X_{i2})\in L_m\times L_m^c
        \\
        \sigma_{i2}^2+\rho^2\sigma_{i1}^2-2\rho\varrho\sigma_{i1}\sigma_{i2}
        &\quad\text{if }(X_{i1},X_{i2})\in L_m^c\times L_m
        \\
        0
        &\quad\text{if }(X_{i1},X_{i2})\in L_m^c\times L_m^c
    \end{cases}
    \\
    &=
    \bigl(\sigma_{i1}^2+\rho^2\sigma_{i2}^2-2\rho\varrho\sigma_{i1}\sigma_{i2}\bigr)\ind_{L_m}(X_{i1})
    \\
    &\qquad
    +\bigl(\sigma_{i2}^2+\rho^2\sigma_{i1}^2-2\rho\varrho\sigma_{i1}\sigma_{i2}\bigr)\ind_{L_m}(X_{i2})
    \\
    &\qquad
    +2\bigl\{(1+\rho^2)\varrho\sigma_{i1}\sigma_{i2}-\rho(\sigma_{i1}^2+\sigma_{i2}^2)\bigr\}\ind_{L_m}(X_{i1})\ind_{L_m}(X_{i2})
    ,
\end{align*}
and so
\begin{align}
    k\,\mathrm{e}_m^\top S\mathrm{e}_m = kS_{mm}
    &=
    s\,\E_P[\mathrm{e}_m^\top\chi_i^\top W_i\Sigma_iW_i \mathrm{e}_m]
    \notag\\
    &=
    2s\Bigl\{\E_P\big[\bigl\{\sigma^2(X_1)+\rho^2\sigma^2(X_2)-2\rho\varrho\sigma(X_1)\sigma(X_2)\bigr\}\ind_{L_m}(X_1)\big]
    \notag\\
    &\qquad
    +
    (1+\rho^2)\varrho\,\E_P[\sigma(X_1)\sigma(X_2)\ind_{L_m}(X_1)\ind_{L_m}(X_2)]
    \notag\\
    &\qquad
    -\rho\,\E_P[(\sigma^2(X_1)+\sigma^2(X_2))\ind_{L_m}(X_1)\ind_{L_m}(X_2)]
    \Bigr\}
    \notag\\
    &=
    2p_ms\Bigl(
    \varsigma_{\mathrm{sq}.m}+\rho^2\bigl\{\tfrac{1+\eta^2}{2}\bigr\}-2\rho\varrho\varsigma_m\bigl\{\tfrac{1+\eta}{2}\bigr\}
    \Bigr)
    +2p_m^2s\Bigl((1+\rho^2)\varrho\varsigma_m^2-2\rho\varsigma_{\mathrm{sq}.m}\Bigr).
    \label{eq:TST-decomp-3}
\end{align}
Combining~\eqref{eq:TST-decomp-1},~\eqref{eq:TST-decomp-2}, and~\eqref{eq:TST-decomp-3} in the decomposition~\eqref{eq:TST-decomp} gives
\begin{align*}
    4sk^{-1} \mathrm{e}_m^\top T^{-1}ST^{-1}\mathrm{e}_m
    &=
    p_m^{-1}\Bigl(2\varsigma_{\mathrm{sq}.m}+\rho^2(1+\eta^2)-2\rho\varrho\varsigma_m(1+\eta)\Bigr)
    +
    \Bigl(2(1+\rho^2)\varrho\varsigma_m^2
    \\&\qquad
    -\rho^2(1+\eta^2)+2\rho(1-\rho)\varrho(1+\eta)\varsigma_m
    +\rho^2\varrho(1+\eta)^2
    \Bigr).
\end{align*}
Then for all $\rho,\varrho\in[0,1)$,
\begin{align*}
    &\quad\;
    \Bigl|2(1+\rho^2)\varrho\varsigma_m^2
    -\rho^2(1+\eta^2)+2\rho(1-\rho)\varrho(1+\eta)\varsigma_m
    +\rho^2\varrho(1+\eta)^2
    \Bigr|
    \\
    &=
    4\varsigma_m^2 + 1+\eta^2 + \frac{1}{2}(1+\eta)|\varsigma_m| + (1+\eta)^2
    \\
    &\leq
    \frac{1}{2}(13\eta^2+5\eta+4)
    \\
    &\leq 11\eta^2,
\end{align*}
and so
\begin{equation*}
    \biggl|
    \mathrm{e}_m^\top T^{-1}ST^{-1}\mathrm{e}_m
    -
    (4p_msk^{-1})^{-1}\Bigl(2\varsigma_{\mathrm{sq}.m}+\rho^2(1+\eta^2)-2\rho\varrho\varsigma_m(1+\eta)\Bigr)
    \biggr|
    \leq 
    \tfrac{11\eta^2k}{4s},
\end{equation*}
and so
\begin{equation*}
    \mathrm{e}_m^\top T^{-1}ST^{-1}\mathrm{e}_m
    =
    \frac{k}{4p_ms}\Bigl(2\varsigma_{\mathrm{sq}.m}+\rho^2(1+\eta^2)-2\rho\varrho\varsigma_m(1+\eta)\Bigr) + O\bigl(s^{-1}k\bigr),
\end{equation*}
and thus
\begin{equation}\label{eq:decomp1}
    \sum_{m=1}^Mq_m\mathrm{e}_m^\top T^{-1}ST^{-1}\mathrm{e}_m 
    =
    \frac{k}{4s}\sum_{m=1}^M\frac{q_m}{p_m}\Bigl((1+\eta^2)\rho^2-2\varrho\varsigma_m(1+\eta)\rho+2\varsigma_{\mathrm{sq}.m}\Bigr) + O\bigl(s^{-1}k\bigr),
\end{equation}
noting that $\sum_{m=1}^Mq_m=1$. 

We now consider an arbitrary $Q=\mathrm{Unif}[c_1,c_2]$ for some $0\leq c_1<c_2\leq 1$ (as is the case for both $Q_1$ and $Q_2$). 
Partition the leaf index set into $[M]=\mathcal{L}_{\mathrm{in}}\cup\mathcal{L}_{\mathrm{out}}\cup\mathcal{L}_{\mathrm{both}}$, where $\mathcal{L}_{\mathrm{in}}:=\{m\in[M]:L_m\subseteq\supp Q\}$ indexes the leaves that are entirely sub-settable in $\supp Q$, $\mathcal{L}_{\mathrm{out}}:=\{m\in[M]:L_m\cap\supp Q=\emptyset\}$ indexes the leaves that are disjoint to $\supp Q$, and finally $\mathcal{L}_{\mathrm{both}}:=[M]\backslash(\mathcal{L}_{\mathrm{in}}\cup\mathcal{L}_{\mathrm{out}})$. 
Note that because $Q=[c_1,c_2]$ is an interval $|\mathcal{L}_{\mathrm{both}}|\leq 2$. Then
\begin{align*}
    &\quad\;
    \frac{k}{4s}\biggl|\sum_{m\in\mathcal{L}_{\mathrm{both}}}\frac{q_m}{p_m}\Bigl((1+\eta^2)\rho^2 - 2\varrho\varsigma_m(1+\eta)\rho + 2\varsigma_{\mathrm{sq}.m}\Bigr)\biggr|
    \\
    &\leq
    \frac{k}{4s}\sum_{m\in\mathcal{L}_{\mathrm{both}}}\frac{\diam(L_m\cap [c_1,c_2])}{(c_2-c_1)\diam(L_m)}\Bigl((1+\eta^2)+2\eta(1+\eta)+2\eta^2\Bigr)
    \\
    &\leq
    \frac{4\eta^2k}{(c_2-c_1)s}
    =O\bigl(s^{-1}k\bigr).
\end{align*}
Because also $q_m=0$ for all $m\in\mathcal{L}_{\mathrm{out}}$, combining~\eqref{eq:decomp1} with the above,
\begin{equation*}
    \mathcal{R}_Q(\rho) = \frac{k}{4s}\sum_{m\in\mathcal{L}_{\mathrm{in}}}\frac{q_m}{p_m}\Bigl((1+\eta^2)\rho^2 - 2\varrho\varsigma_m(1+\eta)\rho + 2\varsigma_{\mathrm{sq}.m}\Bigr)
    +O\bigl(s^{-1}k\bigr).
\end{equation*}
Moreover, if $\varsigma_m=\varsigma_Q$ and $\varsigma_{\mathrm{sq}.m}=\varsigma_{\mathrm{sq},Q}$ is constant for all $m\in\mathcal{L}_{\mathrm{in}}$ then
\begin{equation*}
    \mathcal{R}_Q(\rho) = \frac{\theta_Qk}{4s}\Bigl((1+\eta^2)\rho^2-2\varrho\varsigma_Q(1+\eta)\rho+2\varsigma_{\mathrm{sq},Q}\Bigr)+ O\bigl(s^{-1}k\bigr),
\end{equation*}
where
\begin{multline*}
    \theta_Q:=\sum_{m\in\mathcal{L}_{\mathrm{in}}}\frac{q_m}{p_m}=\frac{1}{c_2-c_1}|\mathcal{L}_{\mathrm{in}}|\geq \biggl(\biggl(\frac{s}{2k-1}\biggr)^{(1-\epsilon)\frac{\log((1-\alpha)^{-1})}{\log(\alpha^{-1})}\cdot\frac{\pi}{d}}-\frac{2}{c_2-c_1}\biggr)_+
    \\
    \gtrsim \bigl((\log s)^{-1}s\bigr)^{(1-\epsilon)\frac{\log((1-\alpha)^{-1})}{\log(\alpha^{-1})}\cdot\frac{\pi}{d}},
\end{multline*}
by Lemma~\ref{lem:diam}, where $u_+:=\max(u,0)$. Thus
\begin{equation*}
    \mathcal{R}_Q(\rho) = (1+o(1))\frac{\theta_Qk}{4s}\Bigl((1+\eta^2)\rho^2-2\varrho\varsigma_Q(1+\eta)\rho+2\varsigma_{\mathrm{sq},Q}\Bigr).
\end{equation*}
Now note that $\varsigma_{Q_1}=\varsigma_{\mathrm{sq},Q_1}=1$, $\varsigma_{Q_2}=\eta$, and $\varsigma_{\mathrm{sq},Q_2}=\eta^2$, and so
\begin{align*}
    \mathcal{R}_{Q_1}(\rho) &=
    (1+o(1))\frac{\theta_{Q_1}k}{4s}\Bigl((1+\eta^2)\rho^2-2\varrho(1+\eta)\rho+2\Bigr),
    \\
    \mathcal{R}_{Q_2}(\rho) &=
    (1+o(1))\frac{\theta_{Q_2}k}{4s}\Bigl((1+\eta^2)\rho^2-2\varrho\eta(1+\eta)\rho+2\eta^2\Bigr),
\end{align*}
each of which are minimised by
\begin{equation*}
    \rho^*_{Q_1\mathrm{-optimal}}
    := (1+o(1))\frac{1+\eta}{1+\eta^2}\varrho
    \quad\text{and}\quad
    \rho^*_{Q_2\mathrm{-optimal}}
    := (1+o(1))\frac{\eta(1+\eta)}{1+\eta^2}\varrho,
\end{equation*}
respectively. Therefore
\begin{align*}
    \frac{\int_\cX\Var_P\bigl(T^{(Q_2;\mathrm{optimal})}(x)\bigr)dQ_1(x)}{\int_\cX\Var_P\bigl(T^{(Q_1;\mathrm{optimal})}(x)\bigr)dQ_1(x)}
    &=(1+o(1))
    \frac{\mathcal{R}_{Q_1}\bigl(\rho^*_{Q_2\mathrm{-optimal}}\bigr)}{\mathcal{R}_{Q_1}\bigl(\rho^*_{Q_1\mathrm{-optimal}}\bigr)}
    \\
    &=
    (1+o(1))\frac{2(1+\eta^2)+\eta(\eta-2)(1+\eta)^2\varrho^2}{2(1+\eta^2)-(1+\eta)^2\varrho^2}
    \\
    &\geq
    (1+o(1))\biggl(1+\frac{\eta(\eta-2)(1+\eta)^2}{2(1+\eta^2)}\varrho^2\biggr)
    \\
    &\geq
    (1+o(1))\biggl(1+\frac{1}{4}\eta^2\varrho^2\biggr)
    \\
    &\geq
    (1+o(1))(1+\kappa)
    ,
\end{align*}
where the penultimate inequality follows from the inequality $\frac{u(u-2)(1+u)^2}{2(1+u^2)}\geq \frac{1}{4}u^2$ for all $u\geq 3$, and where we note $\eta\geq3$ by construction.

\begin{remark}

In the above construction we took $Q_1$ and $Q_2$ to have supports as two disjoint intervals. We may however form an analogous construction with $Q_1$ a point mass at some $x\in\cX$ and/or $Q_2$ the training covariate distribution $P_X$. 

   \smallskip\noindent
    {\bf Applicability to  pointwise estimation: }$Q_1=\delta_x$.
    
    Taking the same construction as above but with $[a_1,a_2]=[x-\epsilon,x+\epsilon]$ for some fixed sufficiently small $\epsilon$; specifically take any $x<b_1<b_2\leq 1$ and  $0<\epsilon<x\wedge(b_2-x)$. The result then follows by identical arguments to the above. 
    
    \smallskip\noindent
    {\bf Suboptimality of optimising training error: }$Q_2=P_X$.

    If $Q_2=P_X$ then because $\varsigma_m$ and $\varsigma_{\mathrm{sq}.m}$ are linear in $(1,\eta)$ and $(1,\eta^2)$ respectively and for all $m\in[M]$, it follows that
    \begin{equation*}
        \mathcal{R}_{P_X}(\rho)=(1+o(1))\frac{\theta_{P_X}k}{4s}\Bigl((1+\eta^2)\rho^2-2\varrho(1+\alpha_2\eta)(1+\eta)\rho+2(1+\alpha_3\eta^2)\Bigr),
    \end{equation*}
    for some constants $\alpha_2,\alpha_3\geq0$. Moreover, as the density of $P_X$ is bounded away from zero (Assumption~\ref{ass:data}) it follows that $\alpha_2,\alpha_3>0$. Then $\rho^*_{P_X\mathrm{optimal}}=(1+o(1))\frac{(1+\alpha_2\eta)(1+\eta)}{1+\eta^2}\varrho\geq (1+o(1))\alpha_2\varrho$,
    for a sufficiently large $\eta$ (that does not depend on $s$). The analogous result then follows.
\end{remark}

\end{proof}

\subsection{Proof of Proposition~\ref{prop:weight-estimation}}

We adopt the same setup and notation of Section~\ref{appsec:proof-thm-minimax}. Contained within this section, we will use $s$ as a shorthand for $s_I^{\mathrm{corr}}$. 

\begin{proof}[Proof of Proposition~\ref{prop:weight-estimation}]
    Fix an arbitrary $x\in\cX$ and define
    \begin{align*}
        \hat{\mathrm{R}}_x(\rho) &:= k\,\mathrm{e}_{J(x)}^\top\bigl(\tilde{\chi}^\top W(\rho)\tilde{\chi}\bigr)^{-1}\biggl(\sum_i\tilde{\chi}_i^\top W_i(\rho)\tilde{\varepsilon}_i\tilde{\varepsilon}_i^\top W_i(\rho)\tilde{\chi}_i\biggr)\bigl(\tilde{\chi}^\top W(\rho)\tilde{\chi}\bigr)^{-1}\mathrm{e}_{J(x)},
        \\
        \mathrm{R}_x(\rho)&:=k\,\Var_P\bigl(T_{\rho}(x)\bigr).
    \end{align*}
    Without loss of generality consider fixed splits (see e.g.~the proof of Theorem~\ref{thm:dist-shift-2}). 
    Define $\tilde{\omega}$ analogous to $\omega$ in the proof of Lemma~\ref{lem:unbiasedness} (with $\tilde{\chi}$ in place of $\chi$ and $W(\rho)$ in place of $\hat{W}=W(\hat\rho)$). For $i\in\cI_{\mathrm{corr}}$, $j\in[n_i]$, 
    \[\tilde{\varepsilon}_{ij}-\varepsilon_{ij}=\mu(X_{ij})-
    \hat{\mu}_{\mathrm{p.c.}}(X_{ij}) =: -\hat{e}(X_{ij}),
    \]
    where
    \[
    \hat{\mu}_{\mathrm{p.c.}}(x) = \hat\mu_{\mathrm{pilot.corr}}(x) := \mathrm{e}_{J(x)}^\top \Big(\sum_{i\in\cI_{\mathrm{corr}}}\tilde{\chi}_i^\top W_i(\rho)\tilde{\chi}_i\Big)^{-1} \Big(\sum_{i\in\cI_{\mathrm{corr}}}\tilde{\chi}_i^\top W_i(\rho)Y_i\Big)
    \]
    Thus
    \begin{align*}
        &\quad\;
        \bigg|k\,\mathrm{e}_{J(x)}^\top\bigl(\tilde{\chi}^\top W(\rho)\tilde{\chi}\bigr)^{-1}\biggl(\sum_{i}\tilde{\chi}_i^\top W_i(\rho)\bigl(\tilde{\varepsilon}_i\tilde{\varepsilon}_i^\top-\varepsilon_i\varepsilon_i^\top\bigr) W_i(\rho)\tilde{\chi}_i\biggr)\bigl(\tilde{\chi}^\top W(\rho)\tilde{\chi}\bigr)^{-1}\mathrm{e}_{J(x)}\bigg|
        \\
        &=
        k\bigg|
        \sum_{i}\bigg\{\sum_{j=1}^{n_i}\tilde{\omega}_{ij}(x)\hat{e}(X_{ij})\bigg\}^2
        -
        2\sum_{i}\bigg\{\sum_{j=1}^{n_i}\tilde{\omega}_{ij}(x)\hat{e}(X_{ij})\bigg\}\bigg\{\sum_{j=1}^{n_i}\tilde{\omega}_{ij}(x)\varepsilon_{ij}\bigg\}
        \bigg|
        \\
        &\leq
        k\bigg\{\sum_{i}\bigg(\sum_{j=1}^{n_i}|\tilde{\omega}_{ij}(x)|\bigg)^2\bigg\}\sup_{x\in\cX}\hat{e}^2(x)
        +
        2k\bigg\{\sum_{i}\sum_{j=1}^{n_i}\sum_{j'=1}^{n_i}|\tilde{\omega}_{ij}(x)|\cdot|\tilde{\omega}_{ij'}(x)|\cdot|\varepsilon_{ij}|\bigg\}\sup_{x\in\cX}|\hat{e}(x)|
        \\
        &\leq
        n_{\mathrm{c}} \sup_{x\in\cX}\hat{e}^2(x) + 2k\bigg\{\sum_{i}\sum_{j=1}^{n_i}\sum_{j'=1}^{n_i}|\tilde{\omega}_{ij}(x)|\cdot|\tilde{\omega}_{ij'}(x)|\cdot|\varepsilon_{ij}|\bigg\}\sup_{x\in\cX}|\hat{e}(x)|
        \\
        &=O_{P}\big(n_{\mathrm{c}}k^{-1/2}\big),
    \end{align*}
    uniformly in $\rho\in\Gamma$, and where we use~$\sup_{x\in\cX}\hat{e}^2(x)=o_P\big(k^{-1}+s^{-2\phi/d}\big)$ (Theorem~\ref{thm:minimax} for a fixed deterministic $\hat\rho$ e.g.~$\hat\rho=\rho^*=0$) alongside Markov's inequality and
    \begin{align*}
        \E_P\bigg\{\sum_{i}\sum_{j=1}^{n_i}\sum_{j'=1}^{n_i}|\tilde{\omega}_{ij}(x)\tilde{\omega}_{ij'}(x)\varepsilon_{ij}|\bigg\}
        &\leq
        \E_P\bigg\{\sum_{i}\sum_{j=1}^{n_i}\sum_{j'=1}^{n_i}|\tilde{\omega}_{ij}(x)\tilde{\omega}_{ij'}(x)|\E_P(|\varepsilon_{ij}|\,|\,Z_{\mathrm{splits}})\bigg\}
        \\
        &\leq
        \lambda^{1/2}\E_P\bigg\{\sum_{i}\bigg(\sum_{j=1}^{n_i}|\tilde{\omega}_{ij}(x)|\bigg)^2\bigg\}
        \\
        &\leq 
        \lambda^{1/2}n_{\mathrm{c}}\,\E_P\big(\|\tilde{\omega}(x)\|_2^2\big)
        \lesssim \frac{n_{\mathrm{c}}}{k},
    \end{align*}
    where we make use of~\eqref{eq:omega2}. 
    Therefore
    \begin{equation*}
        \hat{\mathrm{R}}_x(\rho) = 
        k\,\mathrm{e}_{J(x)}^\top\bigl(\tilde{\chi}^\top W(\rho)\tilde{\chi}\bigr)^{-1}\biggl(\sum_i\tilde{\chi}_i^\top W_i(\rho)\varepsilon_i\varepsilon_i^\top W_i(\rho)\tilde{\chi}_i\biggr)\bigl(\tilde{\chi}^\top W(\rho)\tilde{\chi}\bigr)^{-1}\mathrm{e}_{J(x)} + o_P(1),
    \end{equation*}
    uniformly in $\rho\in\Gamma$. 
    Moreover, by the von--Bahr--Esseen inequality,
    \begin{align*}
        \E_P\biggl\{\bigg|k\sum_{i}\big(\tilde{\omega}_i(x)^\top\varepsilon_i\big)^2\bigg|^{(2+\delta)/2}\biggr\}
        &\leq
        2k^{(2+\delta)/2}\sum_{i}\E_P\big[|\tilde{\omega}_i(x)^\top\varepsilon_i|^{2+\delta}\big]
        \\
        &\leq
        2\tau k^{(2+\delta)/2}\sum_{i}\big(\E_P\big[(\tilde{\omega}_i(x)^\top\varepsilon_i)^2\big]\big)^{(2+\delta)/2}
        \\
        &\leq 
        2\tau\lambda^{(2+\delta)/2}k^{(2+\delta)/2}\sum_{i}\big(\E_P\big[\|\tilde{\omega}_i(x)\|_2^2\big]\big)^{(2+\delta)/2}
        \\
        &=
        2\tau\lambda^{(2+\delta)/2}k^{(2+\delta)/2}s(k^{-1}s^{-1})^{(2+\delta)/2}
        \lesssim s^{-\delta/2},
    \end{align*}
    and so by Markov's inequality
    \begin{multline*}
        \sup_{\rho\in\Gamma}\sup_{x\in\cX}\Bigl|\hat{\mathrm{R}}_x(\rho) - \E_P\Big\{k\,\mathrm{e}_{J(x)}^\top\bigl(\tilde{\chi}^\top W(\rho)\tilde{\chi}\bigr)^{-1}\bigl(\tilde{\chi}^\top W(\rho)\Sigma W(\rho)\tilde{\chi}\bigr)\bigl(\tilde{\chi}^\top W(\rho)\tilde{\chi}\bigr)^{-1}\mathrm{e}_{J(x)}\Big\}\Bigr| 
        \\ 
        = O_P\bigl(n_{\mathrm{c}}k^{-1/2} + s^{-\delta/(2+\delta)}\bigr).
    \end{multline*}
    Also define
    \[
    \hat{\mathrm{R}}_Q(\rho) := k\sum_{m=1}^Mq_m\mathrm{e}_m^\top\bigl(\tilde{\chi}^\top W(\rho)\tilde{\chi}\bigr)^{-1}\biggl(\sum_i\tilde{\chi}_i^\top W_i(\rho)\tilde{\varepsilon}_i\tilde{\varepsilon}_i^\top W_i(\rho)\tilde{\chi}_i\biggr)\bigl(\tilde{\chi}^\top W(\rho)\tilde{\chi}\bigr)^{-1}\mathrm{e}_m,
    \]
    with $q_m:=Q(L_m)$. 
    As shown in the proof of Theorem~\ref{thm:dist-shift-2},
    \begin{equation*}
        \sup_{\rho\in\Gamma}\bigg|\int_{\cX}\mathrm{R}_x(\rho)dQ(x) - \mathcal{R}_Q(\rho)\bigg| = O\bigl(n_{\mathrm{c}}k^{-1/2}\bigr),
    \end{equation*}
    for $\mathcal{R}_Q(\rho)$ as in the proof of Theorem~\ref{thm:dist-shift-2}, which combined with the above gives
    \begin{equation*}
        \sup_{\rho\in\Gamma}\bigg|\hat{\mathrm{R}}_Q(\rho)-k\int_{\cX}\Var_P\bigl(T_{\rho}(x)\bigr)dQ(x)\bigg|
        = O_P\bigl(n_{\mathrm{c}}k^{-1/2} + s^{-\delta/(2+\delta)}\bigr),
    \end{equation*}
    and so the result follows with the given identifiability condition (in the statement of the theorem) and by~\citet[Theorem~3.2.5]{vandervaart-wellner}.
    
\end{proof}

\subsection{Proof of Theorem~\ref{thm:normality}}

In this section we study the asymptotic normality of $\hat{\mu}_I^{\textrm{MC}}$; as in the last subsection we continue to denote this by $\hat{\mu}$ 
throughout this section for notational simplicity. The asymptotic normality result of Theorem~\ref{thm:normality} relies on ideas stemming from the Efron--Stein ANOVA decomposition and H\'{a}jek projections of the random forests and decision trees~\citep{hoeffding, hajek, ANOVA}. 
For simplicity of exposition we also take identically distributed clusters (thus clusters are of equal, potentially diverging, size). As noted by~\citet[Comment 4]{ANOVA}, as the ANOVA decomposition holds more generally for non-identically distributed but independent data, the results can be extended to non-identically distributed data, albeit as a more notationally involved proof. The H\'{a}jek projections of $\hat{\mu}(x)$ and $\hat{\mu}_{\rho^*}(x)$ onto $\{Z_1,\ldots,Z_I\}$ are given by
\begin{align*}
    \accentset{\circ}{\hat{\mu}}(x) &:= \E_P[\hat{\mu}(x)]  +  \sum_{i=1}^I \big\{\E_P[\hat{\mu}(x)\given Z_i]-\E_P[\hat{\mu}(x)]\big\},
    \\
    \accentset{\circ}{\hat{\mu}}_{\rho^*}(x) &:= \E_P[\hat{\mu}_{\rho^*}(x)]  +  \sum_{i=1}^I \big\{\E_P[\hat{\mu}_{\rho^*}(x)\given Z_i]-\E_P[\hat{\mu}_{\rho^*}(x)]\big\}.
\end{align*}
Similarly, the H\'{a}jek projection of the trees $T(x;Z_{i_1},\ldots,Z_{i_s})$ and $T_{\rho^*}(x;Z_{i_1},\ldots,Z_{i_s})$ onto $\{Z_{i_1},\ldots,Z_{i_s}\}$ are given by
\begin{align*}
    \accentset{\circ}{T}(x) &:= \E_P[T(x)]  +  \sum_{i\in\{i_1,\ldots,i_s\}} \big\{\E_P[T(x)\given Z_i]-\E_P[T(x)]\big\},
    \\
    \accentset{\circ}{T}_{\rho^*}(x) &:= \E_P[T_{\rho^*}(x)]  +  \sum_{i\in\{i_1,\ldots,i_s\}} \big\{\E_P[T_{\rho^*}(x)\given Z_i]-\E_P[T_{\rho^*}(x)]\big\}.
\end{align*}

Note then
\begin{align*}
    \accentset{\circ}{\hat{\mu}}(x) &= \E_P[T(x)] + \frac{s }{I}\sum_{i=1}^I \big\{\E_P[T(x)\given Z_i]-\E_P[T(x)]\big\},
    \\
    \accentset{\circ}{\hat{\mu}}_{\rho^*}(x) &= \E_P[T_{\rho^*}(x)] + \frac{s }{I}\sum_{i=1}^I \big\{\E_P[T_{\rho^*}(x)\given Z_i]-\E_P[T_{\rho^*}(x)]\big\},
\end{align*}

\begin{proof}[Proof of Theorem~\ref{thm:normality}]
First define
\begin{multline*}
    \sigma_{P,I}^2(x;\rho^*) := \Var_P\big[\accentset{\circ}{\hat{\mu}}_{\rho^*}(x)\big]
    \\
    = 
    \Var_P\bigg[\sum_{i=1}^I \big\{\E_P[\hat{\mu}_{\rho^*}(x)\given Z_i]-\E_P[\hat{\mu}_{\rho^*}(x)]\big\}\bigg]
    =
    \sum_{i=1}^I \Var_P\Big[\E_P[\hat{\mu}_{\rho^*}(x)\given Z_i]\Big],
\end{multline*}
and consider the decomposition
\begin{multline*}
    \frac{\hat{\mu}(x)-\mu(x)}{\sigma_{P,I}(x;\rho^*)} = 
    \underbrace{\frac{\accentset{\circ}{\hat{\mu}}_{\rho^*}(x) - \E_P\big[\accentset{\circ}{\hat{\mu}}_{\rho^*}(x)\big]}{\sigma_{P,I}(x;\rho^*)}}_{=:A_P}
    +
    \underbrace{\frac{\hat{\mu}(x)-\accentset{\circ}{\hat{\mu}}(x)}{\sigma_{P,I}(x;\rho^*)}}_{=:B_P}
    +
    \underbrace{\frac{\E_P[\hat{\mu}(x)]-\mu(x)}{\sigma_{P,I}(x;\rho^*)}}_{=:C_P}
    \\
    +
    \underbrace{\frac{\accentset{\circ}{\hat{\mu}}(x)-\E_P\big[\accentset{\circ}{\hat{\mu}}(x)\big]-\accentset{\circ}{\hat{\mu}}_{\rho^*}(x)+\E_P\big[\accentset{\circ}{\hat{\mu}}_{\rho^*}(x)\big]}{\sigma_{P,I}(x;\rho^*)}}_{=:D_P},
\end{multline*}
noting that $\E_P\big[\accentset{\circ}{\hat{\mu}}(x)\big] = \E_P[\hat{\mu}(x)]$.

In brief, Term $A_P$ is a sum of independent, mean zero terms which is unit variance, and so under Lyapunov-type conditions can be shown to be asymptotically standard Gaussian. The additional terms will all be shown to be $o_\cP(1)$. Term $B_P$ relates to the projection of the random forest predictor onto the orthogonal complement of the image of the H\'{a}jek projection. Term $C_P$ is a (scaled) bias term, already shown to be sufficiently small (see Lemma~\ref{lem:unbiasedness}); for the range of $\beta$ as given in Assumption~\ref{ass:tree} the Term $C_P$ is controlled to be $o_\cP(1)$. Finally, Term $D_P$ would be precisely zero were one to set $\hat{\rho}=\rho^*$ a deterministic parameter. When estimated as in Algorithm~\ref{alg:crf}, Term $D_P$ is mean zero, with variance of order $(\hat{\rho}-\rho^*)^2$ up to logarithmic factors, and thus provided $\hat{\rho}$ converges in probability to $\rho^*$ at arbitrarily slow polynomial rates this term is $o_\cP(1)$. We now prove each of these claims in turn.

\medskip
\noindent {\bf Part 1:} Showing 
$\lim_{I\to\infty}\supP\sup_{t\in\R}|\PP_P(A_P\leq t)-\Phi(t)|=0$.

Recall that
\begin{align*}
    \accentset{\circ}{\hat{\mu}}(x) - \E_P[\hat{\mu}(x)]
    &=
    \sum_{i=1}^I \big\{\E_P[\hat{\mu}(x)\given Z_i] - \E_P[\hat{\mu}(x)]\big\}
    \\
    &=
    \frac{s }{I}\sum_{i=1}^I \big\{\E_P[T(x)\given Z_i] - \E_P[T(x)]\big\}.
\end{align*}
We proceed to show the Lyapunov-style condition
\begin{equation}\label{eq:lyapunov}
    \lim_{I\to\infty}\sup_{P\in\cP} \frac{\sum_{i=1}^I \E_P\Big[\big|\E_P[T(x)\given Z_i]-\E_P[T(x)]\big|^{2+\delta}\Big]}{\big(\sum_{i=1}^I\Var_P\big[\E_P[T(x)\given Z_i]\big]\big)^{1+\frac{\delta}{2}}} = 0.
\end{equation}
Decompose
\begin{equation*}
    \E_P[T(x)\given Z_i] - \E_P[T(x)]
    =
    \underbrace{\E_P[T(x) \given Z_i] - \E_P[T(x) \given X_i]}_{=:A^{\RN{1}}_i(x)}
    +
    \underbrace{\E_P[T(x) \given X_i] - \E_P[T(x)]}_{=:A^{\RN{2}}_i(x)}.
\end{equation*}
Then
\begin{multline}\label{eq:AiI}
    A_i^{\RN{1}}(x) 
    = \E_P[T(x)\given Z_i] - \E_P[T(x)\given X_i]
    \\
    = \E_P[\omega_i(x)\given Z_i]^\top (Y_i-\mu(X_i)) = \E_P[\omega_i(x)\given X_i]^\top (Y_i-\mu(X_i)),
\end{multline}
using the honesty property~\ref{honesty}. Also $\E_P\big[\big\{\E_P[T(x)\given Z_i]-\E_P[T(x)\given X_i]\big\}\big\{\E_P[T(x)\given X_i]\big\}\big] = 0$, and so
\begin{align*}
    \Var_P\big[\E_P[T(x)\given Z_i]\big]
    &=
    \Var_P\big[\E_P[T(x)\given Z_i]-\E_P[T(x)\given X_i]\big]
    +
    \Var_P\big[\E_P[T(x)\given X_i]\big]
    \\
    &\geq 
    \Var_P\big[\E_P[T(x)\given Z_i]-\E_P[T(x)\given X_i]\big]
    \\
    &=
    \Var_P\big[\E_P[\omega_i(x)\given X_i]^\top (Y_i-\mu(X_i))\big]
    \\
    &=
    \E_P\Big[\Var_P\big[\E_P[\omega_i(x)\given X_i]^\top (Y_i-\mu(X_i))\given X_i\big]\Big]
    \\
    &=
    \E_P\Big[\E_P[\omega_i(x)\given X_i]^\top \sigma_{P,I}(x_i)\E_P[\omega_i(x)\given X_i]\Big]
    \\
    &\geq
    \lambda^{-1} \, \E_P\Big[\big\|\E_P[\omega_i(x)\given X_i]\big\|_2^2\Big].
\end{align*}
Also
\begin{align*}
    \E_P\big[|A_i^{\RN{1}}(x)|^{2+\delta}\big]
    &=
    \E_P\Big[\big|\E_P[\omega_i(x)\given X_i]^\top (Y_i-\mu(X_i))\big|^{2+\delta}\Big]
    \\
    &\leq 
    \tau\,\E_P\Big[\big\{\E_P[\omega_i(x)\given X_i](Y_i-\mu(X_i))\big\}^2\Big]^{1+\frac{\delta}{2}}
    \\
    &\leq
    \tau\,\E_P\Big[\E_P[\omega_i(x)\given X_i]^\top\Cov(Y_i\given X_i)\E_P[\omega_i(x)\given X_i]\Big]^{1+\frac{\delta}{2}}
    \\
    &\leq
    \tau\lambda C_1^\delta \, \E_P\Big[\big\|\E_P[\omega_i(x)\given X_i]\big\|_2^2\Big].
\end{align*}
Also, defining $u := \sup_{x\in\cX}|\mu(x)| < \infty$ and $\mu := \E_P[\mu(X_{ij})]$,
\begin{align*}
    &\quad\E_P\big[|A^{\RN{2}}_i(x)|^{2+\delta}\big]
    \\
    &\leq
    (2uC_1)^{\delta}\E_P\big[|A^{\RN{2}}_i(x)|^2\big],
    \\
    &= 
    (2uC_1)^{\delta}\Var_P\big(\E_P[T(x)\given X_i]\big)
    \\
    &=
    (2uC_1)^{\delta}\Var_P\bigg(
        \sum_{j=1}^{n_i}\E_P[\omega_{ij}(x)\given X_i](\mu(X_{ij})-\mu)
        +
        \sum_{i'\neq i}\sum_{j=1}^{n_{i'}}\E_P[\omega_{i'j}(x)\given X_i](\mu(X_{i'j})-\mu)
    \bigg)
    \\
    &\leq
    2^{1+\delta}u^{\delta}C_1^{\delta} \bigg(
        \Var_P\Big(\sum_{j=1}^{n_i}\E_P[\omega_{ij}(x)\given X_i]\mu(X_{ij})\Big)
        +
        \Var_P\Big(\sum_{i'\neq i}\sum_{j=1}^{n_{i'}}\E_P[\omega_{i'j}(x)\given X_i]\mu(X_{i'j})\Big)
    \bigg)
    \\
    &\leq
    2^{1+\delta}u^{2+\delta}C_1^{\delta} \bigg(
        \Var_P\Big(\sum_{j=1}^{n_i}\E_P[\omega_{ij}(x)\given X_i]\Big)
        +
        \Var_P\Big(\sum_{i'\neq i}\sum_{j=1}^{n_{i'}}\E_P[\omega_{i'j}(x)\given X_i]\Big)
    \bigg)
    \\
    &\leq
    2^{2+\delta}u^{2+\delta}C_1^{\delta} \Var_P\Big(\sum_{j=1}^{n_i}\E_P[\omega_{ij}(x)\given X_i]\Big)
    \\
    &\leq
    2^{2+\delta}u^{2+\delta}C_1^{\delta} \E_P\Big[\big\|\E_P[\omega_i(x)\given X_i]\big\|_2^2\Big],
\end{align*}
where in the first inequality we make use of the crude bound
\begin{multline*}
    |A_i^{\RN{2}}(x)| 
    =
    \big|\E_P\big[\E_P[\omega_i(x)\given X_i]^\top \big(\mu(X_i)-\mu\big)\big]\big|
    \leq
    \sum_{j=1}^{n_i}\E_P\big[\E_P[|\omega_{ij}(x)|\given X_i]\cdot|\mu(X_i)-\mu|\big]
    \\
    \leq
    2u \E_P\bigg[\sum_{j=1}^{n_i}|\omega_{ij}(x)|\bigggiven X_i\bigg] \leq 2uC_1.
\end{multline*}
Combining the above,
\begin{align*}
    \E_P\Big[\big|\E_P[T(x)\given Z_i]-\E_P[T(x)]\big|^{2+\delta}\Big]
    &\leq
    2^{1+\delta} \E_P\Big[\big|\E_P[T(x)\given Z_i]-\E_P[T(x)\given X_i]\big|^{2+\delta}\Big] 
    \\
    &\qquad
    + 2^{1+\delta} \E_P\Big[\big|\E_P[T(x)\given X_i]\big|^{2+\delta}\Big]
    \\
    &\leq
    \big(2^{1+\delta}\tau\lambda C_1^{\delta}+2^{3+2\delta}u^{2+\delta}C_1^{\delta}\big)
    \sum_{j=1}^{n_i} \E_P\Big[\big(\E_P[\omega_{ij}(x)\given X_i]\big)^2\Big].
\end{align*}
Together, it follows that
\begin{equation*}
    \frac{\sum_{i=1}^I \E_P\Big[\big|\E_P[T(x)\given Z_i]-\E_P[T(x)]\big|^{2+\delta}\Big]}{\big(\sum_{i=1}^I\Var_P\big[\E_P[T(x)\given Z_i]\big]\big)^{1+\frac{\delta}{2}}} 
    \lesssim
    \bigg(\sum_{i=1}^I \E_P\Big[\big\|\E_P[\omega_i(x)\given X_i]\big\|_2^2\Big]\bigg)^{-\delta/2}.
\end{equation*}
Applying Lemma~\ref{lem:omega}, 
\begin{multline*}
    \sup_{P\in\cP}\frac{\sum_{i=1}^I \E_P\Big[\big|\E_P[T(x)\given Z_i]-\E_P[T(x)]\big|^{2+\delta}\Big]}{\big(\sum_{i=1}^I\Var_P\big[\E_P[T(x)\given Z_i]\big]\big)^{1+\frac{\delta}{2}}} 
    \lesssim
    \sup_{P\in\cP}\bigg(\sum_{i=1}^I \E_P\Big[\big\|\E_P[\omega_i(x)\given X_i]\big\|_2^2\Big]\bigg)^{-\delta/2}
    \\
    \lesssim \bigg(I^{1-\beta}k\log(n_{\mathrm{c}}\vee I)\bigg)^{-\delta/2}
    =o(1),
\end{multline*}
for any $\alpha<1$, and thus the Lyapunov condition~\eqref{eq:lyapunov} holds. Having verified all the conditions, applying a uniform version of the Lyapunov central limit theorem (see e.g.~\citet[Lemma 17]{young}) completes the proof of Part 1.

\medskip
\noindent {\bf Part 2:} Showing $B_P=o_\cP(1)$.

A single decision tree $T(x)$ predictor admits the Efron--Stein ANOVA decomposition~\citep{ANOVA}
\begin{equation*}
    T(x) = \E_P[T(x)] + \sum_{k=1}^{s }
    \sum_{i_1<\cdots<i_k}T^{(k)}(Z_{i_1},\ldots,Z_{i_k}),
\end{equation*}
for some symmetric functions $(T^{(k)})_{k=1,\ldots,s}$ such that $\big(T^{(k)}(Z_{i_1},\ldots,Z_{i_k})\big)_{k\in\{1,2,\ldots,s \}}$ are all mean-zero, uncorrelated functions. Note that $T^{(1)}$ takes the form as in~\eqref{eq:T^{(1)}}. Similarly the Efron--Stein ANOVA decomposition for the random forest predictor $\hat{\mu}(x)$ is
\begin{equation*}
    \hat{\mu}(x) = \E_P[T(x)] + \binom{I}{s }^{-1} \sum_{k=1}^{s }
        \binom{I-k}{s -k} \sum_{i_1<\cdots<i_k}
        T^{(k)}(Z_{i_1},\ldots,Z_{i_k}).
\end{equation*}
Note in particular such a decomposition holds even if the tree $T$ is not symmetric and $(Z_1,\ldots,Z_I)$ are independent but not identically distributed (see~\citet[Comment 4]{ANOVA}). Then
\begin{align*}
    \E_P\Big[\big(\hat{\mu}(x)-\accentset{\circ}{\hat{\mu}}(x)\big)^2\Big]
    &=
    \E_P\Bigg[\binom{I}{s }^{-2}\bigg(\sum_{k=2}^{s }\binom{I-k}{s -k}\sum_{i_1<\cdots<i_k}T^{(k)}(Z_{i_1},\ldots,Z_{i_k})\bigg)^2\Bigg]
    \\
    &=
    \binom{I}{s }^{-2}\sum_{k=2}^{s }\binom{I-k}{s -k}^2\Var_P\bigg[\sum_{i_1<\cdots<i_k}T^{(k)}(Z_{i_1},\ldots,Z_{i_k})\bigg]
    \\
    &\leq \bigg(\frac{s }{I}\bigg)^2 \sum_{k=2}^{s } \Var_P\bigg[\sum_{i_1<\cdots<i_k}T^{(k)}(Z_{i_1},\ldots,Z_{i_k})\bigg]
    \\
    &\leq \bigg(\frac{s }{I}\bigg)^2 \sum_{k=1}^{s }  \Var_P\bigg[\sum_{i_1<\cdots<i_k}T^{(k)}(Z_{i_1},\ldots,Z_{i_k})\bigg]
    \\
    &\leq \bigg(\frac{s }{I}\bigg)^2 \Var_P\bigg[\sum_{k=1}^{s } \sum_{i_1<\cdots<i_k}T^{(k)}(Z_{i_1},\ldots,Z_{i_k})\bigg]
    \\
    &= \frac{s ^2}{I^2} \Var_P\big[T(x)\big].
\end{align*}
Therefore
\begin{equation*}
    \E_P\big[B_P^2\big]
    =
    \frac{\E_P\Big[\big(\hat{\mu}(x)-\accentset{\circ}{\hat{\mu}}(x)\big)^2\Big]}{\sigma_{P,I}^2(x;\rho^*)}
    \lesssim
    {I^{-1}}{s (\log s_N)^d}\cdot\frac{\Var_P [T(x)]}{\Var_P [T_{\rho^*}(x)]}
    \sim
    I^{\beta-1}(\log s_N)^d
\end{equation*}
where we additionally use Lemmas~\ref{lem:sig2}~and~\ref{lem:VarT(x)}, and $\beta<1$. By Markov's inequality it follows that for $t>0$,
\begin{equation*}
    \sup_{P\in\cP}\PP_P\big(|B_P|>t\big)\leq \sup_{P\in\cP}t^{-2}\E_P\big[B_P^2\big] \lesssim I^{\beta-1}(\log s_N)^d = o(1),
\end{equation*}
and so $B_P=o_\cP(1)$.

\medskip
\noindent {\bf Part 3:} Showing $\sup_{P\in\cP}C_P=o(1)$.

Recall by Lemma~\ref{lem:unbiasedness} we have that for any $\epsilon>0$ and with probability at least $1-I^{-\beta}$ for sufficiently large $I$ we have  $\sup_{P\in\cP}\sup_{x\in\cX}\big|\E_P[\hat{\mu}(x)]-\mu(x)\big| = O(I^{-\gamma(\beta)})$ for $\gamma(\beta):=(1-\epsilon)\frac{\log((1-\alpha)^{-1})}{\log(\alpha^{-1})}\,\frac{\pi}{d}\,\beta$. Also by Lemma~\ref{lem:sig2}, 
\begin{equation*}
    \inf_{P\in\cP}\sigma_{P,I}^2(x;\rho^*) \gtrsim I^{\beta-1-\epsilon'}
\end{equation*}
for any $\epsilon'>0$, and so
\begin{equation*}
    \sup_{P\in\cP}C_P = O\bigg(I^{\frac{1}{2}\big(1+\epsilon'-\beta\big(1+\frac{2(1-\epsilon)\log((1-\alpha)^{-1})}{\log(\alpha^{-1})}\frac{\pi}{d}\big)\big)}\bigg) = o(1),
\end{equation*}
provided 
\begin{equation*}
    \beta > \bigg(1+\frac{2\pi\log((1-\alpha)^{-1})}{d\log(\alpha^{-1})}\bigg)^{-1}
    =
    1-\bigg(1+\frac{d}{2\pi}\frac{\log(\alpha^{-1})}{\log((1-\alpha)^{-1})}\bigg)^{-1} =: \beta_{\min}
    ,
\end{equation*}
and by taking $\epsilon,\epsilon'>0$ sufficiently small.

\medskip
\noindent {\bf Part 4:} Showing $D = o_\cP(1)$.

Recall
\begin{small}
\begin{align*}
    D &:= 
    \frac{\accentset{\circ}{\hat{\mu}}(x) - \E_P[\accentset{\circ}{\hat{\mu}}(x)] - \accentset{\circ}{\hat{\mu}}_{\rho^*}(x) + \E_P[\accentset{\circ}{\hat{\mu}}_{\rho^*}(x)]}
    {\sigma_{P,I}(x;\rho^*)}
    \\
    &=
    \frac{1}{\sigma_{P,I}(x;\rho^*)} \sum_{i=1}^I \binom{I}{s }^{-1} \sum_{i_1<\cdots<i_s}\big\{\E_P[T(x)\given Z_i]-\E_P[T(x)]-\E_P[T_{\rho^*}(x)\given Z_i]+\E_P[T_{\rho^*}(x)]\big\}
    \\
    &=
    \underbrace{\frac{1}{\sigma_{P,I}(x;\rho^*)}\sum_{i=1}^I \binom{I}{s }^{-1} \sum_{i_1<\cdots<i_s} \big\{\E_P[T(x)\given Z_i]-\E_P[T(x)\given X_i]-\E_P[T_{\rho^*}(x)\given Z_i]+\E_P[T_{\rho^*}(x)\given X_i]\big\}}_{=:D_{\RN{1}}}
    \\
    &\qquad+
    \underbrace{\frac{1}{\sigma_{P,I}(x;\rho^*)}\sum_{i=1}^I \binom{I}{s }^{-1} \sum_{i_1<\cdots<i_s} \big\{\E_P[T(x)\given X_i]-\E_P[T(x)]-\E_P[T_{\rho^*}(x)\given X_i]+\E_P[T_{\rho^*}(x)]\big\}}_{=:D_{\RN{2}}}
    .
\end{align*}
\end{small}
Recall, as discussed at the beginning of the proof, notationally we consider without loss of generality the setting of $\hat{\rho}$ and splits estimated using an auxiliary sample. Notationally therefore the above quantities $D_{\RN{1}}$ and $D_{\RN{2}}$ as well as their expectations $\E_P[D_{\RN{1}}]$ and $\E_P[D_{\RN{2}}]$ should be read as functions also of $\hat{\rho}$ and the random forest splits (thus all the aforementioned quantities are random given the auxiliary data). 
Also define
\begin{gather*}
    \bar{\mu}(x) := \binom{I}{s }^{-1}\sum_{i_1<\cdots<i_s}\bar{T}(x;X_{i_1},\ldots,X_{i_s}),
    \\
    \bar{T}(x;X_{i_1},\ldots,X_{i_s}) := \E_P\big[T(x;Z_{i_1},\ldots,Z_{i_s})\given X_{\text{all}}\big]
\end{gather*}
\begin{equation*}
    \accentset{\circ}{\bar{\mu}}(x) = \E_P[T(x)] + \frac{s }{I}\sum_{i=1}^I \big\{\E_P[T(x)\given X_i]-\E_P[T(x)]\big\}.
\end{equation*}
Also define $\bar{\mu}_{\rho^*}(x)$ and its H\'{a}jek projection $\accentset{\circ}{\bar{\mu}}_{\rho^*}(x)$ analogously. Then
\begin{align*}
    D_{\RN{2}} 
    &=
    \frac{\accentset{\circ}{\bar{\mu}}(x)-\accentset{\circ}{\bar{\mu}}_{\rho^*}(x)}{\sigma_{P,I}(x;\rho^*)}
    \\
    &=
    \frac{\bar{\mu}(x)-\bar{\mu}_{\rho^*}(x)}{\sigma_{P,I}(x;\rho^*)}
    +
    \frac{\accentset{\circ}{\bar{\mu}}(x) - \bar{\mu}(x)}{\sigma_{P,I}(x;\rho^*)}
    -
    \frac{\accentset{\circ}{\bar{\mu}}_{\rho^*}(x) - \bar{\mu}_{\rho^*}(x)}{\sigma_{P,I}(x;\rho^*)},
    \\
    &=
    \frac{\bar{\mu}(x)-\bar{\mu}_{\rho^*}(x)}{\sigma_{P,I}(x;\rho^*)}
    + o_\cP(1),
\end{align*}
where for the final step, similar to the property for H\'{a}jek projections used in Part 2, it can be shown that
\begin{equation*}
    \frac{\E_P\Big[\big(\accentset{\circ}{\bar{\mu}}(x)-\bar{\mu}(x)\big)^2\Big]}{\sigma_{P,I}^2(x;\rho^*)}
    \leq
    \frac{s ^2}{I^2\sigma_{P,I}^2(x;\rho^*)}\Var_P\big(\E_P[T(x)\given X_i]\big)
    \leq
    \frac{(2uC_1)^2 s ^2}{I^2\sigma_{P,I}^2(x;\rho^*)} \lesssim \frac{s (\log s)^d}{I},
\end{equation*}
again using Lemma~\ref{lem:sig2}, and so
\begin{equation*}
    \supP\frac{\E_P\Big[\big(\accentset{\circ}{\bar{\mu}}(x)-\bar{\mu}(x)\big)^2\Big]}{\sigma_{P,I}^2(x;\rho^*)} = o(1).
\end{equation*}
Also for each tree, and for arbitrarily small $\epsilon>0$, by Lemma~\ref{lem:diam} for sufficiently large $I$,
\begin{align*}
    \big|\bar{T}(x)-\bar{T}_{\rho^*}(x)\big|
    &= 
    \big|
    \big( \bar{T}(x) - \mu(x) \big) 
    - 
    \big( \bar{T}_{\rho^*}(x) - \mu(x) \big)
    \big|
    \\
    &\leq
    2L_{\mu}d^{1/2}(1+C_1)\Big(\frac{s }{2k-1}\Big)^{-(1-\epsilon)\frac{\log((1-\alpha)^{-1})}{\log(\alpha^{-1})}\cdot\frac{\pi}{d}}
\end{align*}
and so, as $\bar{\mu}$ is an average of trees $\bar{T}$,
\begin{equation*}
    |\bar{\mu}(x)-\bar{\mu}_{\rho^*}(x)| = O_{\cP}\Big(I^{-(1-\epsilon)\frac{\log((1-\alpha)^{-1})}{\log(\alpha^{-1})}\cdot\frac{\pi}{d}\beta}\Big).
\end{equation*}
Also recalling (again by Lemma~\ref{lem:sig2}) $\sigma_{P,I}^2(x;\rho^*)\gtrsim I^{\beta-1-\epsilon'}$ for any $\epsilon'>0$,
\begin{equation*}
    \frac{|\bar{\mu}(x)-\bar{\mu}_{\rho^*}(x)|}{\sigma_{P,I}(x;\rho^*)} = O_{\cP}\bigg(I^{\frac{1}{2}\big({1-\beta+\epsilon'-2(1-\epsilon)\frac{\log((1-\alpha)^{-1})}{\log(\alpha^{-1})}\cdot\frac{\pi}{d}\beta}\big)}\bigg) = o_{\cP}(1),
\end{equation*}
for any $\beta>\beta_{\min}$ (i.e.~obtained by also taking $\epsilon,\epsilon'>0$ sufficiently small). Therefore it follows by Markov's inequality that $D_{\RN{2}} = o_\cP(1)$.

Similarly, for the term $D_{\RN{1}}$, note first that $\E_P[D_{\RN{1}}]=0$, and so we obtain 
\begin{align*}
    \E_P\big[D_{\RN{1}}^2\big]
    &= 
    \frac{s ^2}{I^2\sigma_{P,I}^2(x;\rho^*)}\sum_{i=1}^I \E_P\Big[\big\{\E_P[T_{\hat{\rho}}(x)\given Z_i]-\E_P[T_{\hat{\rho}}(x)\given X_i]
    \\
    &\hspace{18em}-\E_P[T_{\rho^*}(x)\given Z_i]+\E_P[T_{\rho^*}(x)\given X_i]\big\}^2\Big]
    \\
    &=
    \frac{s }{I\sigma_{P,I}^2(x;\rho^*)}\sum_{i=1}^s \E_P\Big[\big\{\partial_\rho\big(\E_P[T_{\rho}(x)\given Z_i]-\E_P[T_{\rho}(x)\given X_i]\big)\big|_{\rho=\bar{\rho}(\hat{\rho},\rho^*)}\big\}^2(\hat{\rho}-\rho^*)^2\Big],
\end{align*}
for some $\bar{\rho}(\hat{\rho},\rho^*)\in\Gamma$. 
Via algebraic calculations it can be verified that
\begin{equation*}
    \E_P[T_{\rho}(x)\given Z_i] - \E_P[T_{\rho}(x)\given X_i]
    =
    \mathrm{e}_{J(x)}^\top \big(\chi^\top W(\rho)\chi\big)^{-1}\big(\chi_i^\top W_i(\rho)(Y_i-\mu(X_i))\big),
\end{equation*}
and
\begin{multline*}
    \partial_{\rho}\Big(\E_P[T_{\rho}(x)\given Z_i] - \E_P[T_{\rho}(x)\given X_i]\Big)
    \\
    =
    - \mathrm{e}_{J(x)}^\top \big(
    \chi^\top W\chi\big)^{-1}\Big(\chi^\top \frac{\partial W}{\partial\rho}\chi\Big)\big(
    \chi^\top W\chi\big)^{-1}\big(\chi_i^\top W_i\Cov_P(\varepsilon_i\given X_i)\big)
    \\
    +
    \mathrm{e}_{J(x)}^\top \big(
    \chi^\top W\chi\big)^{-1}\Big(\chi_i^\top \frac{\partial W_i}{\partial\rho}\Cov_P(\varepsilon_i\given X_i)\Big)
    ,
\end{multline*}
where $\varepsilon_i := Y_i - \mu(X_i)$. Then
\begin{align*}
    &\quad
    \Big\{\partial_{\rho}\Big(\E_P[T_{\rho}(x)\given Z_i] - \E_P[T_{\rho}(x)\given X_i]\Big)\Big\}^2
    \\
    &\leq
    2\Big\{\mathrm{e}_{J(x)}^\top \big(
    \chi^\top W\chi\big)^{-1}\Big(\chi^\top \frac{\partial W}{\partial\rho}\chi\Big)\big(
    \chi^\top W\chi\big)^{-1}\big(\chi_i^\top W_i\varepsilon_i\big)\Big\}^2
    +
    2\Big\{\mathrm{e}_{J(x)}^\top \big(
    \chi^\top W\chi\big)^{-1}\big(\chi_i^\top \frac{\partial W_i}{\partial\rho}\varepsilon_i\big)\Big\}^2
    ,
\end{align*}
following because $(a+b)^2\leq 2(a^2+b^2)$. Also define  $\Sigma_i:=\Cov_P(\varepsilon_i\given X_i)$ and $\Sigma:=\diag(\Sigma_i\ind_{(i\in\cI_{\text{eval}})})_{i\in[I]}$ the corresponding block diagonal matrix. Then
\begin{align*}
    &\quad
    \sum_{i=1}^I\E_P\bigg[\Big\{\partial_{\rho}\Big(\E_P[T_{\rho}(x)\given Z_i] - \E_P[T_{\rho}(x)\given X_i]\Big)\Big\}^2\bigg]
    \\
    &\leq
    2 \E_P\bigg[\mathrm{e}_{J(x)}^\top (\chi^\top W\chi)^{-1}\Big(\chi^\top \frac{\partial W}{\partial\rho}\chi\Big)(\chi^\top W\chi)^{-1}(\chi^\top W\Sigma W\chi)(\chi^\top W\chi)^{-1}\Big(\chi^\top \frac{\partial W}{\partial\rho}\chi\Big)(\chi^\top W\chi)^{-1}\mathrm{e}_{J(x)} \bigg]
    \\
    &\quad
    +
    2 \E_P\bigg[\mathrm{e}_{J(x)}^\top (\chi^\top W\chi)^{-1}(\chi^\top \frac{\partial W}{\partial\rho}\Sigma\frac{\partial W}{\partial\rho}\chi)(\chi^\top W\chi)^{-1}\mathrm{e}_{J(x)}\bigg]
    \\
    &\leq
    \frac{2\Lambda_{\max}(W\Sigma W)\Lambda_{\max}\big(\big\{\frac{\partial W}{\partial\rho}\big\}^2\big)\big(\Lambda_{\max}(\chi^\top \chi)\big)^3}{\big(\Lambda_{\min}(W)\big)^4\big(\Lambda_{\min}(\chi^\top \chi)\big)^4}
    +
    \frac{2\Lambda_{\max}(\frac{\partial W}{\partial\rho}\Sigma\frac{\partial W}{\partial\rho})\Lambda_{\max}(\chi^\top \chi)}{\big(\Lambda_{\min}(W)\big)^2\big(\Lambda_{\min}(\chi^\top \chi)\big)^2}
    \\
    &\leq
    \frac{2(2k-1)^3}{k^4}\cdot\frac{\Lambda_{\max}(W\Sigma W)\big(\Lambda_{\max}\big(\big\{\frac{\partial W}{\partial\rho}\big\}^2\big)}{\big(\Lambda_{\min}(W)\big)^4}
    + \frac{2(2k-1)}{k^2}\cdot\frac{\Lambda_{\max}\big(\big\{\frac{\partial W}{\partial\rho}\big\}^2\big)\Lambda_{\max}\big(\Sigma\big)}{\big(\Lambda_{\min}(W)\big)^2}
    \\
    &\lesssim 1,
\end{align*}
almost surely and for any $\rho\in\Gamma$, and where we use that $\Lambda_{\max}(W)\leq\|W\|_1\lesssim 1$ and $\Lambda_{\max}\big(\big\{\frac{\partial W}{\partial\rho}\big\}^2\big)\lesssim 1$. Subsequently for some $r>0$,
\begin{equation*}
    \E_P\big[D_{\RN{1}}^2\big] \lesssim \frac{s }{I\sigma_{P,I}^2(x)}(\hat{\rho}-\rho^*)^2
    \lesssim
    (\log s)^d (\hat{\rho}-\rho^*)^2 = O_{\cP}\big((\log I)^dI^{-r}\big)=o_{\cP}(1),
\end{equation*}
applying Lemma~\ref{lem:sig2} (and recalling notationally the quantity $\E_P[D_{\RN{1}}^2]$ is a random quantity in terms of the auxiliary data used to construct $\hat{\rho}$). Thus $D_{\RN{1}}=o_{\cP}(1)$ and consequently $D=o_{\cP}(1)$.

Parts 1, 2, 3 and 4, together with a uniform version of Slutsky's Lemma, completes the proof.

\end{proof}

\begin{lemma}\label{lem:sig2}
    $$\sigma_{P,I}^2(x;\rho^*) \gtrsim \frac{s_I}{I(\log(n_{\mathrm{c}}\vee I))^d}.$$
\end{lemma}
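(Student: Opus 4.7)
The plan is to reduce the lower bound on $\sigma_{P,I}^2(x;\rho^*)$ to a variance bound on the first-order ANOVA component of a single clustered decision tree, and then to invoke a companion weight-norm lower bound (Lemma~\ref{lem:omega}). First, by symmetry of the Monte-Carlo random forest and the standard $U$-statistic Hájek projection identity, for each $i\in[I]$,
$$\E_P[\hat{\mu}_{\rho^*}(x)\given Z_i] - \E_P[\hat{\mu}_{\rho^*}(x)] = \frac{\bar{s}}{I}\bigl(\E_P[T_{\rho^*}(x)\given Z_i] - \E_P[T_{\rho^*}(x)]\bigr),$$
where $\bar{s}:=3s_I$ is the subsample size in~\eqref{eq:MC}. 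Summing the squared variances over $i\in[I]$ and using identical distribution of clusters yields
$$\sigma_{P,I}^2(x;\rho^*) = \frac{\bar{s}^2}{I}\Var_P\bigl(\E_P[T_{\rho^*}(x)\given Z_1]\bigr) \gtrsim \frac{s_I^2}{I}\Var_P\bigl(\E_P[T_{\rho^*}(x)\given Z_1]\bigr).$$

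Next, I would reuse the orthogonal decomposition introduced in Part~1 of the proof of Theorem~\ref{thm:normality}:
$$\E_P[T_{\rho^*}(x)\given Z_1] - \E_P[T_{\rho^*}(x)] = A_1^{\mathrm{I}}(x) + A_1^{\mathrm{II}}(x),$$
where $A_1^{\mathrm{I}}(x) := \E_P[T_{\rho^*}(x)\given Z_1] - \E_P[T_{\rho^*}(x)\given X_1]$ and $A_1^{\mathrm{II}}(x) := \E_P[T_{\rho^*}(x)\given X_1] - \E_P[T_{\rho^*}(x)]$. Since $\E_P[A_1^{\mathrm{I}}(x)\given X_1]=0$, these two components are uncorrelated, so $\Var_P(\E_P[T_{\rho^*}(x)\given Z_1]) \geq \Var_P(A_1^{\mathrm{I}}(x))$. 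The honesty property~\ref{honesty} and identity~\eqref{eq:AiI} give $A_1^{\mathrm{I}}(x) = \E_P[\omega_1(x)\given X_1]^\top(Y_1 - \mu(X_1))$, whence Assumption~\ref{bounded-sigma} on the minimum eigenvalue of $\Cov_P(Y_1\given X_1)$ yields
$$\Var_P(A_1^{\mathrm{I}}(x)) = \E_P\bigl[\E_P[\omega_1(x)\given X_1]^\top\Cov_P(Y_1\given X_1)\E_P[\omega_1(x)\given X_1]\bigr] \geq \lambda^{-1}\E_P\bigl[\|\E_P[\omega_1(x)\given X_1]\|_2^2\bigr].$$

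The final step is to invoke Lemma~\ref{lem:omega}, which provides the weight-norm lower bound $\E_P\bigl[\|\E_P[\omega_1(x)\given X_1]\|_2^2\bigr] \gtrsim s_I^{-1}(\log(n_{\mathrm{c}}\vee I))^{-d}$. Chaining the displayed bounds produces
$$\sigma_{P,I}^2(x;\rho^*) \gtrsim \frac{s_I^2}{I}\cdot\frac{1}{s_I(\log(n_{\mathrm{c}}\vee I))^d} = \frac{s_I}{I(\log(n_{\mathrm{c}}\vee I))^d},$$
as required.

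The main obstacle is the weight-norm lower bound in the last step. In the uncorrelated setting $\omega_1(x)$ reduces to an empirical leaf-average weight, making the bound elementary; here, however, $\omega_1(x) = \mathrm{e}_1^\top\hat{W}\chi(\chi^\top\hat{W}\chi)^{-1}\mathrm{e}_{J(x)}$ inverts a working-correlation-weighted Gram matrix that couples all evaluation clusters. The diagonal-dominance and eigenvalue controls of Assumption~\ref{weights}, together with Lemma~\ref{lem:diag-dom}, will be essential to transfer the standard argument --- that the leaf containing $x$ has $\sim k_I$ observations and, by~\ref{bounded-density}, spans a region whose covariate mass is of order $k_I/s_N$ up to the polylogarithmic factors of Lemma~\ref{lem:diam} --- to the clustered, weighted setting without degrading the rate beyond the stated $(\log(n_{\mathrm{c}}\vee I))^d$ factor.
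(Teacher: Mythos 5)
Your proposal is correct and follows essentially the same route as the paper: reduce $\sigma_{P,I}^2(x;\rho^*)$ via the H\'ajek projection identity to $\tfrac{s_I^2}{I}\Var_P(\E_P[T_{\rho^*}(x)\given Z_1])$, lower-bound that variance by the conditional-on-$X_1$ component $\E_P[\omega_1(x)\given X_1]^\top(Y_1-\mu(X_1))$ using honesty and Assumption~\ref{bounded-sigma}, and then invoke Lemma~\ref{lem:omega}. The paper phrases the middle step as a bound on $\Var_P[\accentset{\circ}{T}_{\rho^*}(x)]$ and leans on the computation already carried out in Part~1 of the proof of Theorem~\ref{thm:normality}, but the argument is the same; your concluding remark correctly identifies that all the real work sits in Lemma~\ref{lem:omega}.
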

\begin{proof}
    First, note that as
    \begin{equation*}
    \sigma_{P,I}^2(x;\rho^*) = \Var_P\big[\accentset{\circ}{\hat{\mu}}_{\rho^*}(x)\big]
    =
    \frac{s }{I} \Var_P\big[\accentset{\circ}{T}_{\rho^*}(x)\big],
\end{equation*}
it suffices to show that
\begin{equation*}
    \Var_P\big[\accentset{\circ}{T}_{\rho^*}(x)\big]
    \gtrsim
    (\log I\vee n_{\mathrm{c}})^{-d}
    .
\end{equation*}
To prove this claim, first note that by Lemma~\ref{lem:omega},
\begin{equation*}
    \Var_P\big[\E_P[T_{\rho^*}(x)\given Z_i]\big]
    \gtrsim
    \E_P\Big[\big\|\E_P[\omega_i(x)\given X_i]\big\|_2^2\Big]
    \gtrsim 
    s_I^{-1}k(\log(n_{\mathrm{c}}\vee I))^{-d}.
\end{equation*}
Then it follows that
\begin{equation*}
    \Var_P\big[T_{\rho^*}(x;Z_{i_1},\ldots,Z_{i_s})\big] 
    \gtrsim 
    \sum_{k=1}^{s_I} \Var_P\big[\E_P[T_{\rho^*}(x)\given Z_{i_k}]\big] \gtrsim k(\log(n_{\mathrm{c}}\vee I))^{-d},
\end{equation*}
completing the proof.
\end{proof}

\begin{lemma}\label{lem:VarT(x)}
    Each clustered decision tree $T(x)$ satisfies
    \begin{equation*}
        \Var_P \, T(x) \sim \frac{1}{k}.
    \end{equation*}
\end{lemma}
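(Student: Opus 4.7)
The plan is to decompose $\Var_P T(x)$ via the law of total variance and reduce each piece to a matrix eigenvalue calculation. Conditioning on $\mathcal{F}:=(X_{\mathrm{eval}},Z_{\mathrm{split}},Z_{\mathrm{corr}})$, honesty~\ref{honesty} makes $\omega(x)=W\chi(\chi^\top W\chi)^{-1}\mathrm{e}_{J(x)}$ $\mathcal{F}$-measurable and independent of $Y_{\mathrm{eval}}\given X_{\mathrm{eval}}$, so writing $\Sigma_X:=\Cov_P(Y_{\mathrm{eval}}\given X_{\mathrm{eval}})$,
\[
\Var_P T(x) = \E_P\bigl[\omega(x)^\top \Sigma_X\omega(x)\bigr]+\Var_P\bigl(\omega(x)^\top\mu(X_{\mathrm{eval}})\bigr).
\]
I would show the first summand is of exact order $k^{-1}$ and the second is $O(k^{-1})$, with the lower bound coming for free from positivity of the second summand.

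For the first (dominant) summand, set $A:=\chi^\top W\chi$ and $B:=\chi^\top W\Sigma_X W\chi=\chi^\top W^{1/2}(W^{1/2}\Sigma_X W^{1/2})W^{1/2}\chi$. The Loewner sandwich
\[
\Lambda_{\min}(W^{1/2}\Sigma_X W^{1/2})\,A\;\preceq\;B\;\preceq\;\Lambda_{\max}(W^{1/2}\Sigma_X W^{1/2})\,A
\]
reduces $\omega^\top\Sigma_X\omega=\mathrm{e}_{J(x)}^\top A^{-1}BA^{-1}\mathrm{e}_{J(x)}$ to the product of a bounded prefactor and $\mathrm{e}_{J(x)}^\top A^{-1}\mathrm{e}_{J(x)}$. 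Assumption~\ref{bounded-sigma} gives $\Lambda(\Sigma_X)\in[\lambda^{-1},\lambda]$, and Assumption~\ref{weights} together with the Gershgorin circle theorem gives $\Lambda(W)\in[c_W,C_W]$ on the evaluation blocks, so both $\Lambda_{\min}(W^{1/2}\Sigma_X W^{1/2})$ and $\Lambda_{\max}(W^{1/2}\Sigma_X W^{1/2})$ sit in a bounded positive interval. The diagonal entry $\mathrm{e}_{J(x)}^\top A^{-1}\mathrm{e}_{J(x)}$ is sandwiched in $[1/\Lambda_{\max}(A),1/\Lambda_{\min}(A)]$, and the standard inequality $v^\top\chi^\top W\chi v=(\chi v)^\top W(\chi v)$ combined with $\Lambda_{\min}(\chi^\top\chi)\geq k$ and $\Lambda_{\max}(\chi^\top\chi)\leq 2k-1$ (from the regularity~\ref{regularity}) yields $\Lambda_{\min}(A)\geq c_W k$ and $\Lambda_{\max}(A)\leq C_W(2k-1)$. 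Hence $\omega^\top\Sigma_X\omega$ is almost surely of order $k^{-1}$, both above and below, so $\E_P[\omega^\top\Sigma_X\omega]\sim k^{-1}$.

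For the second summand I would bound $\Var_P(\omega^\top\mu(X_{\mathrm{eval}}))\leq \E_P[(\omega^\top\mu(X_{\mathrm{eval}})-\mu(x))^2]$ and split along the event $\Omega_I$ of Lemma~\ref{lem:diam}. On $\Omega_I$, inequality~\eqref{eq:tree-bias} from the proof of Lemma~\ref{lem:unbiasedness} gives $|\omega^\top\mu(X_{\mathrm{eval}})-\mu(x)|\lesssim (s_N/k)^{-\phi/d}$, while off $\Omega_I$ (probability $\leq s_N^{-3}$) Lemma~\ref{lem:sum-abs} and boundedness of $\mu$ give $|\omega^\top\mu(X_{\mathrm{eval}})-\mu(x)|\leq(1+C_1)\|\mu\|_\infty$. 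Thus $\Var_P(\omega^\top\mu(X_{\mathrm{eval}}))=O((s_N/k)^{-2\phi/d}+s_N^{-3})$, which is $o(k^{-1})$ in the asymptotic regime (polynomial $s_N$, at-most-polylogarithmic $k$) that governs all invocations of this lemma in Sections~\ref{sec:theory} and~\ref{appsec:theory}. Adding both summands then yields $\Var_P T(x)\sim 1/k$, and the lower bound in fact follows from the first summand alone since the second is nonnegative.

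The main obstacle is the two-sided eigenvalue control of $A=\chi^\top W\chi$. The upper bound $\Lambda_{\max}(A)\leq C_W(2k-1)$---which gives the crucial lower bound $\mathrm{e}_{J(x)}^\top A^{-1}\mathrm{e}_{J(x)}\gtrsim 1/k$---is the delicate step, as it needs both the maximum leaf-size cap from~\ref{regularity} and a uniform upper bound on $\Lambda_{\max}(W)$ from~\ref{weights}; without either, one loses the matching lower bound on $\Var_P T(x)$ and the quantity $\mathcal{V}_Q(\rho^*)$ in Theorem~\ref{thm:minimax} would fail to have a bona fide non-degenerate limit.
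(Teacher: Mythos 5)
Your proposal is correct and follows essentially the same route as the paper: a law-of-total-variance decomposition, with the conditional-variance term pinned to order $k^{-1}$ from both sides via eigenvalue bounds $\Lambda(W)\in[c_W,C_W]$, $\Lambda(\Sigma_X)\in[\lambda^{-1},\lambda]$ and $\Lambda(\chi^\top\chi)\in[k,2k-1]$, and the variance-of-the-conditional-mean term absorbed as a squared-bias remainder via Lemma~\ref{lem:diam} and Lemma~\ref{lem:sum-abs}. Your explicit splitting along $\Omega_I$ and the stated regime condition making the bias term $o(k^{-1})$ are in fact slightly more careful than the paper's own treatment of that remainder, but the argument is the same.
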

\begin{proof}
    We begin by showing that $\Var_P\,T(x)\gtrsim 1$. 
    Recall the notation introduced in the proof of Lemma~\ref{lem:unbiasedness}, and define the block diagonal matrix $\Sigma := \Var_P(\cY\given X)$.  
    Then
    \begin{align*}
        \Var_P\,T(x) &\geq \E_P\big[\Var_P(T(x) \given X)\big]
        \\
        &=\E_P\big[\Var_P\big(\mathrm{e}_{J(x)}^\top (\chi^\top \hat{W}\chi)^{-1}(\chi^\top \hat{W}\cY)\big)\big]
        \\
        &= \E_P\big[ \mathrm{e}_{J(x)}^\top (\chi^\top \hat{W}\chi)^{-1} (\chi^\top \hat{W}\Sigma \hat{W}\chi) (\chi^\top \hat{W}\chi)^{-1} \mathrm{e}_{J(x)} \big]
        \\
        &\geq \E_P\bigg[\frac{\Lambda_{\min}(\hat{W}^{\frac{1}{2}}\Sigma \hat{W}^{\frac{1}{2}})}{\Lambda_{\max}(\chi^\top \hat{W}\chi)}\|\mathrm{e}_{J(x)}\|_2^2\bigg]
        \\
        &\geq \E_P\bigg[\frac{\Lambda_{\min}(\hat{W})\Lambda_{\min}(\Sigma)}{\Lambda_{\max}(\hat{W})\sigma_{\max}^2(\chi^\top )}\bigg]
        \\
        &\geq \E_P\bigg[\frac{\Lambda_{\min}(\hat{W})\Lambda_{\min}(\Sigma)}{\Lambda_{\max}(\hat{W})(2k-1)}\bigg]
        \\
        &\geq \frac{c_W}{C_W\lambda k} \sim \frac{1}{k}.
    \end{align*}

    We now show that $\Var_P\,T(x)\lesssim 1/k$.

    \begin{align*}
    \Var_P[T(x)]
    &=
    \Var_P\big[\mathrm{e}_{J(x)}^\top (\chi^\top W\chi)^{-1}(\chi^\top W\mathcal{Y})\big]
    \\
    &=
    \Var_P\big[\E_P[\mathrm{e}_{J(x)}^\top (\chi^\top W\chi)^{-1}(\chi^\top W\mathcal{Y})\given X_{\text{eval}},Z_{\text{split}}]\big]
    \\
    &\qquad+
    \E_P\big[\Var_P[\mathrm{e}_{J(x)}^\top (\chi^\top W\chi)^{-1}(\chi^\top W\mathcal{Y})\given X_{\text{eval}},Z_{\text{split}}]\big].
\end{align*}
Then
\begin{align*}
    &\quad\E_P\big[\Var_P[\mathrm{e}_{J(x)}^\top (\chi^\top \hat{W}\chi)^{-1}(\chi^\top \hat{W}\mathcal{Y})\given X_{\text{eval}},Z_{\text{split}}]\big]
    \\
    &=
    \E_P\big[\mathrm{e}_{J(x)}^\top (\chi^\top \hat{W}\chi)^{-1}(\chi^\top \hat{W}\Sigma \hat{W}\chi)(\chi^\top \hat{W}\chi)^{-1}\mathrm{e}_{J(x)}\big]
    \\
    &\leq
    \frac{\Lambda_{\max}(\hat{W}^{\frac{1}{2}}\Sigma \hat{W}^{\frac{1}{2}})}{k \Lambda_{\min}(\hat{W})} 
    \leq
    \frac{C_W\lambda}{c_W k}
    \sim \frac{1}{k},
\end{align*}
and
\begin{align*}
    &\quad\Var_P\big[\E_P[\mathrm{e}_{J(x)}^\top (\chi^\top \hat{W}\chi)^{-1}(\chi^\top \hat{W}\mathcal{Y})\given X_{\text{eval}},Z_{\text{split}}]\big]
    \\
    &=\Var_P\big[\E_P\big[\mathrm{e}_{J(x)}^\top (\chi^\top \hat{W}\chi)^{-1}(\chi^\top \hat{W}\mu(X_{\mathrm{eval}}))\given Z_{\mathrm{split}}]\big]
    \\
    &=\Var_P\big[\E_P\big[\mathrm{e}_{J(x)}^\top (\chi^\top \hat{W}\chi)^{-1}(\chi^\top \hat{W}\{\mu(X_{\mathrm{eval}})-\E_P[Y^*|X^*\in L(x),Z_{\mathrm{split}}]\})\given Z_{\mathrm{split}}]\big]
    \\
    &\leq
    \E_P\big[\big(\E_P\big[\mathrm{e}_{J(x)}^\top (\chi^\top \hat{W}\chi)^{-1}(\chi^\top \hat{W}\{\mu(X_{\mathrm{eval}})-\E_P[Y^*|X^*\in L(x),Z_{\mathrm{split}}]\})\given Z_{\mathrm{split}}]\big)^2\big]
    \\
    &\leq L_\mu d^{1/2} C_1 \Big(\frac{n_{\mathrm{c}}I}{2k-1}\Big)^{-(1-\epsilon)\frac{\log((1-\alpha)^{-1})}{\log(\alpha^{-1})}\cdot\frac{\pi}{d}}  \lesssim \frac{1}{k},
\end{align*}
for sufficiently large $I$, with the final inequality following as in Lemma~\ref{lem:unbiasedness}. Thus
\begin{equation*}
    \Var_P\,T(x) \sim \frac{1}{k}.
\end{equation*}
\end{proof}

\begin{lemma}[\citet{wager}, Lemma 4]\label{lem:Pij}
    Suppose we observe $s\in\mathbb{N}$ i.i.d.~datapoints indexed as $X_{ij}\in\cX:=[0,1]^d$, 
    with density $f$ satisfying $\nu^{-1}\leq f(x)\leq \nu$ for all $x\in\cX$ and for some finite $\nu\geq1$. Define $P_{ij}(x) := \ind(\{X_{ij} \text{ is a $k$-PNN of $x$}\})$. Then there exists a constant $C_{\nu,d}$ depending only on $\nu$ and $d$ such that
    \begin{equation*}
        \PP_P\big(\E_P[P_{ij}(x)\given Y_{ij},X_{ij}]\geq s^{-2}\big) \lesssim C_{\nu,d} \cdot \frac{k (\log s)^d}{s}.
    \end{equation*}
    Further, if $X_{ij} \iid \mathrm{Unif}\big([0,1]^d\big)$ then the above holds with $C_{\nu,d} = \frac{2^{d+1}}{(d-1)!}$.
\end{lemma}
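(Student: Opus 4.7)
\textbf{Proof proposal for Lemma~\ref{lem:Pij}.}  The plan is to reduce the $k$-PNN event to a statement about the smallest axis-aligned box containing both $x$ and $X_{ij}$, then split into a Chernoff tail and a small-volume tail. First, note that by the monotonicity of counts in rectangles, $X_{ij}$ is a $k$-PNN of $x$ if and only if the minimal axis-aligned rectangle $R_0=R_0(x,X_{ij}):=\prod_{l=1}^d [x_l\wedge X_{ij,l},\, x_l\vee X_{ij,l}]$ contains at most $k-1$ of the remaining sample points, since any rectangle containing both $x$ and $X_{ij}$ must contain $R_0$ and hence has a sample count at least as large as that of $R_0$.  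Writing $V:=V(x,X_{ij}):=\mathrm{vol}(R_0)$ and $p:=\int_{R_0}f$, the density bound gives $p\leq \nu V$, and conditional on $(Y_{ij},X_{ij})$ (which is independent of the other $s-1$ sample points by i.i.d.), the sample count inside $R_0$ is $\mathrm{Binomial}(s-1,p)$.  Thus
\begin{equation*}
\E_P[P_{ij}(x)\given Y_{ij},X_{ij}] = \PP\bigl(\mathrm{Binomial}(s-1,p)\leq k-1\bigr).
\end{equation*}

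Next I would use a Chernoff tail bound to control the regime of large $V$.  When $(s-1)p\geq 2k$ a standard lower-tail Chernoff inequality yields $\PP(\mathrm{Bin}(s-1,p)\leq k-1)\leq \exp(-(s-1)p/8)$, so for this tail to be at least $s^{-2}$ we need $(s-1)p\leq 16\log s$, equivalently $V\leq C_1(\nu)\, k\log s/s$ for a constant $C_1$ depending only on $\nu$ (taking the maximum of the Chernoff threshold and the $\nu V\geq p$ conversion).  Hence
\begin{equation*}
    \PP_P\bigl(\E_P[P_{ij}(x)\given Y_{ij},X_{ij}]\geq s^{-2}\bigr)
    \leq \PP_P\bigl(V(x,X_{ij})\leq C_1(\nu)\,k\log s/s\bigr).
\end{equation*}

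The main step is then the small-volume tail: show that under the density bound on $X_{ij}$,
\begin{equation*}
    \PP_P\bigl(V(x,X_{ij})\leq v\bigr)\lesssim_{\nu,d}\, v\,\bigl(1+\log(1/v)\bigr)^{d-1},
\end{equation*}
which reduces to bounding the Lebesgue measure of $\{z\in[0,1]^d:\prod_{l=1}^d |x_l-z_l|\leq v\}$ by $2^d v(1+\log(1/v))^{d-1}/(d-1)!$ (the uniform case follows by an inductive integration over coordinates using $\int_0^1\ind(u_1\cdots u_j\leq v)\,du_1\lesssim v\log^{j-1}(1/v)/(j-1)!$, and the general bounded density case follows by multiplying by $\nu$).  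Applying this with $v=C_1(\nu)k\log s/s$ gives the claimed rate $C_{\nu,d}\cdot k(\log s)^d/s$ after noting that $\log(s/(k\log s))\lesssim \log s$.

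The hard part will be the small-volume measure bound in dimension $d$; the Chernoff step is routine and the reduction to $R_0$ is immediate once one notices the rectangle-monotonicity characterization.  In the uniform case the inductive formula gives the sharp constant $2^{d+1}/(d-1)!$ claimed in the statement, while for the general bounded-density case one loses an additional factor of $\nu$ which is absorbed into $C_{\nu,d}$.
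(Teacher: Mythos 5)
The paper does not prove this lemma at all: it is imported verbatim from \citet{wager} (their Lemma~4), so there is no in-paper argument to compare against. Your reconstruction follows essentially the same route as the original proof there: the equivalence of the $k$-PNN event with the minimal spanning rectangle $R_0(x,X_{ij})$ containing at most $k-1$ other points, the conditional $\mathrm{Binomial}(s-1,p)$ representation with a lower-tail Chernoff bound forcing $p\lesssim (k+\log s)/s$ on the event $\{\E_P[P_{ij}(x)\given Y_{ij},X_{ij}]\geq s^{-2}\}$, and the Lebesgue-measure bound on $\{z:\prod_l|x_l-z_l|\leq v\}$ of order $2^d v(1+\log(1/v))^{d-1}/(d-1)!$ via coordinatewise integration over the $2^d$ orthants around $x$. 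One small correction: to pass from ``$p$ small'' to ``$V$ small'' you need the \emph{lower} density bound, i.e.\ $p=\int_{R_0}f\geq \nu^{-1}V$ and hence $V\leq\nu p$ (you wrote $p\leq\nu V$, which points the wrong way for that step); the \emph{upper} bound $f\leq\nu$ is what you then use to convert the Lebesgue measure of the small-volume set into a probability. Both bounds are assumed, so the argument is fine once the two roles are kept straight.
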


\begin{lemma}\label{lem:lowerbound-on-EA2}
    Suppose $A$ is a non-negative random variable. Then for any $c>0$
    \begin{equation*}
        \E\big[A^2\big] \geq \frac{\frac{1}{2}\E\big[A\big]^2 - \E\big[A\ind_{(A<c)}\big]^2}{\PP(A\geq c)}.
    \end{equation*}
\end{lemma}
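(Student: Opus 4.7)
The plan is to reduce the bound to an application of Cauchy--Schwarz together with an elementary algebraic inequality. First, I would split the expectation via $\E[A]=\E[A\ind_{(A<c)}]+\E[A\ind_{(A\geq c)}]$, so that
\begin{equation*}
    \E[A\ind_{(A\geq c)}] = \E[A]-\E[A\ind_{(A<c)}].
\end{equation*}
Next, applying Cauchy--Schwarz (or equivalently Jensen, conditional on $\{A\geq c\}$) gives
\begin{equation*}
    \E[A\ind_{(A\geq c)}]^2 \leq \E[A^2\ind_{(A\geq c)}]\,\PP(A\geq c) \leq \E[A^2]\,\PP(A\geq c),
\end{equation*}
and rearranging yields $\E[A^2]\geq \PP(A\geq c)^{-1}\bigl(\E[A]-\E[A\ind_{(A<c)}]\bigr)^2$.

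It remains to pass from $\bigl(\E[A]-\E[A\ind_{(A<c)}]\bigr)^2$ to $\frac{1}{2}\E[A]^2-\E[A\ind_{(A<c)}]^2$. Writing $a:=\E[A]$ and $b:=\E[A\ind_{(A<c)}]$, this reduces to the elementary inequality $(a-b)^2\geq \tfrac{1}{2}a^2-b^2$, which rearranges to $\tfrac{1}{2}(a-2b)^2\geq 0$ and is therefore immediate. Combining with the Cauchy--Schwarz step finishes the proof.

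There is no real obstacle here: the content is an algebraic sharpening of the trivial bound $\E[A^2]\geq \E[A\ind_{(A\geq c)}]^2/\PP(A\geq c)$ by discarding the cross term in a controlled fashion. The only thing to be careful about is that both $a$ and $b$ are nonnegative (which holds since $A\geq 0$), so that the manipulation $(a-b)^2\geq \tfrac12 a^2-b^2$ does not need any sign caveats, and the bound is trivially satisfied (becomes $\E[A^2]\geq 0$) in degenerate cases such as $\PP(A\geq c)=0$ where one interprets the right-hand side as zero.
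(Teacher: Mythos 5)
Your proof is correct and follows essentially the same route as the paper: both bound $\E[A^2]\geq \E[A^2\ind_{(A\geq c)}]$, apply Cauchy--Schwarz/conditional Jensen to get $\E[A\ind_{(A\geq c)}]^2/\PP(A\geq c)$ below it, and then use the split $\E[A\ind_{(A\geq c)}]=\E[A]-\E[A\ind_{(A<c)}]$ together with the elementary inequality $(a-b)^2\geq\tfrac12 a^2-b^2$. No meaningful differences.
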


\begin{proof}
Note that
    \begin{multline*}
        \E\big[A^2\big]
        \geq
        \E\big[A^2\ind_{(A\geq c)}\big]
        =
        \PP(A\geq c)\E\big[A^2\given A\geq c\big]
        \\
        \geq 
        \PP(A\geq c)\E[A\given A>c]^2
        = 
        \PP(A\geq c)\bigg(\frac{\E[A\ind_{(A\geq c)}]}{\PP(A\geq c)}\bigg)^2
        =
        \frac{\E[A\ind_{(A\geq c)}]^2}{\PP(A\geq c)}
        \\
        =
        \frac{\big(\E[A] - \E[A\ind_{(A< c)}]\big)^2}{\PP(A\geq c)}
        \geq
        \frac{\frac{1}{2}\E[A]^2 - \E[A\ind_{(A<c)}]^2}{\PP(A\geq c)}
        ,
    \end{multline*}
    where the second inequality follows by Jensen's inequality, and the final inequality follows as $(a-b)^2\geq \frac{1}{2}a^2 - b^2$ for any $a,b\in\R$.
\end{proof}

\begin{lemma}\label{lem:omega}
    There exists some universal constant $C>0$ such that
    \begin{equation*}
        \E_P\big[\|\E_P[\omega_i(x)\given X_i]\|_2^2\big]
        \geq 
        C \frac{k}{s_I(\log(n_{\mathrm{c}}\vee I))^d}, 
    \end{equation*}
    and thus
    \begin{equation*}
        \sum_{i=1}^I\E_P\big[\|\E_P[\omega_i(x)\given X_i]\|_2^2\big]
        \geq 
        C I^{1-\beta}k(\log(n_{\mathrm{c}}\vee I))^{-d}.
    \end{equation*}
\end{lemma}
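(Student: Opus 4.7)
The plan is to combine a pointwise lower bound on $\omega_{ij}(x)$ derived from diagonal dominance of $\chi^\top W(\rho^*)\chi$ with the $k$-potential-nearest-neighbour tail bound of Lemma~\ref{lem:Pij} and the squared-mean inequality of Lemma~\ref{lem:lowerbound-on-EA2}. First, I would show that, uniformly in $\rho^*\in\Gamma$, there exists a constant $c_0>0$ depending only on $c_W, C_W$ from Assumption~\ref{weights} such that $\omega_{ij}(x)\geq c_0\,k^{-1}\ind\{X_{ij}\in L(x)\}$. This uses the block-diagonal structure of $W(\rho^*)$ together with an Ahlberg--Nilson--Varah-style estimate: the $(J(x),J(x))$-entry of $(\chi^\top W(\rho^*)\chi)^{-1}$ is of order $(kc_W)^{-1}$, its off-diagonal entries of strictly smaller order, so the strictly positive diagonal $(W_i(\rho^*))_{jj}\geq c_W$ survives the matrix inversion up to universal constants (mirroring the computation in the proof of Lemma~\ref{lem:sum-abs}).

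Next, introducing $P_{ij}(x):=\ind\{X_{ij}\text{ is a }k\text{-PNN of }x\text{ in the subsample}\}$, I would combine $\ind\{X_{ij}\in L(x)\}\leq P_{ij}(x)$ with Lemma~\ref{lem:Pij} to obtain $\PP(\E[P_{ij}(x)\given X_{ij}]\geq s_N^{-2})\lesssim k(\log s_N)^d/s_N$, where $s_N:=n_{\mathrm{c}}s_I$. Setting $A_j:=\E[\omega_{ij}(x)\given X_i]$, exchangeability over clusters and within-cluster indices, combined with the normalisation $\sum_{(i,j)}\omega_{ij}(x)=1$ established in the proof of Lemma~\ref{lem:sum-abs}, forces $\E[A_j]=1/s_N$. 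By Step~1 together with the tower property, the event $\{A_j\geq c\}$ for $c\asymp \E[A_j]/2$ is contained, up to universal constants, in $\{\E[P_{ij}\given X_{ij}]\geq s_N^{-2}\}$, whose probability is then controlled by Lemma~\ref{lem:Pij}.

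Applying Lemma~\ref{lem:lowerbound-on-EA2} with this choice of $c$ yields
\[
\E[A_j^2]\gtrsim \frac{\E[A_j]^2}{\PP(A_j\geq c)}\gtrsim \frac{s_N^{-2}}{k(\log s_N)^d/s_N}\asymp\frac{1}{k\,n_{\mathrm{c}}\,s_I\,(\log s_N)^d}.
\]
Summing over $j\in[n_i]$, using $n_i\equiv n_{\mathrm{c}}$ and $\log s_N=O(\log(n_{\mathrm{c}}\vee I))$ from Assumption~\ref{ass:data}, delivers the per-cluster bound at the required order (the polynomial factor of $k$ being absorbed in the poly-logarithmic regime $k=O(\log I)$ under which the lemma is applied downstream in Theorem~\ref{thm:normality}). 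Summing over $i\in[I]$ and using $I/s_I\asymp I^{1-\beta}$ then gives the second display.

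The main obstacle lies in Step~1: transferring the standard random-forest bound $\omega_{ij}(x)\gtrsim k^{-1}\ind\{X_{ij}\in L(x)\}$ into the weighted regime requires showing that within-cluster off-diagonal contributions of $W_i(\rho^*)$---which are negative, for instance, in the $\mathrm{AR}(1)$ structure---do not cancel the main diagonal term. Strict diagonal dominance $\zeta(W_i(\rho^*))\geq c_W$ from Assumption~\ref{weights}, combined with careful Ahlberg--Nilson--Varah bookkeeping, delivers this uniform bound in a manner parallel to the proof of Lemma~\ref{lem:sum-abs}; it is crucial here that the bookkeeping be uniform over $\rho^*\in\Gamma$ so that the constant $c_0$ does not degrade with the clustering weight parameter.
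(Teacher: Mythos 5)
Your overall skeleton — exchangeability giving $\E_P[\omega_{ij}(x)]=s_N^{-1}$, the $k$-PNN tail bound of Lemma~\ref{lem:Pij}, and the anti-concentration inequality of Lemma~\ref{lem:lowerbound-on-EA2} — matches the paper's, but the bridging step you rely on is a genuine gap, and it is precisely the step the paper is engineered to avoid. You propose the pointwise bound $\omega_{ij}(x)\geq c_0k^{-1}\ind\{X_{ij}\in L(x)\}$ and then use it to argue that $\{\E_P[\omega_{ij}(x)\given X_i]\geq c\}$ is contained in $\{\E_P[P_{ij}(x)\given X_{ij}]\geq s_N^{-2}\}$. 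Two problems. First, a \emph{lower} bound on $\omega_{ij}(x)$ cannot yield containment of the event $\{A_j\geq c\}$ in anything: to control $\PP_P(A_j\geq c)$ from above via Lemma~\ref{lem:Pij} you need an \emph{upper} bound of the form $\omega_{ij}(x)\lesssim$ (something vanishing when $X_{ij}$ is not a $k$-PNN of $x$). In the clustered setting this upper bound fails: $\omega_{ij}(x)=e_{ij}^\top W\chi(\chi^\top W\chi)^{-1}\mathrm{e}_{J(x)}$ involves the dense inverse $(\chi^\top W\chi)^{-1}$, so observations far from $x$ but correlated with in-leaf observations carry nonzero weight — this is the whole point of the method. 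Second, the proposed lower bound itself is doubtful: the paper's own proof of Lemma~\ref{lem:sum-abs} explicitly accommodates negative $\omega_{ij}(x)$, and the off-leaf columns of $e_{ij}^\top W\chi$ paired with the off-diagonal entries of $(\chi^\top W\chi)^{-1}$ are not controlled pointwise by the Ahlberg--Nilson--Varah estimate (which only bounds row sums of the inverse). Relatedly, Lemma~\ref{lem:lowerbound-on-EA2} requires a non-negative random variable, and $\E_P[\omega_{ij}(x)\given X_i]$ is not obviously non-negative.

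The paper sidesteps all of this with an exact algebraic identity: $\ind_{L(x)}(X_{ij})=\omega(x)^\top\chi\chi^\top\mathrm{e}_{ij}=\sum_{(i',j'):X_{i'j'}\in L(X_{ij})}\omega_{i'j'}(x)$, so that conditioning on $X_{ij}$ gives $\PP_P(X_{ij}\in L(x)\given X_{ij})=\E_P[\omega_{ij}(x)\given X_{ij}]+(\text{cross term})$, where the cross term is bounded in absolute value by $(2k-1)C_1/(s_N-1)$ using only the aggregate bound $\sum|\omega_{i'j'}(x)|\leq C_1$ from Lemma~\ref{lem:sum-abs} and the fact that at most $2k-1$ points share a leaf. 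Lemma~\ref{lem:lowerbound-on-EA2} is then applied to the manifestly non-negative variable $\PP_P(X_{ij}\in L(x)\given X_{ij})$, whose tail is controlled by Lemma~\ref{lem:Pij} via $\ind_{L(x)}(X_{ij})\leq P_{ij}(x)$, and the resulting lower bound is transferred back to $\E_P[\E_P[\omega_{ij}(x)\given X_{ij}]^2]$ up to an $O(k^2s_N^{-2})$ error. If you want to repair your argument, replace your Step~1 with this identity; no pointwise control of individual weights is needed. (Your closing remark that a stray factor of $k$ can be "absorbed" in the regime $k=O(\log I)$ is also not a substitute for deriving the stated power of $k$, though I note the paper's own final display is not internally consistent on this point.)
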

\begin{proof}
    Define
    \[P_{ij}(x):=\ind(\{X_{ij}\text{ is a $k$ potential nearest neighbour of $x$}\}).\]
    Note therefore that $X_{ij}\in L(x)$ implies that $P_{ij}(x)=1$, and therefore
    \[\PP_P(X_{ij}\in L(x)\given X_{ij}) \leq \PP_P(P_{ij}(x)=1\given X_{ij}),\]
    almost surely. Also
\begin{equation*}
    \ind_{L(x)}(X_{ij})=\mathrm{e}_{J(x)}^\top \mathrm{e}_{J(X_{ij})}=\mathrm{e}_{J(x)}^\top (\chi^\top W\chi)^{-1}(\chi^\top W)\chi \mathrm{e}_{J({X_{ij}})}=\omega(x)^\top \chi \mathrm{e}_{J(X_{ij})}=\omega(x)^\top \chi\chi^\top\mathrm{e}_{ij}.
\end{equation*}
In case (i) of Assumption~\ref{bounded-density}, it follows that
\begin{align*}
    &\quad\PP_P(X_{ij}\in L(x)\given X_{ij})^2
    \\
    &=
    \E_P[\omega(x)^\top \chi\chi^\top e_{ij}\given X_{ij}]^2
    \\
    &=
    \E_P\bigg[\sum_{i'=1}^{I}\sum_{j'=1}^{n_{i'}}\ind_{L(X_{ij})}(X_{i'j'})\omega_{i'j'}(x) \Biggiven X_{ij}\bigg]^2
    \\
    &=
    \bigg(
        \E_P[\omega_{ij}(x)\given X_{ij}] 
        +
        \E_P\bigg[\sum_{(i',j')\neq (i,j)}\ind_{L(X_{ij})}(X_{i'j'})\omega_{i'j'}(x)\Biggiven X_{ij}\bigg]
    \bigg)^2
    \\
    &\leq
        2\,\E_P[\omega_{ij}(x)\given X_{ij}]^2 
        \\
        &\qquad
        +
        2\,\bigg(\sum_{(i',j')\neq (i,j)}\E_P[\omega_{i'j'}(x)\given X_{ij}, X_{i'j'}\in L(X_{ij})]\PP_P(X_{i'j'}\in L(X_{ij})\given X_{ij})\bigg)^2,
\end{align*}
and thus
\begin{multline}\label{eq:EP}
    \E_P\big[\PP_P(X_{ij}\in L(x)\given X_{ij})^2\big]
    \leq
    2 \E_P\big[\E_P[\omega_{ij}(x)\given X_{ij}]^2\big]
    \\
    +
    2 \E_P\bigg[\Big(\sum_{(i',j')\neq(i,j)}\E_P[\omega_{i'j'}(x)\given X_{ij},X_{i'j'}\in L(X_{ij})]\PP_P(X_{i'j'}\in L(X_{ij})\given X_{ij})\Big)^2\bigg].
\end{multline}
By exchangeability and independence of $X_{ij}\independent X_{i'j'}$, for $(i'j')\neq(i,j)$ we have that $$\PP_P(X_{i'j'}\in L(X_{ij})\given X_{ij}) = \E_P\bigg[\frac{1}{s_N-1}\sum_{(i',j')\neq(i,j)}\ind_{L(X_{ij})}(X_{i'j'})\bigggiven X_{ij}\bigg] \leq \frac{2k-1}{s_N-1},$$ almost surely, and thus
\begin{align*}
    &\quad\;
    \Big(\sum_{(i',j')\neq(i,j)}\E_P[\omega_{i'j'}(x)\given X_{ij},X_{i'j'}\in L(X_{ij})]\PP_P(X_{i'j'}\in L(X_{ij})\given X_{ij})\Big)^2
    \\
    &\leq
    \Big(\sum_{(i',j')\neq(i,j)}\E_P[|\omega_{i'j'}(x)|\given X_{ij},X_{i'j'}\in L(X_{ij})]\PP_P(X_{i'j'}\in L(X_{ij})\given X_{ij})\Big)^2
    \\
    &\leq 
    \frac{(2k-1)^2}{(s_N-1)^2}\bigg(\sum_{(i',j')\neq(i,j)}\E_P\big[|\omega_{i'j'}(x)|\given X_{ij},X_{i',j'}\in L(X_{ij})\big]\bigg)^2
    \\
    &=
    \frac{(2k-1)^2}{(s_N-1)^2}\bigg(\E_P\bigg[\underbrace{\sum_{(i',j')\neq(i,j)}|\omega_{i'j'}(x)|}_{\leq C_1 \text{ (Lemma~\ref{lem:sum-abs})}}\bigggiven X_{ij},X_{i',j'}\in L(X_{ij})\bigg]\bigg)^2
    \\
    &\leq
    \frac{(2k-1)^2 C_1^2}{(s_N-1)^2}.
\end{align*}
Along with~\eqref{eq:EP} we obtain
\begin{multline*}
    \E_P\big[\E_P[\omega_{ij}(x)\given X_{ij}]^2\big]
    \geq
    \frac{1}{2}\E_P\big[\PP_P(X_{ij}\in L(x)\given X_{ij})^2\big]
    -
    \frac{(2k-1)^2C_1^2}{(s_N-1)^2}
    \\
    \gtrsim
    \E_P\big[\PP_P(X_{ij}\in L(x)\given X_{ij})^2\big] + s_N^{-2}k^2,
\end{multline*}
and therefore by Lemma~\ref{lem:lowerbound-on-EA2} (with $c=s_N^{-2}$) 
and Lemma~\ref{lem:Pij},
\begin{align*}
    &\hspace{-0.4cm}\sum_{i=1}^I \E_P\big[\|\E_P[\omega_i(x)\given X_i]\|_2^2\big]
    =
    \sum_{i=1}^I\sum_{j=1}^{n_i} \E_P\big[\E_P[\omega_{ij}(x)\given X_i]^2\big]
    \geq
    \sum_{i=1}^I\sum_{j=1}^{n_i} \Var_P\big[\E_P[\omega_{ij}(x)\given X_i]\big]
    \\
    &\geq
    \sum_{i=1}^I\sum_{j=1}^{n_i} \Var_P\big[\E_P[\omega_{ij}(x)\given X_{ij}]\big]
    =
    \sum_{i=1}^I\sum_{j=1}^{n_i}\Big( \E_P\big[\E_P[\omega_{ij}(x)\given X_{ij}]^2\big] - \underbrace{\E_P[\omega_{ij}(x)]^2}_{\sim s_N^{-2}}\Big)
    \\
    &\geq
    \frac{1}{2} \sum_{i=1}^I \sum_{j=1}^{n_i} \Big(\E_P\big[\PP_P(X_{ij}\in L(x)\given X_{ij})^2\big] + \Omega(s_N^{-2})\Big)
    =
    \frac{1}{2} \sum_{i=1}^I \sum_{j=1}^{n_i}\frac{\PP_P(X_{ij}\in L(x))^2 + \Omega(s_N^{-4})}{\PP_P\big(\PP_P(X_{ij}\in L(x)\given X_{ij})>s_N^{-2}\big)}
    \\
    &\sim
    \frac{1}{2} \sum_{i=1}^I \sum_{j=1}^{n_i}\frac{s_N^{-2}}{\PP_P\big(\PP_P(X_{ij}\in L(x)\given X_{ij})>s_N^{-2}\big)} 
    \geq
    \frac{1}{2} \sum_{i=1}^I \sum_{j=1}^{n_i}\frac{s_N^{-2}}{\PP_P\big(\PP_P(P_{ij}(x)=1\given X_{ij})>s_N^{-2}\big)}
    \\
    &\gtrsim
    \sum_{i=1}^I \sum_{j=1}^{n_i}\frac{s_N^{-2}}{s_N^{-1}k(\log s_N)^d} 
    \geq
    \tfrac{1}{2} N s_N^{-1}k^{-1}(\log s_N)^{-d}
    \sim I^{1-\beta}k^{-1}(\log(n_{\mathrm{c}}\vee I))^{-d}
    .
\end{align*}

In case~(ii) of Assumption~\ref{bounded-density}, because $\ind_{L(X_{ij})}(X_{ij'})=1$ for all $j,j'\in[n_i]$, and following similar arguments to case (i),
\begin{align*}
    \PP_P(X_{ij}\in L(x)\given X_{ij})^2
    &= \bigg(
    n_{\mathrm{c}}\E_P[\omega_{ij}(x)\given X_{ij}]+ n_{\mathrm{c}}\E_P\bigg[\sum_{i'\neq i}\ind_{L(X_{i1})}(X_{i'1})\omega_{i'1}(x)\bigggiven X_{i1}\bigg]
    \bigg)^2
    \\
    &\leq
    2n_{\mathrm{c}}^2 \E_P[\omega_{ij}(x)\given X_{ij}]^2 + \frac{4n_{\mathrm{c}}^2(k/n_{\mathrm{c}})^2C_1^2}{(s_I-n_{\mathrm{c}})^2}
    \\
    &\lesssim
    2n_{\mathrm{c}}^2 \E_P[\omega_{ij}(x)\given X_{ij}]^2 + \frac{4k^2C_1^2}{s_I^2},
\end{align*}
and so
\[
\sum_{j=1}^{n_{\mathrm{c}}}\E_P\big[\E_P[\omega_{ij}(x)\given X_{i}]^2\big]
\gtrsim
\frac{1}{2n_{\mathrm{c}}}\E_P\big[\PP_P(X_{i1}\in L(x)\given X_{i1})^2\big] - \frac{2k^2C_1^2}{s_N^2},
\]
and so for finite $n_{\mathrm{c}}$,
\begin{equation*}
    \sum_{i=1}^I\E_P\big[\|\E_P[\omega_i(x)\given X_i]\|_2^2\big]
    \gtrsim n_{\mathrm{c}}^{-1}k^{-1}Is_I^{-1}(\log s_I)^{-d} \sim n_{\mathrm{c}}^{-1}k^{-1}I^{1-\beta}(\log I)^{-d},
\end{equation*}
and the result follows in case (ii) by the same arguments.

\end{proof}

\begin{lemma}\label{lem:unif-rho}
    Suppose $(A_{P,I}(x):P\in\cP,x\in\cX,I\in\mathbb{N})$ and $(B_{P,I}(x):P\in\cP,x\in\cX,I\in\mathbb{N})$ are sequences of non-negative random variables satisfying $\sup_{x\in\cX}|\Var_P(A_{P,I}(x)\given B_{P,I}(x))-\omega(x)|=o_{\cP}(1)$ for some bounded $\omega(\cdot)$, and $\sup_{x\in\cX}\Var_P(A_{P,I}(x)\given B_{P,I}(x))\leq C$ almost surely for some constant $C$. Then $\sup_{P\in\cP}\sup_{x\in\cX}|\E_P[\Var_P(A_{P,I}(x)\given B_{P,I}(x))]-\omega(x)|=o(1)$. Moreover, if also $\sup_{P\in\cP}\sup_{x\in\cX}\Var_P(\E_P[A_{P,I}(x)\given B_{P,I}(x)])=o(1)$ then
    $$\sup_{P\in\cP}\sup_{x\in\cX}|\Var_P(A_{P,I}(x))-\omega(x)|=o(1).$$
\end{lemma}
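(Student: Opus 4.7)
The plan is to handle the two claims separately, with the first being a uniform dominated convergence argument and the second an immediate application of the law of total variance.

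For the first claim, denote $V_{P,I}(x) := \Var_P(A_{P,I}(x)\given B_{P,I}(x))$ and let $M:=\sup_{x\in\cX}|\omega(x)|<\infty$. I would first pass from a pointwise bound to a supremum bound via the observation that
\begin{equation*}
\sup_{x\in\cX}\bigl|\E_P[V_{P,I}(x)]-\omega(x)\bigr| \leq \E_P\bigl[\sup_{x\in\cX}|V_{P,I}(x)-\omega(x)|\bigr],
\end{equation*}
which holds by Jensen's inequality applied pointwise and then taking suprema. Writing $R_{P,I}:=\sup_{x\in\cX}|V_{P,I}(x)-\omega(x)|$, the assumptions give $R_{P,I}\leq C+M$ a.s. and $R_{P,I}=o_{\cP}(1)$. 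Splitting the expectation of $R_{P,I}$ at an arbitrary threshold $\epsilon>0$,
\begin{equation*}
\sup_{P\in\cP}\E_P[R_{P,I}] \leq \epsilon + (C+M)\sup_{P\in\cP}\PP_P(R_{P,I}>\epsilon).
\end{equation*}
The second term vanishes as $I\to\infty$ by the $o_{\cP}(1)$ hypothesis, and then letting $\epsilon\downarrow 0$ yields the first claim.

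For the second claim, applying the law of total variance pointwise gives
\begin{equation*}
\Var_P\bigl(A_{P,I}(x)\bigr)
= \E_P\bigl[\Var_P(A_{P,I}(x)\given B_{P,I}(x))\bigr]
+ \Var_P\bigl(\E_P[A_{P,I}(x)\given B_{P,I}(x)]\bigr).
\end{equation*}
The first term on the right equals $\omega(x)+o(1)$ uniformly in $(P,x)$ by the first claim, while the second term is $o(1)$ uniformly in $(P,x)$ by the additional hypothesis. A single application of the triangle inequality then yields
\begin{equation*}
\sup_{P\in\cP}\sup_{x\in\cX}\bigl|\Var_P(A_{P,I}(x))-\omega(x)\bigr|=o(1).
\end{equation*}

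There is no serious obstacle: the only subtlety is ensuring that the $\sup_{x\in\cX}$ inside the probability statement is retained when we pass to expectations, which is precisely what permits interchanging the $\sup_{x\in\cX}$ and $\E_P$ via the pointwise domination above. The almost sure uniform bound $\sup_{x\in\cX}V_{P,I}(x)\leq C$ is what licenses the dominated convergence style splitting, and the uniformity over $\cP$ in the $o_{\cP}(1)$ hypothesis transfers directly to uniformity over $\cP$ in the conclusion.
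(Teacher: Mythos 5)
Your proof is correct and follows essentially the same route as the paper's: a dominated-convergence-style splitting of the expectation at a threshold $\epsilon$, using the almost-sure bound to control the tail and the uniform-in-$P$ convergence in probability to kill it, followed by the law of total variance for the second claim. The only cosmetic differences are that the paper bounds the tail term via Cauchy--Schwarz while you use the $L^\infty$ bound directly, and the paper argues pointwise in $x$ before taking suprema while you work with $\sup_{x}$ from the start; both are equally valid.
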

\begin{proof}
    It suffices to show that for any sequence of random variables $(X_{P,I}(x):P\in\cP,x\in\cX,I\in\mathbb{N})$ if $\sup_{x\in\cX}|X_{P,I}(x)|=o_\cP(1)$ and $\sup_{x\in\cX}|X_{P,I}(x)|\leq C'$ almost surely for some constant $C'$, then $\sup_{P\in\cP}\sup_{x\in\cX}\E_P[|X_{P,I}|]=o(1)$. For arbitrary $P\in\cP$, $x\in\cX$, $I\in\mathbb{N}$, $\epsilon>0$,
    \begin{align*}
        \E_P[|X_{P,I}(x)|]
        &=
        \E_P\bigl[|X_{P,I}(x)|\ind_{(|X_{P,I}(x)|>\epsilon)}\bigr]+\epsilon
        \\
        &\leq
        \E_P\bigl[X_{P,I}^2(x)\bigr]^{1/2}\PP_P(|X_{P,I}(x)|>\epsilon)^{1/2} + \epsilon
        \\
        &\leq
        C'\,\PP_P(|X_{P,I}(x)|>\epsilon)^{1/2} + \epsilon.
    \end{align*}
    Taking suprema over $P\in\cP$, followed by limits as $I\to\infty$ and then $\epsilon\downarrow0$ gives the result.
\end{proof}

\section{Further details on numerical experiments of Section~\ref{sec:numericals}}

\subsection{Further results relating to Section~\ref{sec:sim2}}\label{appsec:sim2}

We give additional results for the simulation of Section~\ref{sec:sim2}. Theorem~\ref{thm:normality} shows asymptotic normality of the testing integrated mean function $\int\hat{\mu}^{\mathrm{MC}}_I(x)dQ(x)$ for a covariate distribution $Q$. In the simulation of Section~\ref{sec:sim2} we consider mean squared prediction error under the covariate shifted distribution $Q=\text{Unif}\hspace{0.08em}[1,2]$. In this section we also show results for estimation of the integrated mean function quantity $\int\mu(x)dQ(x)$. Figure~\ref{fig:sim2-qqplots} presents QQ plots for the estimator $\int\hat{\mu}^{\mathrm{MC}}_I(x)dQ(x)$ for each of the RF and CRF.

\begin{table}[ht]
	\begin{center}
		\begin{tabular}{c|cc}
		\toprule
            Method & $\big(\int\hat{\mu}(x)dQ(x) - \int\mu(x)dQ(x)\big)^2$ ($\times10^{-5}$) & $\int (\hat{\mu}(x)-\mu(x))^2dQ(x)$ ($\times10^{-4}$)
            \\
		\midrule
            RF & 3.36 & 3.16
            \\
            CRF & {\bf 0.94}
            & {\bf 1.10}
            \\
            TRAIN & 3.36 & 7.07
            \\
            REEM & 1.72 & 7.06
            \\
		\bottomrule
		\end{tabular}
    \caption{Further results of Simulation~\ref{sec:sim2} (1000 simulations). Here $Q=\text{Unif}\hspace{0.08em}[1,2]$.}\label{tab:sim2-extra}
	\end{center}
\end{table}

\subsection{Further results relating to Section~\ref{sec:sim3}}\label{appsec:sim3}

Here we provide further results for the simulation of Section~\ref{sec:sim3}. Figure~\ref{fig:sim2-qqplots} presents Gaussian QQ plots for the predictions $\hat{\mu}({\bf1})$ studied in Section~\ref{sec:sim3}. In particular, we see the asymptotic Gaussianity of predictions persists for covariate dimensions up to $d=50$ for both standard (RF) and clustered (CRF) random forests. For all covariate dimensions considered we see evidence of normality of $\hat{\mu}({\bf1})$. Figure~\ref{fig:sim2-cov-plots} shows nominal $(1-\tilde{\alpha})$-level confidence interval coverage for $\tilde{\alpha}\in(0,1)$, and indeed for all $\tilde{\alpha}\in(0,1)$ we see valid coverage of both RFs and CRFs.

\subsection{Finite sample approximation to Monte-Carlo random forests}\label{appsec:sim2-B}

Our theoretical results of Section~\ref{sec:theory} hold for Monte-Carlo random forests~\eqref{eq:MC}, which may be approximated by the random forests of Algorithm~\ref{alg:crf} (or the analogous Algorithm~\ref{alg:dcrf}) for a sufficiently large number of trees as given by the number of trees $B$ in the forest. Figure~\ref{fig:B} shows the MSPE results of Section~\ref{sec:sim2}; in particular a moderate choice of $B=500$ as we present in Section~\ref{sec:sim2} is sufficiently large to approximate the corresponding Monte-Carlo limit.

\begin{figure}[hb]
    \centering
    \includegraphics[width=0.95\linewidth]{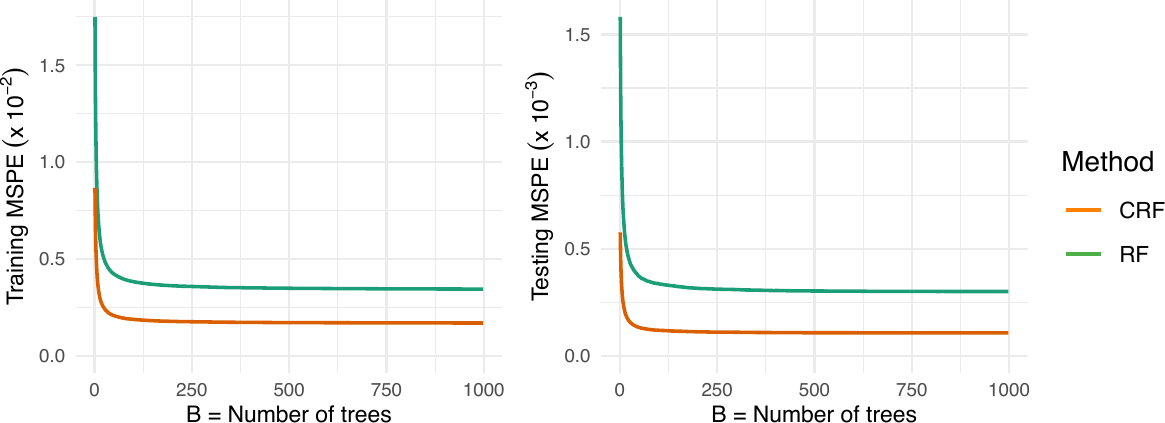}
    \caption{MSPE of standard and clustered random forests (Algorithm~\ref{alg:crf}) for a varying number of trees $B\in[1,1000]$ making up the random forests.}
    \label{fig:B}
\end{figure}

\begin{figure}[p]
    \centering
    \includegraphics[width=1.00\linewidth]{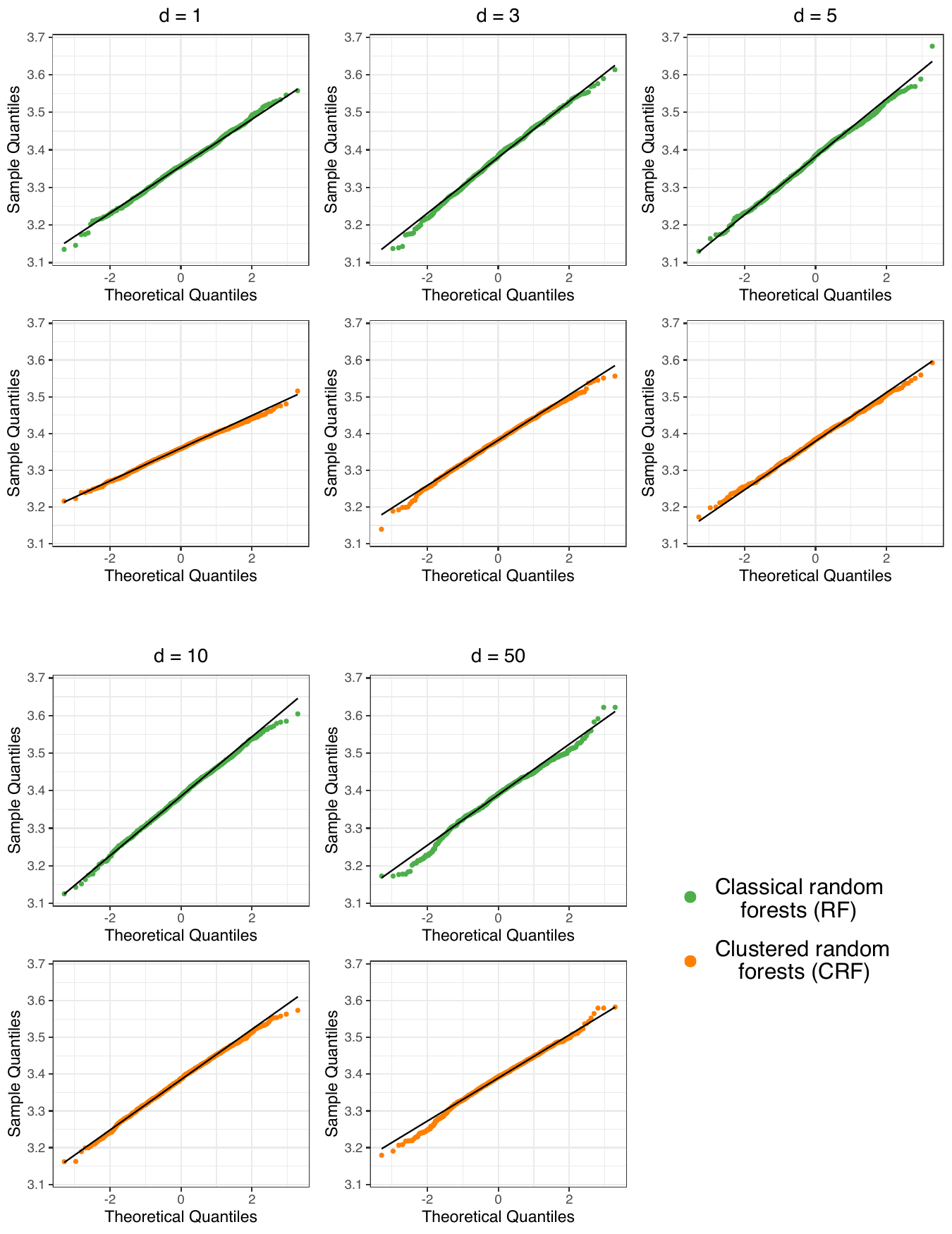}
    \caption{Gaussian QQ plots for predictions $\hat{\mu}({\bf1})$ in the simulations of Section~\ref{sec:sim3}, for each covariate dimension $d\in\{1,3,5,10,50\}$.}
    \label{fig:sim2-qqplots}
\end{figure}

\begin{figure}[p]
    \centering
    \includegraphics[width=\linewidth]{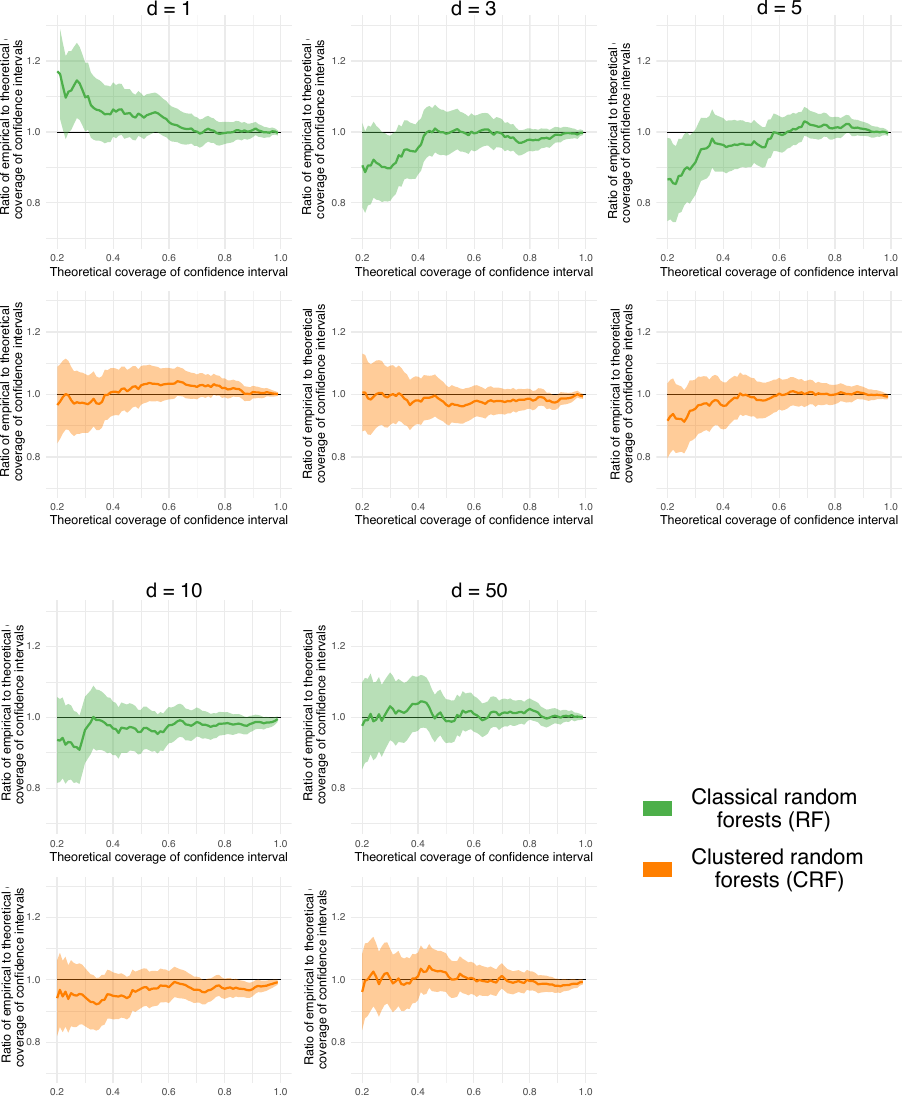}
    \caption{Coverage of nominal confidence intervals over levels $\tilde{\alpha}\in[0.2,1)$ for the simulations of Section~\ref{sec:sim3}, for each covariate dimension $d\in\{1,3,5,10,50\}$. Plots show the ratio of the empirical to theoretical coverage of the RF and CRF confidence intervals for $\mu({\bf1})$, with points above the black horizontal line indicating overcoverage, and below undercoverage.}
    \label{fig:sim2-cov-plots}
\end{figure}

\section{Proof of Proposition~\ref{prop:linear-time}}

The algorithm proposed for Proposition~\ref{prop:linear-time} simply performs conjugate gradient descent to evaluate the quantity $(\chi^\top W(\rho)\chi)^{-1}(\chi^\top W(\rho)Y)$ in case~\ref{item:comp-1} and in case~\ref{item:comp-2} calculates $(\chi^\top W(\rho)\chi)^{-1}\mathrm{e}_m$ for all $m\in[M]$ satisfying $L_m\cap\, \text{supp}\,Q\neq\emptyset$. The conjugate gradient descent method is known to be fast when calculating $A^{-1}b$ where $A$ is a sparse matrix. In our case $\chi^\top W(\rho)\chi$ is in general a dense matrix, although for certain weight classes we can decompose $\chi^\top W(\rho)\chi$ in such a way the conjugate gradient descent iterations can be perform in a fast manner. 

\begin{proof}[Proof of Proposition~\ref{prop:linear-time}]

    The number of conjugate gradient descent iterations required for an $\epsilon>0$ accurate solution in the Euclidean norm is $O\Bigl(\sqrt{\kappa(\chi^\top W(\rho)\chi)}\log(\sqrt{\kappa(\chi^\top W(\rho)\chi)}/\epsilon)\Bigr)$, where $\kappa$ denotes the condition number. Because $\kappa\bigl(\chi^\top W(\rho)\chi\bigr) 
    \leq
    \frac{C_W(2k-1)}{c_Wk}\leq 2c_W^{-1}C_W$ the number of conjugate gradient iterations required is $O\bigl(\log(1/\epsilon)\bigr)$. 

    \medskip\noindent{\bf Equicorrelated working correlation weights: }

For the equicorrelated weight structure
\begin{equation*}
    W_i(\rho)=\frac{1}{1-\rho}\biggl(\mathrm{I}_{n_i}-\frac{\rho}{1-\rho}\boldsymbol{1}_{n_i}\boldsymbol{1}_{n_i}^\top\biggr),
\end{equation*}
where
\begin{equation*}
    \chi^\top W(\rho)\chi 
    = 
    \sum_{i=1}^I\chi_i^\top W_i(\rho)\chi_i
    =
    \frac{1}{1-\rho}\biggl(\chi^\top\chi-\frac{\rho}{1-\rho}\sum_{i=1}^I\chi_i^\top\boldsymbol{1}_{n_i}\boldsymbol{1}_{n_i}^\top \chi_i\biggr),
\end{equation*}
Define $v\in\R^M$ via $v_m:=|\{(i,j):i\in[I],j\in[n_i], X_{ij}\in L_m\}|$ for $m\in[M]$, and for each $i\in[I]$ define $u_i\in\R^M$ via $(u_i)_m:=|\{j\in[n_i]:X_{ij}\in L_m\}|$ for $m\in[M]$. Note therefore $\chi^\top\chi=\diag(v)$ and $k_I\leq v_m\leq 2k_I-1$ for each $m\in[M]$, and also that $\|u_i\|_0 \leq n_i\leq 2k_I-1$ is sparse. 
Then for any $b\in\R^M$ the number of iterations it takes to calculate
    \begin{align*}
        (\chi^\top W(\rho)\chi)b
        &=
        \frac{1}{1-\rho}\biggl(\chi^\top\chi-\frac{\rho}{1-\rho}\sum_{i\in\cI_{\mathrm{eval}}}u_iu_i^\top\biggr)b
        \\
        &=
        \frac{1}{1-\rho}\biggl(\diag(v_mb_m:m\in[M])-\frac{\rho}{1-\rho}\sum_{i\in\cI_{\mathrm{eval}}}(u_i^\top b)u_i\biggr),
    \end{align*}
    is~$O(n_{\mathrm{c}}s_I)$ because calculating $\diag(v_mb_m:m\in[M])$ takes $O(M)$ operations, calculating $u_i^\top b$ takes $O(n_i)$ operations as $\|u_i\|_0\leq n_i$, and so calculating $\sum_{i\in\cI_{\mathrm{eval}}}(u_i^\top b)u_i$ takes $O(n_{\mathrm{c}}s_I)$ operations. Together therefore calculating $(\chi^\top W(\rho)\chi)b$ takes $O(n_{\mathrm{c}}s_I)$ operations. Thus calculating $(\chi^\top W(\rho)\chi)^{-1}z$ for some vector $z\in\R^M$ up to order $\epsilon>0$ accuracy in Euclidean norm is
    $O(n_{\mathrm{c}}s_I\log(\epsilon^{-1}))$

    For fitted values, calculating 
    \begin{align*}
    \chi^\top W(\rho)Y
    &=
    \frac{1}{1-\rho}\biggl(\chi^\top Y - \frac{\rho}{1-\rho}\sum_{i\in\cI_{\mathrm{eval}}}u_i(u_i^\top Y)
    \biggr)
    \\
    &=
    \frac{1}{1-\rho}\biggl[
    \biggl(\sum_{(i,j):X_{ij}\in L_m}Y_{ij}:m\in[M]\biggr)
    -
    \frac{\rho}{1-\rho}\sum_{i=1}^I\biggl\{\sum_{(i',j'):u_{i'j'}\neq0}u_{i'j'}Y_{i'j'}\biggr\}u_i
    \biggr],
    \end{align*}
    which is an order $O(n_{\mathrm{c}}s_I+\sum_{i\in\cI_{\mathrm{eval}}}\|u_i\|_0)=O(n_{\mathrm{c}}s_I)$ time calculation. 
    Therefore calculating $(\chi^\top W(\rho)\chi)^{-1}(\chi^\top W(\rho)Y)$ via conjugate gradient descent, to $\epsilon$ accuracy is $O(n_{\mathrm{c}}s_I\log(\epsilon^{-1}))$. Case (ii) follows by analogous arguments.

    \medskip\noindent{\bf AR$\boldsymbol{(1)}$ working correlation weights: }

    As each row of $\chi_i$ has precisely one non-zero value, calculating $\chi_ib$ requires $n_i$ operations, with $\|\chi_ib\|_0=n_i$. Because the inverse of an $\text{AR}(1)$ covariance matrix is tri-diagonal,
    each non-zero entry of $\chi_ib$ contributes to $W_i(\rho)\chi_ib$ via at most 3 multiplications with entries of $W_i(\rho)$, thus $W_i(\rho)\chi_ib$ can be computed in at most $3n_i$ operations, and moreover $\|W_i(\rho)\chi_ib\|_0\leq 3n_i$. 
    As precisely one entry of each row of $\chi_i$ is non-zero, $\chi_i^\top W_i(\rho)\chi_ib$ can be calculated in $O(n_i)$ operations, and so $\sum_{i=1}^I\chi_i^\top W_i(\rho)\chi_ib$ can be calculated in $O(n_{\mathrm{c}}s_I)$ operations. The results then follow by identical arguments to those used in the equicorrelated case. 
    
\end{proof}

\end{document}